\newtheorem{coro}{Corollary}[section]
 \newtheorem{remark}{Remark} 
\newtheorem{theorem}{Theorem}[section]
\newtheorem{definition}{Definition}[section]
\newtheorem{lemma}{Lemma}[section]
\begin{document}

\title{{\bf A Collapse Theorem for Holographic Algorithms with Matchgates on
Domain Size At Most 4}}

\vspace{0.3in}
\author{Jin-Yi Cai\thanks{University of Wisconsin-Madison and Peking University.
 {\tt jyc@cs.wisc.edu}. Supported by NSF CCF-0914969 and NSF CCF-1217549.}
\and Zhiguo Fu\thanks{,Department of Computer Science and Engineering, Shanghai Jiao Tong University. {\tt fuzg@jlu.edu.cn}}}

\date{}
\maketitle
\vspace{0.3in}
\begin{abstract}
Holographic algorithms with matchgates are a novel approach
to design polynomial time computation.
It uses Kasteleyn's algorithm
for perfect matchings, and more importantly  a holographic reduction.
The two fundamental parameters
of a holographic reduction are the domain size $k$
of the underlying
problem, and the basis size $\ell$. A holographic reduction
transforms the computation to matchgates
by a linear transformation  that maps
to (a tensor product space of) a linear space of dimension $2^\ell$.
We prove a sharp basis collapse theorem, that shows that
for domain size 3 and 4,
all non-trivial holographic reductions
have basis size $\ell$ collapse to 1 and 2 respectively.
The main proof techniques are Matchgate Identities,
and a Group Property of matchgate signatures.
\newline

\end{abstract}

\thispagestyle{empty}
\newpage
\setcounter{page}{1}
\section{Introduction}
Matchgates were first introduced by Leslie Valiant~\cite{string22}
to show that a non-trivial, though
restricted, fragment of
quantum computation can be simulated in classical polynomial time.
Subsequently he introduced holographic algorithms with matchgates
\cite{string23} as a methodology to design
 polynomial time algorithms for some problems which seem to require exponential time. Computation in these algorithms is expressed and interpreted through a choice of linear basis vectors in an exponential ``holographic" mix. Then the actual computation is carried out, via the Holant Theorem, by Kasteleyn's algorithm (a.k.a. the  FKT
algorithm) \cite{string17,string18,string20} for counting the number of perfect matchings in a planar graph. This methodology has produced polynomial time algorithms for a number of problems, and minor variations of which are known to be NP-hard.
The results are often surprising and counter-intuitive.
For example, it is shown~\cite{string24} that a
restrictive SAT problem  $\#_7$Pl-Rtw-Mon-3CNF
(counting the number of satisfying
assignments of a planar read-twice monotone 3CNF formula, modulo 7)
is solvable in polynomial time.
The same counting problem without mod 7 is known to be
\#P-complete, thus fully general despite its apparent syntactic
restriction; the problem mod 2 is $\oplus $P-complete, and thus
NP-hard under randomized reductions. And yet, the problem mod 7 is
tractable. Such ``anomaly'' challenges
our conception of what polynomial time computation can do,
and where the frontier between P and \#P lies, assuming
they are truly different.

These holographic algorithms are quite exotic,
and use a quantum-like superposition of fragments of
computation to achieve a pattern of interference and cancellations.
Since we lack any good absolute lower
bounds that apply to unrestricted computational models,
we should ask ourselves why do we believe
those conjectures
such as P $\not =$ NP or P $\not =$ ${\rm P}^{\rm \#P}$
that are at the foundation of our discipline.
We posit that the only defensible argument is
the observed inability of existing algorithmic techniques to solve
NP-hard or \#P-hard problems in polynomial time.
Now these new holographic algorithms are quite unlike
the existing algorithmic techniques, and thus pose a new challenge.
To maintain the credibility of these widely believed conjectures,
and the self-respect of the discipline,
we must gain a better understanding of
 what the new methodology
can or cannot do.
To quote Valiant~\cite{string23},
``The most intriguing question, clearly,
is whether polynomial time holographic algorithms exist
for NP- or \#P-complete problems.
$\ldots$ [T]he existence of
such a reduction would be implied by the solvability of a
finite system of polynomial equations $\ldots$
[A]ny proof of P $\not =$ NP
may need to
explain, and not only to imply, the unsolvability of our
polynomial systems.''

Substantial progress has been made. For example,
the appearance of the modulus 7 for 3CNF, which was considered peculiar,
has been ``explained'' by
the fact that $7 = 2^3 -1$ is a Mersenne prime~\cite{string7}.
%
Thus, e.g., $\#_{31}$Pl-Rtw-Mon-5CNF is in P by
the same holographic algorithm.
Such understanding is achieved only after a systematic
study of the {\it structural theory} of holographic algorithms.
This is the theory of what holographic reductions are possible,
and what they can do
with matchgates.

In the design of a holographic algorithm, a crucial step is a choice of
linear basis vectors, through which the computation is expressed and interpreted.
Because the underlying basic computation is ultimately reduced to perfect matchings, the linear basis vectors are of dimension $2^{\ell}$, where $\ell$ is called the size of the basis. For a general CSP-type counting problem, one can assume there is a natural parameter $k$, called its domain size. This is the range over which variables take values. For example, Boolean CSP problems all have domain size 2. A $k$-coloring problem on graphs has domain size $k$. In holographic algorithms of domain size $k$, the linear basis has $k$ vectors of dimension $2^{\ell}$, which can be expressed as a $2^{\ell}\times k$ matrix.

Utilizing bases of an arbitrarily  large but fixed size $\ell$  is
 a theoretical possibility which may allow for an unlimited
variety of holographic algorithms.
For example, the algorithm for $\#_7$Pl-Rtw-Mon-3CNF in \cite{string24}
originally used a basis of size $\ell=2$, expressed as a $4 \times 2$  matrix.
However, over the Boolean domain ($k=2$),
Cai and Lu~\cite{string6}
proved  a surprising collapse theorem that
{\it any} non-trivial holographic algorithm on a basis of size $\ell\geq 2$
can be simulated
on a basis of size 1. (In particular, for $\#_7$Pl-Rtw-Mon-3CNF, there is a
linear basis with $2$ vectors of dimension $2^{1}$,
expressed as a  $2 \times 2$  matrix.)
  This is the fundamental rationale
to develop the theory for Boolean domain holographic algorithms
over the group {\bf GL}$_2(\mathbb{C})$~\cite{string7},
which is the foundation for all the systematic results that
have been achieved.

While this drastic collapse from an arbitrary $2^\ell$ to $2$ is surprising,
perhaps there is a plausible philosophical justification.
One might reason that for the Boolean domain $k=2$, ``information theoretically''
one should need only 2 dimensions to encode data.
(This is by no means a proof! It is technically false, as most
philosophical arguments are, since provably there are holographic
reductions in
{\bf GL}$_2(\mathbb{C})$ that cannot be done in {\bf GL}$_2(\mathbb{R})$.)
Nevertheless, following this logic,
an audacious but plausible conjecture is that
for a general domain size $k$, there is a collapse
to the smallest $\ell$ such that $2^\ell \ge k$.

In this paper we prove a basis
collapse theorem for holographic algorithms on domain size 3 and 4.
For domain size 3, we show that all non-trivial holographic algorithms
with matchgates using a basis of size $\ell$ can be simulated by
a basis of size $1$.
For domain size 4, we show that it collapses to  size 2.
Thus, for domain size 4, the proper transformation theory should be
developed over the group {\bf GL}$_4(\mathbb{C})$.
Note that there is a further surprise that the collapse
for domain size 3  is not to dimension 4, but to dimension 2.
This turns out to be a consequence of some very delicate
properties of matchgates, philosophical arguments not withstanding.
In \cite{string25}, Valiant
 gave holographic algorithms for several
 interesting problems on domain size 3.  Holographic algorithms for
domain size 4 or above are largely unexplored. The results
of this paper are the first steps toward this goal.
It shows that for domain size 4 we should develop
the  theory on {\bf GL}$_4(\mathbb{C})$,
rather than on an infinite set of dimensions.

Our main proof techniques are Matchgate Identities,
and a Group Property of matchgate signatures.
A matchgate is a planar graph associated
with a function called its (standard) signature, which
represents its perfect matching properties.
Matchgate Identities are a set of necessary and
sufficient conditions for a matchgate (standard)  signature.
In \cite{string6}, the proof of the collapse theorem on
domain size 2 heavily depends on intricate constructions of matchgates;
but this is difficult to generalize to domain size $k>2$.
Instead we introduce a new algebraic proof technique
that heavily depends on Matchgate Identities.
We will ``construct'' the required combinatorial objects---matchgates---by
purely algebraic means. Starting from certain
presumed holographic reductions, we will combine together
algebraic objects, which we prove that they must
correspond to matchgates.
The most difficult step is to extract out a rank 4 submatrix
of a certain signature matrix, using Matchgate Identities.
In one crucial step we also use a Group Property
that matchgate signatures satisfy, and use the algebraic
inverse to obtain the combinatorial object.
This indirect way of construction is similar
to the way G\"{o}del's completeness theorem is proved
(as simplified by Leon Henkin),
where one builds a semantic object---a model---out of given
syntactic objects, namely  a consistent set of formulae.
Our process is the reverse: We build concrete syntactic
objects (matchgates) out of presumed semantic linear transformations.

This theory fits in a  broader picture.
Over the past few years a string of complexity dichotomy
theorems have been proved~\cite{Bulatov,cailuxia10,Dyer-Richerby1,
Dyer-Goldberg-Jerrum,GGJT,GH,Cai-Chen,
Cai-Kowalczyk-complex,Cai-Kowalczyk-Williams,caiguowilliams13,Guo-Lu-Valiant,
Guo-Williams}
which support parts of the following
overall thesis:  For a wide class of counting problems
expressible as partition functions
defined by local constraints, or sum-of-product computations,
{\it every single problem} can be classified into one of three types.
The first type is called tractable problems, which can be
solved in polynomial time over arbitrary structures.
The second type consists of problems that are
 \#P-hard over general structures, but solvable in
polynomial time over planar structures.
The third type problems are those which remain \#P-hard
over planar structures.
Moreover, the second type of problems are precisely those which
are solvable by a holographic reduction to matchgates.
Thus, the new methodology of holographic algorithms
with matchgates constitutes a {\it universal}
algorithm for all such problems.
It is possible that the ultimate significance of this new
methodology introduced by Valiant~\cite{string23} lies in
its pivotal r\^{o}le in this classification program.
We note that for decades researchers in physics have studied
the so-called ``Exactly Solvable Models'' (see e.g., \cite{Baxter}).
The classification program,
especially the universality part
about the new holographic algorithms
with matchgates, if true,
would provide an answer from computer science.

However, the provable part of this conjectured
universality of holographic algorithms
with matchgates is essentially restricted to the Boolean domain.
The full scope of this thesis is beyond our ability to prove now.
A main obstacle is that the theory of holographic algorithms
with matchgates has not been adequately developed for domain size
greater than 2. This paper is  a necessary first step in that program.

 This paper is organized as follows. In Section 2, we briefly give
the background and some notations. In Section 3, we
introduce degenerate and full rank signatures.
In Section 4, we give a new algebraic
proof for the collapse theorem on domain size 2, whereby introducing
the new technique in a simpler setting.
In Section 5, we give the collapse theorems on domain size 3 and 4.
An illustrative problem solved by a holographic algorithm
using matchgates on domain size 4 is
given in the appendix.
\section{Background and Notations}
\subsection{Background}
In this section,  we review some definitions and results.  More details can be found in \cite{string23,string7}.

A matchgate $\Gamma$ is a triple $(G, X, Y)$ where $G$ is a planar embedding of a planar graph $(V, E, W)$ where $X \subseteq V$ is a set of input nodes and $Y \subseteq V$ is a set of output nodes, and where $X, Y$ are disjoint. Further, as one proceeds counterclockwise around the outer face starting from one point one encounters first the input nodes labeled $1, 2, \cdots, |X|$ and then the output nodes $|Y|, \cdots, 2, 1$ in that order. The arity of the matchgate is $|X|+|Y|$. For $Z\subseteq X\bigcup Y$ we define the standard signature of $\Gamma$ with respect to $Z$ to be PerfMatch$(G-Z)$, where $G-Z$ is the graph obtained by removing from $G$ the node set $Z$ and all edges that are incident to $Z$,
 and PerfMatch$(G-Z)$ is the sum, over all perfect matchings $M$ of $G-Z$, of the product of the weights of matching edges in $M$.
Note that when all edges have weight 1, then PerfMatch$(G)$ counts
the number of perfect matchings.
We define the standard signature of $\Gamma$ to be the following
 $2^{|Y|}\times 2^{|X|}$ matrix
$\underline{\Gamma}$
 row-indexed by output nodes and column-indexed by input nodes (note that in $\cite{string23}$, $\underline{\Gamma}$ is a $2^{|X|}\times 2^{|Y|}$ matrix
 row-indexed by input nodes and column-indexed by output nodes).
The entries of $\underline{\Gamma}$ are standard signatures  of $\Gamma$ with respect to $Z$ for the $2^{|X|+|Y|}$ choices of $Z$. The labeling of the matrix
is as follows: Suppose that $X$ and $Y$ have the labeling described, i.e., the nodes are labeled $1, 2, \cdots, |X|$ and $|Y|, \cdots, 2, 1$ in counterclockwise order. Then each choice  of $Z$ is a subset of  $X\bigcup Y$. If each node present in $Z$ is denoted by a bit 1, and each node absent by a bit 0, then we have two binary strings in $\{0, 1\}^{|X|}$
and $\{0, 1\}^{|Y|}$ respectively,
where the nodes labeled 1 (for both  $X$ and $Y$)
correspond to the leftmost binary bit. Suppose that $i, j$ are the numbers represented by these strings in binary. Then the entry corresponding to $Z$ will be the one in row $i$ and column $j$ in the signature matrix $\underline{\Gamma}$.
This label ordering will allow the planar composition of matchgates
connecting input nodes of
one with the output nodes of another nicely correspond
to matrix product.

A matchgate $\Gamma$ is an odd (resp.  even) matchgate if it has an odd
(resp.  even) number of nodes.

 Let $\Gamma$ be a matchgate. If $\Gamma$ has no input nodes, then it is called a generator matchgate. If $\Gamma$ has no output nodes, then it is called a recognizer matchgate. Otherwise $\Gamma$ is called a transducer matchgate.
Note that the standard signature $\underline{G}$ of a generator matchgate is a column vector and the standard signature $\underline{R}$ of a recognizer matchgate is a row vector.

From the definition of standard signatures, we directly have the following Lemma.
\begin{lemma}
\label{product transformer}
Let $\underline{R}$ be the standard signature of a recognizer matchgate of arity $n\ell$ and $T$ be the standard signature of a transducer matchgate of $\ell$-output and $s$-input, then $\underline{R}'=\underline{R}T^{\otimes n}$ is the standard signature of a recognizer matchgate of arity $ns$.
\end{lemma}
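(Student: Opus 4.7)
The plan is to construct the claimed recognizer matchgate combinatorially by composing $n$ disjoint copies of the transducer with the recognizer, and then verify by direct counting that the perfect-matching generating function of the composite graph is exactly $\underline{R}\,T^{\otimes n}$. Concretely, let $R$ be a recognizer matchgate with standard signature $\underline{R}$ and $n\ell$ input nodes ordered counterclockwise around the outer face, and let $T_1,\ldots,T_n$ be $n$ disjoint copies of a transducer matchgate with signature $T$, each having $\ell$ output nodes and $s$ input nodes. Lay out the $T_i$'s in the plane along the ``input side'' of $R$ in order, and glue the $\ell$ output nodes of $T_i$ to the input nodes $(i-1)\ell+1,\ldots,i\ell$ of $R$. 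Because both the outputs of each $T_i$ and the inputs of $R$ occur in the required counterclockwise order on their outer faces, this gluing is planar. The resulting planar graph $\Gamma'$ has no output nodes, and its $ns$ input nodes, read counterclockwise, are exactly the concatenation of the input nodes of $T_1,\ldots,T_n$. Hence $\Gamma'$ is a recognizer matchgate of arity $ns$.

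Next I would compute $\text{PerfMatch}(\Gamma'-Z)$ for any subset $Z$ of the input nodes of $\Gamma'$. Write $Z=Z_1\sqcup\cdots\sqcup Z_n$ according to which transducer the nodes belong to, and condition on the occupation pattern $\mathbf{w}=(w_1,\ldots,w_n)\in\{0,1\}^{n\ell}$ of the interface nodes between the $T_i$'s and $R$. Every perfect matching of $\Gamma'-Z$ decomposes uniquely as a perfect matching of $R$ with those interface nodes indicated by $\mathbf{w}$ removed, together with perfect matchings of each $T_i$ with inputs $Z_i$ removed and outputs $w_i$ removed. Summing over $\mathbf{w}$ gives
\[
\text{PerfMatch}(\Gamma'-Z)\;=\;\sum_{\mathbf{w}}\underline{R}_{\mathbf{w}}\,\prod_{i=1}^n T_{w_i,Z_i}.
\]
Finally, by the definition of tensor powers of a matrix under the prescribed binary indexing, $\prod_{i=1}^n T_{w_i,Z_i}=(T^{\otimes n})_{\mathbf{w},Z}$, so the sum equals the $Z$-entry of the row-vector times matrix product $\underline{R}\,T^{\otimes n}$. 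This exhibits $\underline{R}\,T^{\otimes n}$ as the standard signature of the recognizer matchgate $\Gamma'$ of arity $ns$.

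The main obstacle I would anticipate is purely bookkeeping rather than conceptual: one must verify that the ``leftmost bit'' convention and the reversed counterclockwise labeling $|Y|,\ldots,2,1$ of output nodes versus $1,2,\ldots,|X|$ of input nodes (as introduced in the Background) line up so that gluing the $i$-th output of $T_j$ to the $((j-1)\ell+i)$-th input of $R$ yields the tensor-power indexing and the row/column conventions chosen by the authors, and produces the product $\underline{R}\,T^{\otimes n}$ as stated rather than a variant involving a permutation or a transpose. Once the indexing is pinned down, planarity of the composition and the matching-count decomposition above are routine.
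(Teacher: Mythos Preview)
Your proposal is correct and is exactly the argument the paper has in mind; the paper in fact gives no proof at all beyond the sentence ``From the definition of standard signatures, we directly have the following Lemma,'' and what you wrote is the natural unpacking of that sentence. One small terminological point: you say ``glue the $\ell$ output nodes of $T_i$ to the input nodes \ldots of $R$,'' but your subsequent perfect-matching decomposition (where $\mathbf{w}$ simultaneously indexes nodes removed from $R$ \emph{and} outputs removed from $T_i$) is only correct if the connection is made by weight-$1$ edges rather than by identifying vertices; identifying vertices would give $T_{\overline{w_i},Z_i}$ instead of $T_{w_i,Z_i}$. The paper's convention (see the matchgrid definition and the remark that the labeling ``will allow the planar composition of matchgates connecting input nodes of one with the output nodes of another nicely correspond to matrix product'') is indeed to connect by weight-$1$ edges, so your formula is right---just replace ``glue'' by ``connect by weight-$1$ edges'' and the bookkeeping concern you flag in your last paragraph disappears.
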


On the other hand,
we can view the standard signature of an $n$-output generator matchgate as a contravariant  tensor $\textbf{G}$ with $n$ (upper) indices.
Under the standard basis $[e_{0}~e_{1}]
=
\begin{pmatrix}
1& 0\\
0& 1
\end{pmatrix}$,
it takes the form $\underline{G}$
with $2^{n}$ entries,  where
\[\underline{G}^{i_{1}i_{2}\cdots i_{n}}= {\rm PerfMatch}(G-Z),  ~~~~
i_{1}, i_{2}, \ldots,  i_{n} \in \{0,  1\}. \]
Here $Z$ is the subset of the output nodes having the characteristic
sequence $\chi_{Z}=i_{1}i_{2}\cdots i_{n}$,
in which $i_j$ is the bit for the output node labeled $j$, and
$G-Z$ is the graph obtained from $G$ by removing $Z$ and its incident edges.
Then the column vector $\underline{G}=(\underline{G}^{i_{1}i_{2}\cdots i_{n}})$ whose entries are ordered lexicographically according to $\chi_{Z}$
is the standard signature of a generator matchgate.

Similarly a recognizer matchgate with $n$ input nodes is assigned a covariant tensor $\textbf{R}$ with $n$ (lower) indices.
Under the standard basis $[e_{0}~e_{1}]$,  it takes the form $\underline{R}$ with $2^{n}$ entries,
\[
\underline{R}_{i_{1}i_{2}\cdots i_{n}}= {\rm PerfMatch}(G-Z), ~~~~i_{1},
i_{2}, \ldots,  i_{n} \in \{0,  1\},\]
where $Z$ is the subset of the input nodes having the characteristic sequence $\chi_{Z}=i_{1}i_{2}\cdots i_{n}$.
Then the row vector $\underline{R}=(\underline{R}_{i_{1}i_{2}\cdots i_{n}})$ whose entries are ordered lexicographically according to $\chi_{Z}$ is the standard signature of a recognizer matchgate.

 A $basis~M=(m_{1}  , m_{2} , \cdots , m_{k})$  contains $k$ vectors, each of them has dimension $2^{\ell}$ (size $\ell$).
 We use the following notation: $M=(a_{i}^{\alpha})$, where lower index $i\in [k]$ is for column and upper index $\alpha\in\{0, 1\}^{\ell}$ is for row.
A basis $M$ need not be linearly independent.
We say $M$ has full rank if rank$(M)=k$. In the present paper,
we assume that $M$ has full rank  (thus $2^{\ell}\geq k$)
unless otherwise specified.

Under a basis $M$,  we can talk about the signature of a matchgate
after the transformation.


\begin{definition}
The contravariant tensor $\textbf{G}$ of a generator matchgate $\Gamma$
of arity $n$ has signature $G$
(written as a column vector)
 under basis $M$ iff $M^{\otimes n}G=\underline{G}$ is the standard signature of the generator matchgate $\Gamma$.
\end{definition}


\begin{definition}
The covariant tensor $\textbf{R}$ of a recognizer matchgate $\Gamma'$
of arity $n$  has signature $R$
(written as a row vector)
 under basis $M$ iff $\underline{R}M^{\otimes n}=R$ where $\underline{R}$ is the standard signature of the recognizer matchgate $\Gamma'$.
\end{definition}


\begin{definition}
A contravariant tensor $\textbf{G}$
(resp.  a covariant tensor $\textbf{R}$)
is realizable over a basis $M$ iff there exists a generator matchgate $\Gamma$ (resp.  a recognizer matchgate $\Gamma'$) such that $G$ (resp.  $R$) is the signature of $\Gamma$ (resp.  $\Gamma'$) under basis $M$.
They are simultaneously realizable if they are realizable over a
common basis.
\end{definition}

\begin{remark}
Under a basis of size $\ell$, if a general signature has arity $n$, then the standard signature is of arity $n\ell$, where
 $n\ell$ is the number of external
nodes in the matchgate. So a standard  generator signature $\underline{G}$ (resp. a standard recognizer signature $\underline{R}$) has $2^{n\ell}$ entries.
 We use $\underline{G}^{\alpha_{1}\alpha_{2}\cdots\alpha_{n}}$, where each $\alpha_{i}\in \{0, 1\}^{\ell}$, to denote the blockwise form of the signature
entry of $\underline{G}$ of arity $n\ell$.
Similarly we use the notation
$\underline{R}_{\alpha_{1}\alpha_{2}\cdots \alpha_{n}}$
 for a recognizer signature.
\end{remark}
Then we have
\begin{center}
$\underline{G}^{\alpha_{1}\alpha_{2}\cdots \alpha_{n}}=\displaystyle
 \sum_{j_{1}, j_{2}, \ldots,  j_{n}\in [k]}G^{j_{1}j_{2}\cdots j_{n}}a_{j_{1}}^{\alpha_{1}}a_{j_{2}}^{\alpha_{2}}\cdots a_{j_{n}}^{\alpha_{n}}$,
\end{center}
where $\alpha_{i}\in\{0, 1\}^{\ell}$,  for $i=1, 2, \cdots, n$.


\begin{center}
$R_{j_{1}j_{2}\cdots j_{n}}=\displaystyle \sum_{\alpha_{1}, \alpha_{2}, \ldots, \alpha_{n}\in \{0, 1\}^{\ell}}\underline{R}_{\alpha_{1}\alpha_{2}\cdots \alpha_{n}}a_{j_{1}}^{\alpha_{1}}a_{j_{2}}^{\alpha_{2}}\cdots a_{j_{n}}^{\alpha_{n}}$,
\end{center}
where $j_{i}\in[k]$,
 for $i=1, 2, \cdots, n$.

A matchgrid $\Omega=(A, B, C)$ is a weighted planar graph consisting of a disjoint union of: a set of (not necessarily distinct) $g$ generator matchgates  $A=\{A_{1}, A_{2}, \cdots, A_{g}\}$,  a set of  (not necessarily distinct)
 $r$ recognizer matchgates $B=\{B_{1}, B_{2}, \cdots, B_{r}\}$,  and a set of $f$ connecting edges $C=\{C_{1}, C_{2}, \cdots, C_{f}\}$,  where each $C_{i}$ edge has weight 1 and joins an output node of a generator matchgate with an input node of a recognizer matchgate,  so that every input and output node in every constituent matchgate has exactly one such incident connecting edge.


Let $G(A_{i}, M)$ be the signature of generator matchgate $A_{i}$ under the basis $M$ and $R(B_{j}, M)$ be the signature of recognizer matchgate $B_{j}$ under the basis $M$.
Let $G=\bigotimes_{i=1}^{g}G(A_{i}, M)$ and $R=\bigotimes_{j=1}^{r}R(B_{j}, M)$
be their tensor product,
then $\rm{Holant}(\Omega)$ is defined to be the {\it contraction}
of these two product tensors
(the sum over all indices of the product of the
corresponding values of $G$ and $R$),
where the corresponding indices match up according to the $f$ connecting edges in $C$.


Valiant's {Holant} Theorem is

\begin{theorem}(Valiant $\cite{string23}$)
For any mathcgrid $\Omega$ over any basis $M$,  let $\Gamma$ be its underlying weighted graph,
 then
\begin{center}
${\rm{Holant}}(\Omega)={\rm{PerfMatch}}(\Gamma)$.
\end{center}
\end{theorem}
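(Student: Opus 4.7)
The plan is to establish the theorem in two steps, linked by the change-of-basis matrix $M = (a^\alpha_i)$. The first step is purely combinatorial: it expresses ${\rm PerfMatch}(\Gamma)$ as a tensor contraction of the standard signatures $\underline{G}(A_i)$ and $\underline{R}(B_j)$ over $\{0,1\}$-valued ``physical'' indices attached to the external nodes of the matchgates. The second step is purely algebraic: it uses the definitions of $G$ and $R$ under $M$ to convert this $\{0,1\}$-valued contraction into the $[k]$-valued contraction that defines ${\rm Holant}(\Omega)$.

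For the first step, I would decompose each perfect matching of $\Gamma$ according to the subset $S \subseteq C$ of connecting edges it uses. For a fixed $S$, the weight-$1$ edges in $S$ contribute trivially, and the remainder decomposes independently into a perfect matching of each matchgate on the complement of the external nodes covered by $S$. Under the standard-signature convention that a ``$1$'' in the characteristic sequence marks a removed external node, this yields
\begin{equation*}
{\rm PerfMatch}(\Gamma) \;=\; \sum_{\beta}\; \prod_{i=1}^{g} \underline{G}(A_i)^{\beta_{A_i}}\; \prod_{j=1}^{r} \underline{R}(B_j)_{\beta_{B_j}},
\end{equation*}
where $\beta$ ranges over all $\{0,1\}$-assignments to the external nodes that are consistent along each connecting edge (both endpoints receive the same bit), and $\beta_{A_i}$, $\beta_{B_j}$ denote the restrictions to the external nodes of the respective matchgates.

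For the second step, I would group the $\ell$ external nodes comprising a single logical edge of the matchgrid into a block $\alpha \in \{0,1\}^\ell$, and substitute the identity $\underline{G}^{\alpha_1\cdots \alpha_n} = \sum_{j_1,\ldots,j_n \in [k]} G^{j_1\cdots j_n}\, a^{\alpha_1}_{j_1}\cdots a^{\alpha_n}_{j_n}$ for each generator. Exchanging orders of summation and pulling the $[k]$-valued $j$-sums to the outside regroups the $a$-factors by recognizer; each recognizer then contributes $\sum_\alpha \underline{R}_{\alpha_1\cdots \alpha_n}\, a^{\alpha_1}_{j_1}\cdots a^{\alpha_n}_{j_n} = R(B_j)_{j_1\cdots j_n}$, producing exactly the $[k]$-valued tensor contraction that defines ${\rm Holant}(\Omega)$. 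The main obstacle is really the first step: one must carefully reconcile the characteristic-sequence convention with the subsets of matched connecting edges, and verify that the ordering of external nodes used to index standard signatures is consistent along each connecting edge. After that, the algebraic step is routine and planarity plays no role in the identity itself — it is needed only for the existence of the matchgates in the first place.
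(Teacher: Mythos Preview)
The paper does not supply its own proof of this statement: it is quoted as Valiant's Holant Theorem with a citation to \cite{string23} and no argument is given. So there is nothing in the paper to compare your proposal against.

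That said, your two-step outline is the standard proof and is correct. The combinatorial step---partitioning perfect matchings of $\Gamma$ by the subset of connecting edges used, observing that a connecting edge used in the matching forces bit $1$ at both endpoints (the node is removed from its matchgate) while an unused edge forces bit $0$ at both (the node must be matched internally), and that the weight-$1$ connecting edges contribute trivially---is exactly how one factors ${\rm PerfMatch}(\Gamma)$ into a sum over consistent $\{0,1\}$-labelings of products of standard signatures. The algebraic step is the routine one: expand each $\underline{G}$ via $M^{\otimes n}G$, push the $[k]$-sums outside, and collapse the remaining $\{0,1\}^\ell$-sums on the recognizer side using $R=\underline{R}M^{\otimes n}$. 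Your caveat about index-ordering consistency along connecting edges is the only point that needs care, and the paper's convention (outputs labeled $1,\ldots,|X|$ then inputs $|Y|,\ldots,1$ counterclockwise) is set up precisely so that this works.
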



The FKT algorithm can compute the weighted sum of
 perfect matchings $\rm{PerfMatch}(\Gamma)$   for a planar graph in polynomial time.  So $\rm{Holant}(\Omega)$ is computable in polynomial time.

\subsection{Matrix Form of Signatures}

\begin{definition}\label{t-matrixform-G}
For a generator signature $G=(G^{j_{1}j_{2}\cdots j_{n}})$ on domain size $k$,
the $t$-th matrix form $G(t)$ ($1 \le t \le n$)
 is a $k\times k^{n-1}$ matrix, where the rows are
indexed by $1 \le j_{t} \le t$ and the columns are
indexed by $j_{1}\cdots j_{t-1}j_{t+1}\cdots j_{n}$ in lexicographic order.
\end{definition}
\begin{definition}\label{t-matrixform-R}
For a recognizer signature $R=(R_{j_{1}j_{2}\cdots j_{n}})$ on domain size $k$,
the $t$-th matrix form
$R(t)$ ($1 \le t \le n$)
 is a $k^{n-1}\times k$ matrix where
the rows are indexed by $j_{1}\cdots j_{t-1}j_{t+1}\cdots j_{n}$
in lexicographic order
and the columns are indexed by $1 \le j_{t} \le k$.
\end{definition}
For example, let  $G=(G^{j_{1}j_{2}})$ and $R=(R_{j_{1}j_{2}})$ where $n=2$ and $k=3$, then
\begin{center}
$G(1)=\begin{pmatrix}
G^{11}&G^{12}&G^{13}\\
G^{21}&G^{22}&G^{23}\\
G^{31}&G^{32}&G^{33}\end{pmatrix}$,
~~~~
$R(1)=\begin{pmatrix}
R_{11}&R_{21}&R_{31}\\
R_{12}&R_{22}&R_{32}\\
R_{13}&R_{23}&R_{33}
\end{pmatrix}$,
\end{center}

We may consider a standard signature of arity $n \ell$
as a signature on domain size $k=2^\ell$, with the identity
matrix $I_{2^\ell}$, then the following are
special cases of Definition~\ref{t-matrixform-G} and \ref{t-matrixform-R}.

The $t$-th matrix form $\underline{G}(t)$ ($1 \le t \le n$)
of the standard signature $\underline{G}=(\underline{G}^{\alpha_{1}\alpha_{2}\cdots \alpha_{n}})$ of a generator matchgate  of arity $n\ell$
 is a $2^{\ell}\times 2^{(n-1)\ell}$ matrix. Its rows are
indexed by  $\alpha_{t}$ and its columns are indexed
by $\alpha_{1}\cdots \alpha_{t-1}\alpha_{t+1}\cdots \alpha_{n}$.
%
The $t$-th matrix form $\underline{R}(t)$ ($1 \le t \le n$)
of the standard signature  $\underline{R}=(\underline{R}_{\alpha_{1}\alpha_{2}\cdots \alpha_{n}})$ of a recognizer matchgate of arity $n\ell$
 is a $2^{(n-1)\ell}\times 2^{\ell}$  matrix.  Its rows are
 indexed by $\alpha_{1}\cdots \alpha_{t-1}\alpha_{t+1}\cdots \alpha_{n}$ and
its columns are indexed by $\alpha_{t}$.
For example, let $\underline{G}=(\underline{G}^{\alpha_{1}\alpha_{2}})$, $\underline{R}=(\underline{R}_{\alpha_{1}\alpha_{2}})$ where $n=2, \ell=2$, then
\begin{center}
$\underline{G}(1)=\begin{pmatrix}
\underline{G}^{0000}& \underline{G}^{0001}& \underline{G}^{0010}& \underline{G}^{0011}\\
\underline{G}^{0100}& \underline{G}^{0101}& \underline{G}^{0110}& \underline{G}^{0111}\\
\underline{G}^{1000}& \underline{G}^{1001}& \underline{G}^{1010}& \underline{G}^{1011}\\
\underline{G}^{1100}& \underline{G}^{1101}& \underline{G}^{1110}& \underline{G}^{1111}
\end{pmatrix}$
,
~~~~
$\underline{R}(1)=\begin{pmatrix}
\underline{R}_{0000}& \underline{R}_{0100}& \underline{R}_{1000}& \underline{R}_{1100}\\
\underline{R}_{0001}& \underline{R}_{0101}& \underline{R}_{1001}& \underline{R}_{1101}\\
\underline{R}_{0010}& \underline{R}_{0110}& \underline{R}_{1010}& \underline{R}_{1110}\\
\underline{R}_{0011}& \underline{R}_{0111}& \underline{R}_{1011}& \underline{R}_{1111}
\end{pmatrix}$.
\end{center}

The following lemma can be proved directly.
\begin{lemma}\label{from vec to mat}
If $\underline{G}=M^{\otimes n}G$, where $M$ is a $2^{\ell}\times k$ basis, $G$ is a generator signature of dimension $k^{n}$, then
\begin{equation*}
\underline{G}(t)=MG(t)(M^{\tt{T}})^{\otimes (n-1)}.
\end{equation*}
\end{lemma}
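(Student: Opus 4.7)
The statement reduces to a direct verification at the level of matrix entries, so the plan is simply to compute the $(\alpha_t,\; \alpha_1\cdots\alpha_{t-1}\alpha_{t+1}\cdots\alpha_n)$-entry of both sides and observe that they agree term by term.

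First I would unfold the left-hand side. By Definition~\ref{t-matrixform-G} (applied in the specialization with $k=2^{\ell}$ and identity basis described just above Lemma~\ref{from vec to mat}), the $(\alpha_t,\; \alpha_1\cdots\alpha_{t-1}\alpha_{t+1}\cdots\alpha_n)$-entry of $\underline{G}(t)$ is exactly $\underline{G}^{\alpha_1\alpha_2\cdots\alpha_n}$. Using the hypothesis $\underline{G}=M^{\otimes n}G$ and the formula for tensor entries already written out in the paper,
\[
\underline{G}^{\alpha_1\alpha_2\cdots\alpha_n} \;=\; \sum_{j_1,\ldots,j_n\in[k]} a_{j_1}^{\alpha_1}a_{j_2}^{\alpha_2}\cdots a_{j_n}^{\alpha_n}\; G^{j_1j_2\cdots j_n}.
\]

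Next I would compute the same entry of the right-hand side $M\,G(t)\,(M^{\tt T})^{\otimes(n-1)}$ by peeling off the outer factors. Multiplication by $M$ on the left mixes only the row index, giving
\[
\bigl(M\,G(t)\bigr)_{\alpha_t,\, j_1\cdots j_{t-1}j_{t+1}\cdots j_n} \;=\; \sum_{j_t\in[k]} a_{j_t}^{\alpha_t}\, G^{j_1j_2\cdots j_n}.
\]
Multiplication on the right by $(M^{\tt T})^{\otimes(n-1)}$ then produces, in the $(\alpha_t,\; \alpha_1\cdots\alpha_{t-1}\alpha_{t+1}\cdots\alpha_n)$-entry, the sum
\[
\sum_{j_1,\ldots,j_{t-1},j_{t+1},\ldots,j_n\in[k]}\;\Bigl(\prod_{i\neq t} a_{j_i}^{\alpha_i}\Bigr)\cdot\sum_{j_t\in[k]} a_{j_t}^{\alpha_t}\, G^{j_1j_2\cdots j_n},
\]
which, after absorbing the inner sum into the outer one, is identical to the expression obtained for the left-hand side.

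This finishes the argument; the two expressions agree for every choice of $\alpha_1,\ldots,\alpha_n\in\{0,1\}^{\ell}$, so the matrices are equal. There is no substantive obstacle: the only thing to be careful about is the bookkeeping of indices, namely that $G(t)$ and $\underline{G}(t)$ use the same lexicographic convention on the omitted-$t$ coordinates, so that the Kronecker product $(M^{\tt T})^{\otimes(n-1)}$ acts coordinate by coordinate on $j_1\cdots j_{t-1}j_{t+1}\cdots j_n$ exactly as required. The analogous identity for recognizers (which the paper will surely use as well) is proved by the same entrywise comparison, just with rows and columns transposed.
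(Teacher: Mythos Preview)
Your proposal is correct and is exactly the direct entrywise verification the paper alludes to when it says ``can be proved directly'' (the paper omits the proof entirely). There is nothing to add; your bookkeeping on the index conventions is precisely the only point that needs care.
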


 Denote the Hamming weight of a binary string $\alpha$ as ${\rm wt}(\alpha)$.
%
We denote the row of $\underline{G}(t)$ indexed by $\alpha\in\{0,1\}^{\ell}$
as $\underline{G}(t)^{\alpha}$. The parity of ${\rm wt}(\alpha)$ is
also called the parity of $\underline{G}(t)^{\alpha}$.
The $n\times n$ identity matrix is denoted as $I_{n}$. The transpose of the matrix $A$ is denoted as $A^{\tt{T}}$.

\subsection{Matchgate Identities}

 Let $\underline{G}$ be the standard signature of a matchgate of arity $n$ (we
discuss $\underline{G}$ here, it is the same for $\underline{R}$). A pattern $\alpha$
 is an $n$-bit string, i.e., $\alpha\in\{0, 1\}^{n}$. A position vector
$P=\{p_{1}, \ldots, p_s\}$ is
 a subsequence of $\{1, 2, \ldots, n\}$,
where $p_{i}\in [n]$ and $p_{1}<p_{2}<\cdots<p_{s}$.
A position vector $P$
also denotes the pattern $p$ whose
 $(p_{1}, p_{2}, \ldots, p_{s})$-th bits are 1 and others are 0.
  Let $e_{i}\in \{0, 1\}^{n}$ be the pattern with 1 in the $i$-th bit and 0 elsewhere.
 Let $\alpha\oplus\beta$ be the bitwise XOR  pattern of $\alpha$ and $\beta$. Then for any pattern $\alpha\in \{0, 1\}^{n}$ and
 any position vector $P=\{p_{1}, \ldots, p_s\}$, we have the following Matchgate Identities (MGI):
 \begin{equation}\label{definition of MGI}
 \displaystyle\sum_{i=1}^{s}(-1)^{i}\underline{G}^{\alpha\oplus e_{p_{i}}}\underline{G}^{\alpha\oplus p\oplus e_{p_{i}}}=0.
 \end{equation}

 By the definition of standard signatures, if $\underline{G}$ is the standard signature of an odd matchgate, then $\underline{G}^{\alpha}=0$ for even ${\rm wt}(\alpha)$. If $\underline{G}$ is the standard signature of an even matchgate, then $\underline{G}^{\alpha}=0$ for odd ${\rm wt}(\alpha)$.  This is
the Parity Condition of standard signatures.

 \begin{theorem}\label{MGI}
 A vector in $\mathbb{C}^{2^{n}}$
is the standard signature of a  matchgate iff it satisfies:
 \begin{itemize}
\item the Parity Condition,
\item the Matchgate Identities.
\end{itemize}
 \end{theorem}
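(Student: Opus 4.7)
The plan is to prove both directions of the equivalence: (i) necessity---every standard signature satisfies the Parity Condition and the MGI---and (ii) sufficiency---every vector in $\mathbb{C}^{2^n}$ satisfying both conditions is realized as a standard signature of some matchgate. Necessity is essentially bookkeeping, while sufficiency is the main obstacle.

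For necessity, the Parity Condition is immediate: $\mathrm{PerfMatch}(G-Z) \neq 0$ forces $|V(G)| - \mathrm{wt}(\chi_Z)$ to be even, which pins the parity of $\mathrm{wt}(\chi_Z)$ to a single value determined by $|V(G)|$. For the MGIs, my plan is to invoke Kasteleyn's theorem: since $G$ is planar, it admits an orientation whose signed adjacency matrix $A$ is skew-symmetric and satisfies $\mathrm{PerfMatch}(G-Z) = \pm\, \mathrm{Pf}(A_{V\setminus Z})$ for every $Z \subseteq V$. Plugging this into the left-hand side of~(\ref{definition of MGI}) turns each MGI into a Grassmann--Pl\"ucker relation among sub-Pfaffians of a skew-symmetric matrix, which holds universally. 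As a self-contained alternative, one can give a direct combinatorial proof via a sign-reversing involution on the disjoint union, over $i \in [s]$, of ordered pairs of perfect matchings of $G - (Z \oplus e_{p_i})$ and $G - (Z \oplus p \oplus e_{p_i})$: superposing the pair yields a union of alternating paths and cycles, and flipping the leftmost one in a fixed planar order cancels the term indexed by $i$ against a term indexed by some $j \neq i$ of opposite sign. Planarity is what ensures the involution is well defined and sign-reversing.

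For sufficiency I would induct on the arity $n$. For small base cases one checks by hand that every vector in $\mathbb{C}^{2^n}$ satisfying both conditions is realized by an explicit small matchgate (single edges, short paths, a crossover-free gadget). For the inductive step, the plan is to translate the MGIs into an algebraic Pfaffian reconstruction: use the MGIs to show that the whole vector $v$ is determined by its values $v^{\alpha}$ with $\mathrm{wt}(\alpha) \le 2$, assemble those values into a skew-symmetric ``character matrix'' $A$, and use the MGIs again to verify $v^{\chi_Z} = \pm\, \mathrm{Pf}(A_{V \setminus Z})$ for every $Z$. This reduces sufficiency to realizing an arbitrary such skew-symmetric $A$ as a Kasteleyn-like matrix of some planar graph, which one builds by combining basic matchgate gadgets and appealing to closure under planar composition (cf.\ Lemma~\ref{product transformer}) and tensor product. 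The main obstacle is exactly this bridge from the algebraic MGI identities to an explicit planar combinatorial object: one must show that the matchgate variety cut out combinatorially coincides with the Pfaffian variety cut out algebraically by the MGI, either by an explicit gadget construction handling the character matrix $A$, or by an irreducibility plus dimension-count argument showing the two varieties agree.
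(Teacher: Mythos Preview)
The paper does not prove this theorem at all: immediately after stating it, the text reads ``For a proof see \cite{Cai-Aaron},'' deferring entirely to an external reference (and remarking in passing that in \cite{Cai-Aaron} MGI alone is shown to imply the Parity Condition). So there is no in-paper argument to compare your proposal against.

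That said, your outline is broadly the standard route in the literature, but two places in the sufficiency direction are loose enough to count as gaps. First, the claim that the MGI determine $v$ from its values on patterns of weight at most $2$ tacitly assumes a nonzero anchor entry (e.g.\ $v^{0\cdots 0}\neq 0$ in the even case); when the anchor vanishes one must first shift by an XOR with a suitable pattern to create one, and checking that this shift preserves both the MGI and realizability requires an argument you have not supplied. Second, the phrase ``realizing an arbitrary such skew-symmetric $A$ as a Kasteleyn-like matrix of some planar graph'' is not what is actually done and, taken literally, is false: a generic skew-symmetric $n\times n$ matrix over $\mathbb{C}$ is not the signed adjacency matrix of any planar graph on $n$ vertices. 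What one must construct is a \emph{larger} planar matchgate, with internal vertices, whose signature restricted to the $n$ external nodes reproduces the sub-Pfaffians of $A$. That explicit gadget construction is the real combinatorial content of sufficiency, and appealing to ``closure under planar composition (cf.\ Lemma~\ref{product transformer})'' does not discharge it---Lemma~\ref{product transformer} only lets you compose existing matchgate signatures, it does not manufacture a matchgate realizing a prescribed character matrix.
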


For a proof see \cite{Cai-Aaron}.
Actually in \cite{Cai-Aaron} it is shown that
MGI implies the Parity Condition. But in practice, it is easier
to apply the Parity Condition first.
  From the definition of MGI, we directly have the following lemma.
 \begin{lemma}\label{lemma2.2}
 In $\mathrm{MGI}$,  the XOR of the indices in every product term $\underline{G}^{\alpha\oplus e_{i}}\underline{G}^{\alpha\oplus p\oplus e_{i}}$ is the pattern $p$.
%
Assume $\underline{G}$ satisfies the Parity Condition.
If ${\rm wt}(p)$ is odd,
or if  ${\rm wt}(p)=2$,  then the $\mathrm{MGI}$ are automatically satisfied.
\end{lemma}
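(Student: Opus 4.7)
The plan is to verify the two assertions essentially by direct inspection of the MGI formula
$$\sum_{i=1}^{s}(-1)^{i}\underline{G}^{\alpha\oplus e_{p_{i}}}\underline{G}^{\alpha\oplus p\oplus e_{p_{i}}}=0.$$

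For the first assertion, I would simply compute the bitwise XOR of the two indices appearing in a generic product term. Since XOR is its own inverse,
$$(\alpha\oplus e_{p_i})\oplus(\alpha\oplus p\oplus e_{p_i}) \;=\; \alpha\oplus\alpha\oplus e_{p_i}\oplus e_{p_i}\oplus p \;=\; p,$$
as claimed. This is a one‑line observation, not really the main content of the lemma.

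For the second assertion I would split into the two stated cases. When ${\rm wt}(p)$ is odd, the indices $\alpha\oplus e_{p_i}$ and $\alpha\oplus p\oplus e_{p_i}$ differ by $p$, which has odd Hamming weight, so their weights have opposite parities. Hence in every term exactly one of the two factors violates the parity type of the matchgate, and is therefore zero by the Parity Condition; so every summand vanishes individually and MGI holds trivially.

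When ${\rm wt}(p)=2$, write $P=\{p_1,p_2\}$, so $p=e_{p_1}\oplus e_{p_2}$. Then
$$\alpha\oplus p\oplus e_{p_1}=\alpha\oplus e_{p_2}, \qquad \alpha\oplus p\oplus e_{p_2}=\alpha\oplus e_{p_1},$$
and the MGI sum collapses to
$$-\underline{G}^{\alpha\oplus e_{p_1}}\underline{G}^{\alpha\oplus e_{p_2}}+\underline{G}^{\alpha\oplus e_{p_2}}\underline{G}^{\alpha\oplus e_{p_1}}=0,$$
which is automatic. In both cases the identity is forced with no combinatorial input beyond the Parity Condition. There is no real obstacle here — the lemma is essentially a bookkeeping observation whose purpose is to let later proofs restrict MGI verification to position vectors $P$ with even weight ${\rm wt}(p)\ge 4$; the substance of the paper will lie in applying MGI for those nontrivial $p$, not in this lemma.
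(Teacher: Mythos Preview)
Your proof is correct and follows essentially the same approach as the paper: the paper also notes the first claim is obvious, derives the odd-weight case directly from the Parity Condition (since the two indices in each term have opposite parities), and for ${\rm wt}(p)=2$ writes out the two-term MGI and observes the cancellation $\underline{G}^{\alpha\oplus e_{p_1}}\underline{G}^{\alpha\oplus e_{p_2}}-\underline{G}^{\alpha\oplus e_{p_2}}\underline{G}^{\alpha\oplus e_{p_1}}=0$. Your write-up is slightly more detailed but the argument is identical.
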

%
%
%
\begin{proof}
The first statement is obvious.
Hence if ${\rm wt}(p)$ is odd, then every product term is zero
by the Parity Condition.
If ${\rm wt}(p)=2$,  then for
 any pattern $\alpha$, the matchgate identity is
$\underline{G}^{\alpha\oplus e_{p_{1}}}\underline{G}^{\alpha\oplus e_{p_{2}}}-\underline{G}^{\alpha\oplus e_{p_{2}}}\underline{G}^{\alpha\oplus e_{p_{1}}}=0$,
so it is automatically satisfied.
\end{proof}

\begin{lemma}\label{lemma5.4}
A signature $G=(G^{i_{1}i_{2}i_{3}i_{4}})$ of arity 4
is the standard signature of an even matchgate (generator or recognizer) iff $G^{\alpha}=0$ for all odd ${\rm wt}(\alpha)$ and
\begin{equation}\label{evenMGI4}
G^{0000}G^{1111}-G^{1100}G^{0011}+G^{1010}G^{0101}-G^{1001}G^{0110}=0.
\end{equation}
Similarly, it is the standard signature of an odd matchgate (generator or recognizer) iff $G^{\alpha}=0$ for all even ${\rm wt}(\alpha)$ and
\begin{equation}\label{oddMGI4}
G^{1000}G^{0111}-G^{0100}G^{1011}+G^{0010}G^{1101}-G^{0001}G^{1110}=0.
\end{equation}
\end{lemma}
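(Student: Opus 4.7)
The plan is to invoke Theorem~\ref{MGI} (the Matchgate Identity characterization) and then show that, for arity $n=4$, all Matchgate Identities collapse to a single polynomial equation in each parity class.

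First, by Lemma~\ref{lemma2.2}, once the Parity Condition is assumed, every MGI with ${\rm wt}(p)$ odd or ${\rm wt}(p)=2$ is automatic. Since $n=4$, the only remaining case is ${\rm wt}(p)=4$, i.e., position vector $P=\{1,2,3,4\}$ and pattern $p=1111$. For each $\alpha\in\{0,1\}^4$, the MGI~(\ref{definition of MGI}) specializes to
$$-G^{\alpha\oplus e_1}G^{\alpha\oplus 1111\oplus e_1}+G^{\alpha\oplus e_2}G^{\alpha\oplus 1111\oplus e_2}-G^{\alpha\oplus e_3}G^{\alpha\oplus 1111\oplus e_3}+G^{\alpha\oplus e_4}G^{\alpha\oplus 1111\oplus e_4}=0.$$

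Next I would split into the even and odd cases. For an even matchgate: if ${\rm wt}(\alpha)$ is even, then each $\alpha\oplus e_i$ has odd weight and the Parity Condition forces every product to vanish, so the identity is trivial; if ${\rm wt}(\alpha)$ is odd, then each $\alpha\oplus e_i$ has even weight and a direct computation (take for instance $\alpha=1000$, whose four pairs are $(0000,1111),(1100,0011),(1010,0101),(1001,0110)$ with signs $-,+,-,+$) produces precisely equation (\ref{evenMGI4}). To finish this case, I would check that the remaining seven odd-weight choices of $\alpha$ yield the same equation up to an overall sign; the structural reason is that the four unordered pairs $\{\alpha\oplus e_i,\ \alpha\oplus 1111\oplus e_i\}$ always exhaust the four complementary pairs of even-weight strings in $\{0,1\}^4$, and the alternating signs $(-1)^i$ keep the pairs consistently coupled. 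Hence the Parity Condition plus (\ref{evenMGI4}) is equivalent to the full MGI system for an even matchgate, and Theorem~\ref{MGI} then delivers the equivalence stated in the lemma.

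The odd case is handled symmetrically, with the roles of even and odd weights interchanged: the Parity Condition kills the MGI for every odd-weight $\alpha$, while any even-weight $\alpha$ (e.g.\ $\alpha=0000$) produces exactly (\ref{oddMGI4}), and all other even-weight $\alpha$'s give the same relation up to sign by the same complementary-pair argument. The only genuinely delicate step is this bookkeeping that the eight nontrivial MGI instances per parity class coincide up to sign; conceptually it is transparent but does require a careful sign audit, and it is the main (minor) technical obstacle in an otherwise immediate specialization of Theorem~\ref{MGI} to $n=4$.
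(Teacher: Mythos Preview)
Your proposal is correct and follows essentially the same route as the paper: invoke Theorem~\ref{MGI}, use Lemma~\ref{lemma2.2} to reduce to the single position vector $P=\{1,2,3,4\}$, observe that the pattern $\alpha=1000$ yields~(\ref{evenMGI4}), and then note that every other pattern gives the same identity (up to sign) or a trivial one. The paper's proof is terser---it simply asserts ``it can be verified that for $P=\{1,2,3,4\}$ and any pattern, the MGI is equivalent to equation~(\ref{evenMGI4})''---whereas you spell out the parity split on $\alpha$ and the complementary-pair structure; but the substance is identical.
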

\begin{proof}
We prove the Lemma for the even case. The proof for the odd case is similar and we omit it.

If $G=(G^{i_{1}i_{2}i_{3}i_{4}})$
is the standard signature of an even matchgate, then $G^{\alpha}=0$ for odd ${\rm wt}(\alpha)$ by the Parity Condition.
Equation (\ref{evenMGI4}) follows from MGI where
we choose the position vector $P=\{1,2,3,4\}$ and the pattern $1000$.


Conversely, $G$ satisfies the Parity Condition  as given.
For arity 4, Lemma~\ref{lemma2.2} shows that
the only non-trivial position vector is $P=\{1,2,3,4\}$.
It can be verified  that
for $P=\{1,2,3,4\}$ and any pattern, the MGI is equivalent to
equation (\ref{evenMGI4})
for an even matchgate of arity 4.
Hence $G$ satisfies  MGI, and thus it is a
standard signature by Theorem \ref{MGI}.
\end{proof}

\section{Degenerate and Full Rank Signatures}
\begin{definition}\label{def3.1}
A signature $G$ (generator or recognizer) on domain size $k$ is degenerate iff $G$ has the following form:
\begin{center}
$G=\gamma_{1}\otimes \gamma_{2}\otimes\cdots\otimes \gamma_{n}$,
\end{center}
where $\gamma_{i}$ are vectors of dimension $k$.
\end{definition}

\begin{lemma}
A signature $G$ on domain size $k$  is degenerate iff {\rm rank}$(G(t))\leq 1$ for $1\leq t\leq n$.
\end{lemma}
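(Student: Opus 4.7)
My plan is to prove the two directions separately, with the forward implication being a direct calculation and the reverse going by induction on the arity $n$.

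For the forward direction, suppose $G = \gamma_1 \otimes \gamma_2 \otimes \cdots \otimes \gamma_n$, so $G^{j_1 \cdots j_n} = \gamma_1^{j_1} \gamma_2^{j_2} \cdots \gamma_n^{j_n}$. Then the $(j_t, j_1 \cdots j_{t-1} j_{t+1} \cdots j_n)$ entry of $G(t)$ factors as $\gamma_t^{j_t} \cdot \prod_{i \ne t}\gamma_i^{j_i}$, which exhibits $G(t)$ as the outer product $\gamma_t \, w_t^{\tt T}$, where $w_t = \bigotimes_{i \ne t}\gamma_i$. Hence $\mathrm{rank}(G(t)) \le 1$ for every $t$.

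For the converse, I proceed by induction on $n$. The case $n=1$ is trivial since $G(1) = G$ is itself a vector of dimension $k$, which is degenerate by definition. Assume the result for arity $n-1$ and suppose $G$ has arity $n$ with $\mathrm{rank}(G(t)) \le 1$ for all $t$. If $G$ is identically zero, we may write $G$ as a tensor product with the zero vector in any slot, so assume $G \ne 0$. Then $\mathrm{rank}(G(1)) = 1$, so there exist nonzero vectors $u \in \mathbb{C}^k$ and $v \in \mathbb{C}^{k^{n-1}}$ with $G(1) = u\, v^{\tt T}$, equivalently
\begin{equation*}
G^{j_1 j_2 \cdots j_n} = u^{j_1}\, v^{j_2 \cdots j_n}.
\end{equation*}
View $v$ as an arity-$(n-1)$ signature on domain size $k$; I will show it satisfies the inductive hypothesis.

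Fix any $j_1^{\ast}$ with $u^{j_1^{\ast}} \ne 0$. For each $t \in \{2,\ldots,n\}$, restrict the column index of $G(t)$ to those columns with $j_1 = j_1^{\ast}$. The resulting $k \times k^{n-2}$ submatrix has entries $G^{j_1^{\ast} j_2 \cdots j_n} = u^{j_1^{\ast}}\, v^{j_2 \cdots j_n}$, so it equals $u^{j_1^{\ast}}$ times the $(t{-}1)$-th matrix form $v(t{-}1)$ of the signature $v$. Since $u^{j_1^{\ast}} \ne 0$, we get $\mathrm{rank}(v(t{-}1)) = \mathrm{rank}(\text{submatrix}) \le \mathrm{rank}(G(t)) \le 1$. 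By the inductive hypothesis, $v$ is degenerate, $v = \gamma_2 \otimes \cdots \otimes \gamma_n$, and setting $\gamma_1 = u$ yields $G = \gamma_1 \otimes \gamma_2 \otimes \cdots \otimes \gamma_n$, as desired.

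The only subtlety in the argument is the book-keeping in the inductive step: one must verify that slicing the matrix form $G(t)$ along the fixed value $j_1 = j_1^{\ast}$ really produces the matrix form $v(t-1)$ of the reduced signature, and that the rank cannot increase under this operation. Both facts follow immediately once the indexing conventions of Definition~\ref{t-matrixform-G} are expanded, so no real obstacle arises.
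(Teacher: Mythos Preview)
Your proof is correct and follows essentially the same approach as the paper's: both argue the converse by induction on the arity $n$, use the rank-one factorization of $G(1)$ to peel off the first tensor factor, and observe that the matrix forms of the reduced arity-$(n-1)$ signature sit as submatrices of the $G(t)$ so the rank hypothesis persists. The only cosmetic differences are that the paper starts its induction at $n=2$ and phrases the rank-one decomposition by picking a nonzero row of $G(1)$ rather than writing an outer product $uv^{\tt T}$.
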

\begin{proof}
We prove the lemma for generator signatures. The proof for recognizer signatures is similar.

Let $G=(G^{i_{1}i_{2}\cdots i_{n}})$, where $i_{t}\in [k]$. If there exists $t\in[k]$ such that rank$(G(t))=0$, then $G$ is identically zero and is degenerate obviously. If the matrix form $G(t)$ has rank 1 for $1\leq t\leq n$, we will prove the lemma by induction on the arity $n$.

For $n=2$, let $G(1)^{i}$ be the $i$-th row of the first matrix form $G(1)$, then there exists a non-zero row $G(1)^{j}$, where $j\in [k]$,
and $a_{1}, a_{2}, \cdots, a_{k} \in \mathbb{C}$
such that $G(1)^{i}=a_{i}G(1)^{j}$ for $1\leq i\leq k$ from rank$(G(1))=1$.
Let $\gamma_{1}=(a_{1}, a_{2}, \cdots, a_{k}), \gamma_{2}=G(1)^{j}$, then
$G=\gamma_{1}\otimes \gamma_{2}$.

Inductively we assume that the lemma has  been proved for arity
less than $n$.
There exists a non-zero row $G(1)^{j}$ and
$a_{1}, a_{2}, \cdots, a_{k} \in \mathbb{C}$ such that
$G(1)^{i}=a_{i}G(1)^{j}$ for $1\leq i\leq k$ from rank$(G(1))=1$.
Note that $G(1)^{j}$ is a signature of arity $n-1$ and the matrix forms of $G(1)^{j}$ are
  sub-matrices of the matrix forms of $G$,
 thus there exist vectors $\gamma_{2}, \gamma_{3}, \cdots, \gamma_{n}$ of dimension $k$ such that $G(1)^{j}=\gamma_{2}\otimes \gamma_{3}\otimes\cdots\otimes \gamma_{n}$ by induction.
 Let $\gamma_{1}=(a_{1}, a_{2}, \cdots, a_{k})$, then
$G=\gamma_{1}\otimes \gamma_{2}\otimes\cdots\otimes \gamma_{n}$.

Conversely, if $G=\gamma_{1}\otimes \gamma_{2}\otimes\cdots\otimes \gamma_{n}$, it is obvious that the matrix form $G(t)$ has rank at most 1 for $1\leq t\leq n$.
\end{proof}

\begin{definition}\label{def3.2}
For a non-degenerate signature $G$ (generator or recognizer) on domain size $k$, if there exists $t$ such that rank$(G(t))=k$, then $G$ is called a  signature of full rank.
\end{definition}

\begin{remark}
We will prove a collapse theorem for holographic algorithms which employs at least one generator signature of full rank.
For example, we will prove:
For domain size 3, any non-trivial holographic algorithm on a basis of size $\ell\geq 2$ which employs at least one generator signature of full rank can be simulated
on a basis of size 1.
Any holographic algorithm using only degenerate generator signatures
is trivial.
\end{remark}

\section{A New Proof for the Collapse Theorem on Domain Size 2}
The following is a simple lemma from Linear Algebra.
\begin{lemma}\label{lemma 4.1}
Let $A, B, C$ be $m\times n$, $n\times s$, $s\times t$ matrices respectively, where rank$(A)=n$, rank$(C)=s$,
then
\begin{equation*}
{\rm rank}(AB)={\rm rank}(B),~~~~ {\rm rank}(BC)={\rm rank}(B).
\end{equation*}
\end{lemma}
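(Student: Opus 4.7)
The plan is to prove each equality by establishing both inequalities. One direction in each case is immediate from sub-multiplicativity of rank (which holds without any full-rank hypothesis on $A$ or $C$): we have $\mathrm{rank}(AB)\le\mathrm{rank}(B)$ and $\mathrm{rank}(BC)\le\mathrm{rank}(B)$. The reverse inequalities are where the hypotheses $\mathrm{rank}(A)=n$ and $\mathrm{rank}(C)=s$ enter, and I would exploit them by producing one-sided inverses.

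For the first equality, since $A$ has full column rank $n$, the linear map $A:\mathbb{C}^n\to\mathbb{C}^m$ is injective and therefore admits a left inverse $L\in\mathbb{C}^{n\times m}$ satisfying $LA=I_n$. Then $B=LAB=L(AB)$, and sub-multiplicativity applied to this factorization gives $\mathrm{rank}(B)\le\mathrm{rank}(AB)$. For the second equality, since $C$ has full row rank $s$, the map $C:\mathbb{C}^t\to\mathbb{C}^s$ is surjective and admits a right inverse $R\in\mathbb{C}^{t\times s}$ with $CR=I_s$; writing $B=BCR=(BC)R$ yields $\mathrm{rank}(B)\le\mathrm{rank}(BC)$ by the same reasoning.

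There is no genuine obstacle: the lemma is essentially the observation that a matrix of full column (respectively row) rank admits a left (respectively right) inverse, and multiplication on the appropriate side by such a matrix preserves rank. One could alternatively argue directly via kernels and images, using that injectivity of $A$ forces $\ker(AB)=\ker(B)$ and surjectivity of $C$ forces $\mathrm{Im}(BC)=\mathrm{Im}(B)$, but the one-sided inverse formulation is the most elementary and keeps the proof to a few lines.
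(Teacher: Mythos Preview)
Your proof is correct. The paper does not actually prove this lemma; it simply states it as ``a simple lemma from Linear Algebra,'' so there is no approach to compare against, and your one-sided-inverse argument is a standard and entirely adequate justification.
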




We need to introduce a new notation for
a splicing operation. Let $\alpha \in\{0,1\}^{\ell}$ and
$\beta \in \{0,1\}^{(n-1)\ell}$. Then we
  use $\beta\curvearrowleft_{t}\alpha$ to denote the binary string 
$\alpha_{1}\alpha_{2}\cdots\alpha_{n}\in\{0,1\}^{n\ell}$, where
for each $i$, $\alpha_i \in \{0,1\}^{\ell}$,  $\alpha_t=\alpha$  and
$\alpha_{1}\cdots\alpha_{t-1}\alpha_{t+1}\cdots\alpha_{n} = \beta$.
 Similarly, we denote a position vector $P$ as $(q_{1}q_{2}\cdots q_{d'})\curvearrowleft_{t}(p_{1}p_{2}\cdots p_{d})$, where $p_{i}$ is in the $t$-th block for $1\leq i\leq d$, and $q_{j}$ is in other blocks for $1\leq j\leq d'$.

In this section, assume that the basis $M$ is a $2^{\ell}\times 2$ matrix of rank 2, $G$ is a generator signature of arity $n$
and full rank on domain size 2 (thus there exists $t\in[n]$ such that rank$(G(t))=2$),  and $\underline{G}=M^{\otimes n}G$ is a standard signature of arity $n\ell$.
Note that if $M$ has rank at most one, or if all generators used by
a holographic algorithm are not of full rank (on domain size 2 this
means they are all  degenerate), then the Holant is trivial to compute
and this is a trivial holographic algorithm.

  From Lemma \ref{from vec to mat}, we have $\underline{G}(t)=MG(t)(M^{\tt{T}})^{\otimes (n-1)}$. Then by Lemma \ref{lemma 4.1}, rank$(\underline{G}(t))=2$. 
Therefore we can define $\sigma, \tau\in\{0, 1\}^{\ell}$ and $\zeta, \eta\in\{0, 1\}^{(n-1)\ell}$ as follows:
\begin{itemize}
\item $\underline{G}(t)^{\sigma}$ and  $\underline{G}(t)^{\tau}$ are linearly independent,
\item ${\rm wt}(\sigma\oplus\tau)=\displaystyle\min_{u\neq v, u, v\in\{0, 1\}^{\ell}}\{{\rm wt}(u\oplus v) \mid \underline{G}(t)^{u}$ and $\underline{G}(t)^{v}$
  are linearly independent$\}$.
\end{itemize}
Let $x_{\beta}=\begin{pmatrix}
\underline{G}^{\beta\curvearrowleft_{t} \sigma}\\
\underline{G}^{\beta\curvearrowleft_{t} \tau}
\end{pmatrix}$, then there exist $\zeta, \eta$ such that:

\begin{itemize}
\item $x_{\zeta}$ and $x_{\eta}$ are linearly independent,
\item ${\rm wt}(\zeta\oplus \eta)=\displaystyle\min_{u\neq v, u, v\in\{0, 1\}^{(n-1)\ell}}\{{\rm wt}(u\oplus v) \mid x_{u}$  and $x_{v}$ are linearly independent$\}$.
\end{itemize}

By the definition of $\sigma, \tau, \zeta, \eta$, we directly have the following Lemma.
\begin{lemma}\label{lemma4.1.1}
If there exists $\alpha\in\{0, 1\}^{\ell}$ such that $0<{\rm wt}(\sigma\oplus \alpha)<{\rm wt}(\sigma\oplus \tau)$ and
$0<{\rm wt}(\alpha\oplus \tau)<{\rm wt}(\sigma\oplus \tau)$, then $\underline{G}(t)^{\alpha}$ is identically zero. Similarly, If there exists $\beta\in\{0, 1\}^{(n-1)\ell}$ such that $0<{\rm wt}(\eta\oplus \beta)<{\rm wt}(\eta\oplus \zeta)$ and
$0<{\rm wt}(\beta\oplus \zeta)<{\rm wt}(\eta\oplus \zeta)$, then $x_{\beta}$ is identically zero.
\end{lemma}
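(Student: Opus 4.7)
My plan is to prove this by a short minimality argument, using only the defining properties of $\sigma, \tau$ (respectively $\zeta, \eta$). Since both statements are completely symmetric, I will focus on the first one and note that the second follows by the same argument applied to $x_\beta$ in place of $\underline{G}(t)^\alpha$.

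The first step is to observe that $\underline{G}(t)^\sigma$ and $\underline{G}(t)^\tau$ are both nonzero, because they were chosen so as to form a linearly independent pair. Now suppose for contradiction that $\underline{G}(t)^\alpha \neq 0$ for some $\alpha$ satisfying the hypothesis. The hypothesis $0 < \mathrm{wt}(\sigma\oplus\alpha)$ ensures $\alpha \neq \sigma$, and $\mathrm{wt}(\sigma \oplus \alpha) < \mathrm{wt}(\sigma \oplus \tau)$ together with the minimality in the definition of $\sigma, \tau$ forces $\underline{G}(t)^\sigma$ and $\underline{G}(t)^\alpha$ to be linearly dependent. Since both are nonzero, there exists $c \neq 0$ with $\underline{G}(t)^\alpha = c\,\underline{G}(t)^\sigma$.

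Next, I would repeat exactly the same argument with $\tau$ playing the role of $\sigma$: the condition $0 < \mathrm{wt}(\alpha \oplus \tau) < \mathrm{wt}(\sigma \oplus \tau)$ gives $\alpha \neq \tau$ and, by minimality again, linear dependence of $\underline{G}(t)^\alpha$ and $\underline{G}(t)^\tau$. Since both are nonzero, we obtain some $d \neq 0$ with $\underline{G}(t)^\alpha = d\,\underline{G}(t)^\tau$. Combining the two relations yields $c\,\underline{G}(t)^\sigma = d\,\underline{G}(t)^\tau$, which contradicts the linear independence of $\underline{G}(t)^\sigma$ and $\underline{G}(t)^\tau$. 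Hence $\underline{G}(t)^\alpha = 0$, proving the first claim.

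For the second claim, I would run an identical argument in the $(n-1)\ell$-bit setting. The vectors $x_\zeta$ and $x_\eta$ are nonzero by their linear independence, and the minimality property of $\zeta, \eta$ implies that any $\beta$ strictly between $\zeta$ and $\eta$ in Hamming distance gives $x_\beta$ that is simultaneously a scalar multiple of $x_\zeta$ and of $x_\eta$; unless $x_\beta = 0$, this would force $x_\zeta$ and $x_\eta$ to be proportional, a contradiction. There is no real obstacle here beyond recognizing that the lemma is essentially a pigeonhole/minimality fact about minimum-distance pairs in a two-dimensional row span, so the whole proof should be short and mechanical.
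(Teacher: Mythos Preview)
Your argument is correct and is exactly the minimality reasoning the paper has in mind; the paper itself does not spell out a proof but simply states that the lemma follows ``directly'' from the definitions of $\sigma,\tau,\zeta,\eta$. Your write-up is just an explicit unpacking of that one-line justification.
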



Let
$\zeta \oplus \eta = e_{q_{1}}\oplus e_{q_{2}}\oplus \cdots \oplus e_{q_{d'}}$,
and
$\sigma\oplus \tau=e_{p_{1}}\oplus e_{p_{2}}\oplus \cdots \oplus e_{p_{d}}$.

\begin{remark}
In this paper, we will repeatedly use the following method to construct MGI.
For any given  $\sigma, \tau, \zeta, \eta$, define 
$\{p_{1}, p_{2}, \ldots, p_{d}\}$ and
$\{q_{1}, {q_{2}}, \ldots, q_{d'}\}$ as above.
\begin{itemize}
\item Let the position vector be $(q_{1}q_{2}\cdots q_{d'})\curvearrowleft_{t}(p_{1}p_{2}\cdots p_{d})$ and the 
pattern  be $\zeta\curvearrowleft_{t} (\sigma \oplus e_{p_{1}})$,
where $p_{1}$ is the first non-zero position of the $t$-th block.
The other pattern is $\eta\curvearrowleft_{t} (\tau \oplus e_{p_{1}})$.
Then we will get a Matchgate Identity such that
the product
 $\underline{G}^{\zeta\curvearrowleft_{t} \sigma}\underline{G}^{\eta\curvearrowleft_{t}\tau}$
is term. Note that $(q_{1}q_{2}\cdots q_{d'})\curvearrowleft_{t}(p_{1}p_{2}\cdots p_{d})$ is the set of the non-zero positions of
$(\zeta\curvearrowleft_{t} \sigma)\oplus (\eta\curvearrowleft_{t} \tau)$.
\item Every bit position in $(q_{1}q_{2}\cdots q_{d'})\curvearrowleft_{t}(p_{1}p_{2}\cdots p_{d})$ corresponds to a product term. For the position $p_{i}$ in the $t$-th block, the product term is
$\underline{G}^{\zeta\curvearrowleft_{t} (\sigma\oplus e_{p_{1}}\oplus e_{p_{i}})}\underline{G}^{\eta\curvearrowleft_{t}(\tau\oplus e_{p_{1}}\oplus e_{p_{i}})}$.
    Outside the $t$-th block, the product term has the form
    $\underline{G}^{(\zeta\oplus e_{q_{j}})\curvearrowleft_{t} (\sigma\oplus e_{p_{1}})}\underline{G}^{(\eta\oplus e_{q_{j}})\curvearrowleft_{t}(\tau\oplus e_{p_{1}})}$.
\item Separate out the product terms corresponding to the positions in the $t$-th block from  the rest in MGI,  we get the equation
  \begin{equation*}
\displaystyle\sum_{i=1}^{d}(-1)^{i+ 1}\underline{G}^{\zeta\curvearrowleft_{t} (\sigma\oplus e_{p_{1}}\oplus e_{p_{i}})}\underline{G}^{\eta\curvearrowleft_{t} (\tau
\oplus e_{p_{1}}\oplus e_{p_{i}})}
=
\displaystyle\sum_{j=1}^{d'}(\pm\underline{G}^{(\zeta\oplus e_{q_{j}})\curvearrowleft_{t} (\sigma\oplus e_{p_{1}})}\underline{G}^{(\eta\oplus e_{q_{j}})\curvearrowleft_{t} (\tau\oplus e_{p_{1}})}),
\end{equation*}
where $\pm$ depends on $j$, and if $q_{j}$ occurs after
the $t$-th block then it also depends on the parity of $d$.
However this will not matter to us, since in this paper whenever
we use this method we will
show that these  terms where we write an indefinite $\pm$ sign all vanish.
\end{itemize}
\end{remark}

\begin{lemma}
\label{lemma4.2}
The Hamming distance $d$ between the minimizing
$\sigma$ and $\tau$ is  1.
\end{lemma}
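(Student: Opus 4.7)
The plan is to proceed by contradiction: assume $d\ge 2$ and deduce $\det(x_\zeta\,|\,x_\eta)=0$, contradicting the linear independence of $x_\zeta$ and $x_\eta$. Our only tool will be the Matchgate Identities, which $\underline{G}$ satisfies by Theorem~\ref{MGI}, together with the minimality built into the definitions of $\sigma,\tau,\zeta,\eta$ as recorded in Lemma~\ref{lemma4.1.1}.

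First I would use Lemma~\ref{lemma4.1.1} to show that, once $d\ge 2$, several rows of $\underline{G}(t)$ are identically zero. The pattern $\sigma\oplus e_{p_1}$ lies at Hamming distance $1$ from $\sigma$ and $d-1$ from $\tau$, both strictly in $(0,d)$, so $\underline{G}(t)^{\sigma\oplus e_{p_1}}\equiv 0$; by symmetry $\underline{G}(t)^{\tau\oplus e_{p_1}}\equiv 0$ as well. When $d\ge 3$, the same estimate applies to each $\sigma\oplus e_{p_1}\oplus e_{p_i}$ and $\tau\oplus e_{p_1}\oplus e_{p_i}$ for $2\le i\le d$ (whose distances to $\sigma$ and $\tau$ are $2$ and $d-2$), so these rows vanish as well. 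I would then invoke the MGI-construction recipe from the preceding Remark with position vector $(q_1\cdots q_{d'})\curvearrowleft_t(p_1\cdots p_d)$ and pattern $\zeta\curvearrowleft_t(\sigma\oplus e_{p_1})$. Every term indexed by an ``outside'' position $q_j$ carries the vanishing factor $\underline{G}^{(\zeta\oplus e_{q_j})\curvearrowleft_t(\sigma\oplus e_{p_1})}=0$, and when $d\ge 3$ each term with $i\ge 2$ also vanishes by the second batch of zero rows. What survives is the single equation $\underline{G}^{\zeta\curvearrowleft_t\sigma}\,\underline{G}^{\eta\curvearrowleft_t\tau}=0$. Running the recipe a second time with pattern $\zeta\curvearrowleft_t(\tau\oplus e_{p_1})$ gives the symmetric equation $\underline{G}^{\zeta\curvearrowleft_t\tau}\,\underline{G}^{\eta\curvearrowleft_t\sigma}=0$, and combining the two yields $\det(x_\zeta\,|\,x_\eta)=0$, the desired contradiction.

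The case $d=2$ is handled by a single MGI: here $\sigma\oplus e_{p_1}\oplus e_{p_2}=\tau$ and $\tau\oplus e_{p_1}\oplus e_{p_2}=\sigma$, so the $i=2$ term does \emph{not} vanish but equals $-\underline{G}^{\zeta\curvearrowleft_t\tau}\,\underline{G}^{\eta\curvearrowleft_t\sigma}$; the two surviving LHS terms already reconstitute $\det(x_\zeta\,|\,x_\eta)=0$ in one shot. The main technical point is to keep the bookkeeping of Hamming distances tight so that Lemma~\ref{lemma4.1.1} applies to every intermediate pattern; once this is in place, the minimality of $d$ in the definition of $\sigma,\tau$ is precisely what drives all intermediate rows of $\underline{G}(t)$ to zero, isolating the single determinant equation we need. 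I do not expect any further obstacle beyond careful case-checking of the low-$d$ arithmetic.
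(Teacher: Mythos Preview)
Your proposal is correct and follows essentially the same approach as the paper: both assume $d\ge 2$, apply the MGI with position vector $(q_1\cdots q_{d'})\curvearrowleft_t(p_1\cdots p_d)$ and patterns $\zeta\curvearrowleft_t(\sigma\oplus e_{p_1})$ and $\zeta\curvearrowleft_t(\tau\oplus e_{p_1})$, and use Lemma~\ref{lemma4.1.1} to kill all but the determinant terms. The only cosmetic difference is which factor in each ``outside'' product $\underline{G}^{(\zeta\oplus e_{q_j})\curvearrowleft_t(\sigma\oplus e_{p_1})}\cdot\underline{G}^{(\eta\oplus e_{q_j})\curvearrowleft_t(\tau\oplus e_{p_1})}$ you identify as zero---you use the $\zeta\oplus e_{q_j}$ factor, the paper uses the $\eta\oplus e_{q_j}$ factor---but both lie in the same vanishing row $\underline{G}(t)^{\sigma\oplus e_{p_1}}$ (respectively $\underline{G}(t)^{\tau\oplus e_{p_1}}$), so the choice is immaterial.
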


\begin{proof}
For a contradiction assume $d\geq 2$.  Let $x_{\zeta}=\begin{pmatrix}
\underline{G}^{\zeta\curvearrowleft_{t} \sigma}\\
\underline{G}^{\zeta\curvearrowleft_{t} \tau}
\end{pmatrix}$,
$x_{\eta}=\begin{pmatrix}
\underline{G}^{\eta\curvearrowleft_{t} \sigma}\\
\underline{G}^{\eta\curvearrowleft_{t} \tau}
\end{pmatrix}$. These are linearly independent by definition.
 We will prove that $\underline{G}^{\zeta\curvearrowleft_{t} \sigma}\underline{G}^{\eta\curvearrowleft_{t} \tau}-
\underline{G}^{\zeta\curvearrowleft_{t} \tau}\underline{G}^{\eta\curvearrowleft_{t} \sigma}=0$ by MGI to get a contradiction.

We will apply MGI twice. The first time
we use the pattern $\zeta\curvearrowleft_{t} (\sigma\oplus e_{p_{1}})$
with the position vector $(q_{1}q_{2}\cdots q_{d'})\curvearrowleft_{t}(p_{1}p_{2}\cdots p_{d})$. Note that the other pattern obtained by XOR
is $\eta\curvearrowleft_{t}(\tau \oplus e_{p_{1}})$.
This gives
\begin{equation}\label{MGI4.1}
\displaystyle\sum_{i=1}^{d}(-1)^{i+1}\underline{G}^{\zeta\curvearrowleft_{t} (\sigma\oplus e_{p_{1}}\oplus e_{p_{i}})}\underline{G}^{\eta\curvearrowleft_{t} (\tau
\oplus e_{p_{1}}\oplus e_{p_{i}})}=
\displaystyle\sum_{j=1}^{d'}(\pm\underline{G}^{(\zeta\oplus e_{q_{j}})\curvearrowleft_{t} (\sigma\oplus e_{p_{1}})}\underline{G}^{(\eta\oplus e_{q_{j}})\curvearrowleft_{t} (\tau\oplus e_{p_{1}})}),
\end{equation}

The second time we use the same position vector $(q_{1}q_{2}\cdots q_{d'})\curvearrowleft_{t}(p_{1}p_{2}\cdots p_{d})$ with
the pattern $\zeta\curvearrowleft_{t} (\tau
\oplus e_{p_{1}})$.
Note that the other pattern obtained by XOR
is $\eta\curvearrowleft_{t} (\sigma\oplus e_{p_{1}})$. This gives

\begin{equation}\label{MGI4.2}
\displaystyle\sum_{i=1}^{d}(-1)^{i+1}\underline{G}^{\zeta\curvearrowleft_{t} (\tau
\oplus e_{p_{1}}\oplus e_{p_{i}})}\underline{G}^{\eta\curvearrowleft_{t} (\sigma\oplus e_{p_{1}}\oplus e_{p_{i}})}=
\displaystyle\sum_{j=1}^{d'}(\pm\underline{G}^{(\zeta\oplus e_{q_{j}})\curvearrowleft_{t} (\tau\oplus e_{p_{1}})}\underline{G}^{(\eta\oplus e_{q_{j}})\curvearrowleft_{t} (\sigma\oplus e_{p_{1}})}).
\end{equation}
 Note that (\ref{MGI4.1}) and (\ref{MGI4.2}) are symmetric: By switching $\sigma$ and $\tau$, we will go from (\ref{MGI4.1}) to (\ref{MGI4.2}).

If $d=2$, from (\ref{MGI4.1}), we have
\begin{center}
$\underline{G}^{\zeta\curvearrowleft_{t} \sigma}\underline{G}^{\eta\curvearrowleft_{t} \tau}-
\underline{G}^{\zeta\curvearrowleft_{t} \tau}\underline{G}^{\eta\curvearrowleft_{t} \sigma}=\displaystyle\sum_{j=1}^{d'}(\pm\underline{G}^{(\zeta\oplus e_{q_{j}})\curvearrowleft_{t} (\sigma\oplus e_{p_{1}})}\underline{G}^{(\eta\oplus e_{q_{j}})\curvearrowleft_{t} (\tau\oplus e_{p_{1}})}$).
\end{center}
From Lemma~\ref{lemma4.1.1}, by taking $\alpha = \tau\oplus e_{p_{1}}$,
we get
$\underline{G}^{(\eta\oplus e_{q_{j}})\curvearrowleft_{t} (\tau\oplus e_{p_{1}})}=0$ for $1\leq j\leq d'$.
Thus $\underline{G}^{\zeta\curvearrowleft_{t} \sigma}\underline{G}^{\eta
\curvearrowleft_{t} \tau}-\underline{G}^{\zeta\curvearrowleft_{t} \tau}\underline{G}^{\eta\curvearrowleft_{t} \sigma}=0$.
This contradicts that
$x_{\zeta}=\begin{pmatrix}
\underline{G}^{\zeta\curvearrowleft_{t} \sigma}\\
\underline{G}^{\zeta\curvearrowleft_{t} \tau}
\end{pmatrix}$ and
$x_{\eta}=\begin{pmatrix}
\underline{G}^{\eta\curvearrowleft_{t} \sigma}\\
\underline{G}^{\eta
\curvearrowleft \tau}
\end{pmatrix}$
are linearly independent, so $d\neq 2$.

If $d>2$, then $\underline{G}^{\zeta\curvearrowleft_{t} (\sigma\oplus e_{p_{1}}\oplus e_{p_{i}})}=0$
and $\underline{G}^{\zeta\curvearrowleft_{t} (\tau\oplus e_{p_{1}}\oplus e_{p_{i}})}=0$,
 for $i>1$ by Lemma \ref{lemma4.1.1}. From (\ref{MGI4.1})
and (\ref{MGI4.2}), we have
\begin{equation}\label{MGI4.3}
\underline{G}^{\zeta\curvearrowleft_{t} \sigma}\underline{G}^{\eta
\curvearrowleft_{t} \tau}=
\displaystyle\sum_{j=1}^{d'}(\pm\underline{G}^{(\zeta\oplus e_{q_{j}})\curvearrowleft_{t} (\sigma\oplus e_{p_{1}})}\underline{G}^{(\eta\oplus e_{q_{j}})\curvearrowleft_{t} (\tau\oplus e_{p_{1}})}),
\end{equation}

\begin{equation}\label{MGI4.4}
\underline{G}^{\zeta\curvearrowleft_{t} \tau}\underline{G}^{\eta\curvearrowleft_{t} \sigma}=
\displaystyle\sum_{j=1}^{d'}(\pm\underline{G}^{(\zeta\oplus e_{q_{j}})\curvearrowleft_{t} (\tau\oplus e_{p_{1}})}\underline{G}^{(\eta\oplus e_{q_{j}})\curvearrowleft_{t} (\sigma\oplus e_{p_{1}})}).
\end{equation}
In the right hand side of (\ref{MGI4.3}) and (\ref{MGI4.4}),
$\underline{G}^{(\eta\oplus e_{q_{j}})\curvearrowleft_{t} (\tau\oplus e_{p_{1}})}=0,$
$\underline{G}^{(\eta\oplus e_{q_{j}})\curvearrowleft_{t} (\sigma\oplus e_{p_{1}})}=0$ for $1\leq j\leq d'$ by Lemma~\ref{lemma4.1.1},
so $\underline{G}^{\zeta\curvearrowleft_{t} \sigma}\underline{G}^{\eta
\curvearrowleft_{t} \tau}=0$ and
$\underline{G}^{\zeta\curvearrowleft_{t} \tau}\underline{G}^{\eta\curvearrowleft_{t} \sigma}=0$.
This is also a contradiction to the linear independence of
$x_{\zeta}=\begin{pmatrix}
\underline{G}^{\zeta\curvearrowleft_{t} \sigma}\\
\underline{G}^{\zeta\curvearrowleft_{t} \tau}
\end{pmatrix}$ and
$x_{\eta}=\begin{pmatrix}
\underline{G}^{\eta\curvearrowleft_{t} \sigma}\\
\underline{G}^{\eta
\curvearrowleft \tau}
\end{pmatrix}$.
It follows that $d=1$.
\end{proof}

From Lemma \ref{lemma4.2},
we have $\sigma\oplus \tau=e_{p_{1}}$
and  $\zeta \oplus \eta = e_{q_{1}}\oplus e_{q_{2}}\oplus \cdots \oplus e_{q_{d'}}$.

\begin{lemma}
  \label{lemma4.4}
For the given $\sigma$ and $\tau$,
the Hamming distance $d'$ between the minimizing
$\zeta$ and $\eta$ is 1.
  \end{lemma}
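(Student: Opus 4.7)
The plan is to mirror the argument of Lemma~\ref{lemma4.2}, with the roles of the $t$-th block and the outside $(n-1)\ell$ blocks interchanged. We argue by contradiction: assume $d' \ge 2$. Since Lemma~\ref{lemma4.2} has already given us $d=1$, we have $\sigma \oplus \tau = e_{p_{1}}$, so the piece of the position vector that lies inside the $t$-th block is just the single position $\{p_{1}\}$. This will ensure that one side of each MGI we write down contains only one term, namely the one we are trying to control.

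First I would apply MGI, following the recipe in the Remark, with position vector $(q_{1}q_{2}\cdots q_{d'})\curvearrowleft_{t}(p_{1})$ and pattern $\zeta\curvearrowleft_{t}\tau$. Since there is only one position inside the $t$-th block, the ``inside'' contribution is the single product $\pm\,\underline{G}^{\zeta\curvearrowleft_{t}\sigma}\underline{G}^{\eta\curvearrowleft_{t}\tau}$, and the ``outside'' contributions form a sum of $d'$ terms of the shape $\pm\,\underline{G}^{(\zeta\oplus e_{q_{j}})\curvearrowleft_{t}\tau}\underline{G}^{(\eta\oplus e_{q_{j}})\curvearrowleft_{t}\sigma}$. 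Then I would apply MGI a second time with the same position vector but the symmetric pattern $\zeta\curvearrowleft_{t}\sigma$, producing the analogous identity with $\sigma$ and $\tau$ swapped, isolating the inside product $\pm\,\underline{G}^{\zeta\curvearrowleft_{t}\tau}\underline{G}^{\eta\curvearrowleft_{t}\sigma}$.

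The crux of the proof is that under the assumption $d' \ge 2$ every one of the outside terms vanishes, thanks to Lemma~\ref{lemma4.1.1}. Indeed, for any $j$, take $\beta = \zeta\oplus e_{q_{j}} \in\{0,1\}^{(n-1)\ell}$; then $\mathrm{wt}(\beta\oplus\zeta)=1$ and $\mathrm{wt}(\beta\oplus\eta)=d'-1$, which are both strictly between $0$ and $d'=\mathrm{wt}(\zeta\oplus\eta)$ whenever $d'\ge 2$. Hence $x_{\beta}$ vanishes identically, killing the $\tau$-entry that appears in the corresponding term; an identical computation with $\beta = \eta\oplus e_{q_{j}}$ kills the $\sigma$-entry, and the same reasoning handles the terms of the second MGI. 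Thus both identities collapse to $\underline{G}^{\zeta\curvearrowleft_{t}\sigma}\underline{G}^{\eta\curvearrowleft_{t}\tau}=0$ and $\underline{G}^{\zeta\curvearrowleft_{t}\tau}\underline{G}^{\eta\curvearrowleft_{t}\sigma}=0$, so their difference vanishes, contradicting the linear independence of $x_{\zeta}$ and $x_{\eta}$. Therefore $d'=1$.

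The main (mild) obstacle is bookkeeping the signs and making sure the labels inside and outside the $t$-th block are consistent when invoking the splicing $\curvearrowleft_{t}$; however, as noted in the Remark, since all outside terms are shown to be zero, the sign ambiguity is harmless. No new combinatorial ideas beyond those in the proof of Lemma~\ref{lemma4.2} are needed, so the argument reduces to a routine transposition of that earlier template.
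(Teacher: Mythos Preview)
Your proposal is correct and follows essentially the same approach as the paper's own proof: the same position vector $(q_{1}\cdots q_{d'})\curvearrowleft_{t}(p_{1})$, the same two patterns $\zeta\curvearrowleft_{t}\tau$ and $\zeta\curvearrowleft_{t}\sigma$, and the same appeal to Lemma~\ref{lemma4.1.1} with $\beta=\zeta\oplus e_{q_{j}}$ to annihilate the outside terms. The only cosmetic difference is that you also invoke $\beta=\eta\oplus e_{q_{j}}$ to kill the second factor in each outside product, whereas the paper is content to kill just the $\zeta\oplus e_{q_{j}}$ factor; either suffices.
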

\begin{proof}
For a contradiction assume $d'\geq 2$.  
 We apply MGI twice, with the same position vector $(q_{1}q_{2}\cdots q_{d'})\curvearrowleft_{t} p_{1}$ and
 the patterns $\zeta\curvearrowleft_{t}(\sigma\oplus e_{p_{1}})
= \zeta\curvearrowleft_{t} \tau$
and $\zeta\curvearrowleft_{t}(\tau\oplus e_{p_{1}})
= \zeta\curvearrowleft_{t} \sigma$ respectively.

\begin{equation}
\underline{G}^{\zeta\curvearrowleft_{t}\sigma}\underline{G}^{\eta
\curvearrowleft_{t}\tau}=
\displaystyle\sum_{j=1}^{d'}(\pm\underline{G}^{(\zeta\oplus e_{q_{j}})\curvearrowleft_{t}\tau}
\underline{G}^{(\eta\oplus e_{q_{j}})\curvearrowleft_{t}\sigma}),
\end{equation}
\begin{equation}
\underline{G}^{\zeta\curvearrowleft_{t}\tau}\underline{G}^{\eta
\curvearrowleft_{t}\sigma}=
\displaystyle\sum_{j=1}^{d'}(\pm\underline{G}^{(\zeta\oplus e_{q_{j}})\curvearrowleft_{t}\sigma}
\underline{G}^{(\eta\oplus e_{q_{j}})\curvearrowleft_{t}\tau}).
\end{equation}


By Lemma~\ref{lemma4.1.1},
we have $\underline{G}^{(\zeta\oplus e_{q_{j}})\curvearrowleft_{t}\tau}=0$
and $\underline{G}^{(\zeta\oplus e_{q_{j}})\curvearrowleft_{t}\sigma}=0$
 for $1\leq j\leq d'$.
 So
$\underline{G}^{\zeta\curvearrowleft_{t}\sigma}\underline{G}^{\eta
\curvearrowleft_{t}\tau}=0$ and
$\underline{G}^{\zeta\curvearrowleft_{t}\tau}\underline{G}^{\eta\curvearrowleft_{t}\sigma}=0$.
This contradicts that $x_{\zeta}=\begin{pmatrix}
\underline{G}^{\zeta\curvearrowleft_{t} \sigma}\\
\underline{G}^{\zeta\curvearrowleft_{t} \tau}
\end{pmatrix}$,
$x_{\eta
}=\begin{pmatrix}
\underline{G}^{\eta\curvearrowleft_{t} \sigma}\\
\underline{G}^{\eta\curvearrowleft_{t} \tau}
\end{pmatrix}$
are linearly independent.
Thus $d'=1$.
\end{proof}

From Lemma \ref{lemma4.2} and Lemma \ref{lemma4.4}
we have
$\sigma\oplus \tau=e_{p_{1}}$
and  $\zeta \oplus \eta = e_{q_{1}}$.

The basis $M$
is a $2^{\ell}\times 2$ matrix with rows indexed by 
$\alpha\in\{0, 1\}^{\ell}$. Let $M^{\alpha}$ denote
the row with index $\alpha$. 
Let $M^{\alpha_{1}, \alpha_{2}, \cdots, \alpha_{s}}$
denote the sub-matrix of $M$ whose rows are $\alpha_{1}, \alpha_{2}, \cdots, \alpha_{s}$.
Then we have the following corollary.
\begin{coro}\label{M-has-two-submatrix}
$M^{\sigma, \tau}$
is invertible.
\end{coro}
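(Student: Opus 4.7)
The plan is to deduce invertibility of the $2\times 2$ matrix $M^{\sigma,\tau}$ directly from the linear independence of the two rows $\underline{G}(t)^{\sigma}$ and $\underline{G}(t)^{\tau}$ that was built into the choice of $\sigma$ and $\tau$. The bridge is Lemma~\ref{from vec to mat}, which tells us that $\underline{G}(t) = M\, G(t)\, (M^{\tt T})^{\otimes(n-1)}$. Reading off rows, the $\alpha$-th row of $\underline{G}(t)$ satisfies
\begin{equation*}
\underline{G}(t)^{\alpha} \;=\; M^{\alpha}\, G(t)\, (M^{\tt T})^{\otimes(n-1)},
\end{equation*}
so each row of $\underline{G}(t)$ is obtained by multiplying the corresponding row of $M$ (a vector of length $2$) on the right by the fixed matrix $G(t)\,(M^{\tt T})^{\otimes(n-1)}$.

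The key step is then the following observation. Suppose for contradiction that $M^{\sigma,\tau}$ is not invertible. Since $M$ has only two columns, this means the two rows $M^{\sigma}$ and $M^{\tau}$ are linearly dependent, so there exist scalars $\lambda,\mu$ not both zero with $\lambda M^{\sigma} + \mu M^{\tau} = 0$. Multiplying on the right by $G(t)\,(M^{\tt T})^{\otimes(n-1)}$ gives $\lambda \underline{G}(t)^{\sigma} + \mu \underline{G}(t)^{\tau} = 0$, contradicting the linear independence of $\underline{G}(t)^{\sigma}$ and $\underline{G}(t)^{\tau}$ guaranteed by the definition of $\sigma,\tau$. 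Hence $M^{\sigma,\tau}$ has rank $2$ and is therefore invertible.

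There is really no obstacle here; the statement is a one-line consequence of Lemma~\ref{from vec to mat} together with the defining property of $\sigma,\tau$, and the fact that $M$ has exactly $k=2$ columns so that two linearly independent rows of $M$ already span its column space and form an invertible $2\times 2$ submatrix. The only thing to be mildly careful about is that the implication goes in the right direction: linear dependence of the $M$-rows forces linear dependence of the $\underline{G}(t)$-rows (not the converse), which is exactly what we need.
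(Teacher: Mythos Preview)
Your proof is correct and follows essentially the same approach as the paper: both deduce that $M^{\sigma,\tau}$ has rank $2$ from the fact that $M^{\sigma,\tau}\,G(t)\,(M^{\tt T})^{\otimes(n-1)}$ (the $\sigma,\tau$ rows of $\underline{G}(t)$) has rank $2$. The only cosmetic difference is that the paper exhibits a specific $2\times 2$ invertible submatrix indexed by the columns $\zeta,\eta$, whereas you argue directly from the linear independence of the full rows $\underline{G}(t)^{\sigma}$ and $\underline{G}(t)^{\tau}$; your version is slightly cleaner in that it does not need to mention $\zeta,\eta$ at all.
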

\begin{proof}
Since $\begin{pmatrix}
\underline{G}^{\zeta\curvearrowleft_{t} \sigma}&\underline{G}^{\eta\curvearrowleft_{t} \sigma}\\
\underline{G}^{\zeta\curvearrowleft_{t} \tau}&\underline{G}^{\eta\curvearrowleft_{t} \tau}
\end{pmatrix}$ is a sub-matrix of
$M^{\sigma, \tau}
G(t)(M^{\tt{T}})^{\otimes (n-1)}$,
$M^{\sigma, \tau}$ has rank at least 2.
Since $M^{\sigma, \tau}$ is a $2\times 2$ matrix, it
follows that $M^{\sigma, \tau}$ has rank exactly 2 and is invertible.
\end{proof}


Note that $(M^{\sigma, \tau})^{\otimes n}G$ is a column vector 
obtained 
by taking only entries of $\underline{G}$ with index values
either $\sigma$ or $\tau$.
It is a column vector  of dimension $2^{n}$ and we denote it by $\underline{G}^{*\leftarrow\sigma, \tau}$.

\begin{theorem}
\label{coro5.2}
$\underline{G}^{*\leftarrow\sigma, \tau}=(M^{\sigma, \tau})^{\otimes n}G$ is the standard signature of a generator matchgate of arity $n$ and
{\rm rank}($\underline{G}^{*\leftarrow\sigma, \tau}(t)$)=2 for some $t$.
\end{theorem}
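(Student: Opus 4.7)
The plan is to prove the two parts of the statement essentially independently. The rank assertion will follow from the block structure of basis changes (Lemma~\ref{from vec to mat}) combined with Corollary~\ref{M-has-two-submatrix}, while the claim that $\underline{G}^{*\leftarrow\sigma,\tau}$ is a standard matchgate signature reduces to verifying the Parity Condition and the Matchgate Identities, after which Theorem~\ref{MGI} closes the argument.

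For the rank claim, I apply Lemma~\ref{from vec to mat} with the $2^\ell\times 2$ matrix $M$ replaced by the $2\times 2$ sub-basis $M^{\sigma,\tau}$, obtaining $\underline{G}^{*\leftarrow\sigma,\tau}(t) = M^{\sigma,\tau}\,G(t)\,\bigl((M^{\sigma,\tau})^{\tt{T}}\bigr)^{\otimes(n-1)}$. Choose $t$ with ${\rm rank}(G(t))=2$, which exists by the full-rank hypothesis on $G$. Corollary~\ref{M-has-two-submatrix} shows that $M^{\sigma,\tau}$ is invertible, so its transposed $(n-1)$-fold tensor power on the right is also invertible, and Lemma~\ref{lemma 4.1} then yields ${\rm rank}(\underline{G}^{*\leftarrow\sigma,\tau}(t)) = {\rm rank}(G(t)) = 2$.

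For the standard-signature claim, introduce the lift map $\phi\colon\{0,1\}^{n}\to\{0,1\}^{n\ell}$ sending $b_{1}\cdots b_{n}$ to the block string whose $i$-th block equals $\sigma$ if $b_i=0$ and $\tau$ if $b_i=1$; then the $b$-th entry of $\underline{G}^{*\leftarrow\sigma,\tau}$ is $\underline{G}^{\phi(b)}$. The key structural input, supplied by Lemmas~\ref{lemma4.2} and~\ref{lemma4.4}, is that $\tau = \sigma\oplus e_{p_{1}}$, so $\phi(b\oplus e_{i}) = \phi(b)\oplus e_{(i-1)\ell+p_{1}}$: a single-bit flip in the small index corresponds to a single-bit flip at a predictable position in the large index. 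From ${\rm wt}(\phi(b)) \equiv n\cdot {\rm wt}(\sigma)+{\rm wt}(b) \pmod{2}$ the Parity Condition on $\underline{G}$ transfers to a Parity Condition on $\underline{G}^{*\leftarrow\sigma,\tau}$ (with an even/odd swap when $n\cdot{\rm wt}(\sigma)$ is odd). To verify MGI, for any pattern $\alpha'\in\{0,1\}^{n}$ and position vector $P'=\{p_{1}'<\cdots<p_{s}'\}\subseteq [n]$, apply the large MGI to $\underline{G}$ with pattern $\phi(\alpha')$ and position vector $\{(p_{j}'-1)\ell+p_{1}:1\le j\le s\}$. Each elementary XOR $e_{(p_{j}'-1)\ell+p_{1}}$ in the large MGI toggles the $p_{j}'$-th block between $\sigma$ and $\tau$, so every product term in the large MGI is the image under $\phi$ of the corresponding product term in the small MGI, and the signs $(-1)^{j}$ line up because the lifted positions appear in the same left-to-right order as $p_{1}'<\cdots<p_{s}'$. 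Hence the small MGI holds identically, and Theorem~\ref{MGI} delivers the realizability.

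The main obstacle, and the point where the preceding lemmas pay off, is ensuring that the block toggle between $\sigma$ and $\tau$ is effected by a \emph{single} elementary basis vector $e_{p_{1}}$; without $d=1$ from Lemma~\ref{lemma4.2}, a block toggle would correspond to a multi-bit XOR in the large signature, and the large MGI would spawn residual products that need not vanish term by term. Once the single-bit lift is in hand, both Parity and MGI transfer cleanly and the remainder of the proof is notational bookkeeping.
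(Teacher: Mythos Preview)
Your proof is correct but takes a genuinely different route from the paper's. The paper establishes that $\underline{G}^{*\leftarrow\sigma,\tau}$ is a standard signature by an explicit gadget construction: starting from a matchgate $\Gamma$ realizing $\underline{G}$, it appends a weight-$1$ pendant edge to each external node whose corresponding bit in $\sigma$ is $1$, and then in each block keeps only the $p_{1}$-th node as external. Because $\sigma$ and $\tau$ differ only at position $p_{1}$, the value $0$ or $1$ at the single remaining external node in a block encodes exactly $\sigma$ or $\tau$ in that block, so the resulting matchgate has standard signature $\underline{G}^{*\leftarrow\sigma,\tau}$. The rank claim is handled identically to yours.

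Your argument instead bypasses any matchgate construction and verifies the Parity Condition and Matchgate Identities directly via the lift $\phi$, then invokes Theorem~\ref{MGI}. This is entirely valid and, amusingly, is closer in spirit to the algebraic philosophy the paper advertises in its introduction, even though for this particular lemma the paper reverts to gadgets. The paper's gadget construction is shorter and has the advantage of reusing verbatim for Lemma~\ref{tclemma} (where one block is left untouched), while your approach makes transparent exactly how the single-bit difference $d=1$ from Lemma~\ref{lemma4.2} is the hinge on which the MGI transfer turns---a point you rightly flag as the main obstacle.
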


\begin{proof}
Let $\Gamma$ be a matchgate realizing the standard signature $\underline{G}=M^{\otimes n}G$. Note that $\Gamma$ has $n\ell$ external nodes.
For every block with $\ell$ nodes, for  $1\leq i\leq \ell$,
 if the $i$-th bit of $\sigma$ is 1
then  we add an edge
 of weight 1 to the $i$-th external node, 
and the new node replaces it as an external node.
 If the $i$-th bit of $\sigma$ is 0
then we do nothing to it.
We get a new matchgate $\Gamma'$.
Next, we define  $\Gamma''$ from  $\Gamma'$:
 In each block of  $\ell$ external nodes of $\Gamma'$,
we pick only the ${p}_{1}$-th external node as an external node
of $\Gamma''$; all others are considered internal nodes of
$\Gamma''$.
Then we get a matchgate  $\Gamma''$
 realizing $\underline{G}^{*\leftarrow\sigma, \tau}=(M^{\sigma, \tau})^{\otimes n}G$.  Note that all of the bits of $\sigma, \tau$ are the same except the ${p}_{1}$-th bit.
Since $M^{\sigma, \tau}$ has rank 2, $\underline{G}^{*\leftarrow\sigma, \tau}(t)=M^{\sigma, \tau}G(t)(M^{\sigma, \tau})^{\tt{T}\otimes (n-1)}$ has rank 2
when $G(t)$ has rank 2, by Lemma \ref{lemma 4.1}.
\end{proof}

Note that $(M^{\sigma, \tau})^{\otimes(t-1)}\otimes M\otimes (M^{\sigma, \tau})^{\otimes(n-t)}G$ is a column vector of dimension $2^{n+\ell-1}$ and we denote it by $\underline{G}^{t^{c}\leftarrow\sigma, \tau}$.

\begin{lemma}
\label{tclemma}
$\underline{G}^{t^{c}\leftarrow\sigma, \tau}$
is the standard signature
of a generator matchgate of arity $n+ \ell-1$.
\end{lemma}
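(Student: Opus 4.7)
The plan is to adapt the matchgate construction used in the proof of Theorem~\ref{coro5.2}, applying its block-reduction procedure to every block of external nodes \emph{except} the $t$-th one, which is left untouched.

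Let $\Gamma$ be a matchgate realizing $\underline{G} = M^{\otimes n}G$, whose $n\ell$ external nodes are partitioned into $n$ blocks of $\ell$ consecutive nodes along the outer face. The argument of Theorem~\ref{coro5.2} uses two local operations on each block: (i) for every position $j$ with $\sigma_j = 1$, attach a weight-$1$ dangling edge to the $j$-th external node and promote the new endpoint to replace it as external; (ii) demote every external node of the block except the $p_1$-th one to an internal node. The net effect on the block's contribution to the signature is to replace the blockwise factor $M$ by $M^{\sigma,\tau}$, since only the two patterns $\sigma$ and $\tau$ (agreeing everywhere except at position $p_1$) remain selectable through the one surviving external node.

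I would perform these two operations on every block $i \neq t$ and leave the $t$-th block intact. Because each block is modified by local operations restricted to its own segment of the outer face, planarity and the counterclockwise labeling of the remaining external nodes are preserved, and the resulting graph is a legitimate matchgate $\Gamma''$ of arity $(n-1)\cdot 1 + \ell = n+\ell-1$. By the blockwise version of Lemma~\ref{from vec to mat}, the standard signature of $\Gamma''$ is
\[
(M^{\sigma,\tau})^{\otimes(t-1)} \otimes M \otimes (M^{\sigma,\tau})^{\otimes(n-t)}\, G,
\]
which is exactly $\underline{G}^{t^{c}\leftarrow\sigma,\tau}$.

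No step should present a real obstacle: the construction is a direct specialization of the one already carried out in Theorem~\ref{coro5.2}, and the only point worth double-checking is that skipping a single block does not disturb the outer-face ordering needed for the matchgate labeling convention, which follows immediately from the locality of the operations.
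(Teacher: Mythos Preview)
Your proposal is correct and follows essentially the same approach as the paper's own proof: leave the $t$-th block of external nodes untouched and, on every other block, apply the dangling-edge-plus-demotion construction from Theorem~\ref{coro5.2} to reduce that block to the single $p_1$-th external node, yielding a matchgate of arity $n+\ell-1$ whose standard signature is $(M^{\sigma,\tau})^{\otimes(t-1)}\otimes M\otimes (M^{\sigma,\tau})^{\otimes(n-t)}G$. The paper's argument is slightly terser but identical in content.
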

\begin{proof}
The proof of this lemma is similar to Theorem~\ref{coro5.2}.
Let $\Gamma$ be a matchgate realizing the standard signature $\underline{G}=M^{\otimes n}G$.  $\Gamma$ has $n\ell$ external nodes. 
We do nothing to the $t$-th block.
For the other blocks, we add an edge of weight 1 to the $i$-th external node if the $i$-th bit of $\sigma$ is 1 and do nothing to it if the $i$-th bit of $\sigma$ is 0 for $1\leq i\leq \ell$. Then we get a new matchgate $\Gamma'$.
Now take the external nodes of $\Gamma'$  in
the $t$-th block, and
the ${p}_{1}$-th external node in the other blocks, 
then we get a matchgate realizing $\underline{G}^{t^{c}\leftarrow\sigma, \tau}$.
\end{proof}

\begin{lemma}\label{gplemma2}
Let $\underline{G}=(\underline{G}^{i_{1}i_{2}\cdots i_{n}})$ be the standard signature of a generator matchgate $\Gamma$ of arity $n$.
If $\underline{G}$ has full rank and the $t$-th matrix form $\underline{G}(t)$ has rank 2, then  there exists a standard signature $\underline{R}$ realized by a recognizer matchgate of arity $n$  such that $\underline{G}(t)\underline{R}(t)=I_{2}$.
\end{lemma}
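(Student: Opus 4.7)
The plan is to build $\underline{R}$ in two stages. First, I produce a preliminary recognizer matchgate signature $\underline{R}^{(0)}$ whose matrix form $\underline{R}^{(0)}(t)$ makes $D := \underline{G}(t)\underline{R}^{(0)}(t)$ an invertible $2\times 2$ matrix. Second, I invoke a Group Property of matchgate signatures to realize $D^{-1}$ as an arity-$2$ transducer matchgate and compose it with $\underline{R}^{(0)}$ along the $t$-th wire, correcting $D$ down to $I_{2}$.

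For the preliminary recognizer, the natural candidate is
\[
\underline{R}^{(0)}_{\alpha} \;:=\; \underline{G}^{\alpha}\qquad\text{for all }\alpha\in\{0,1\}^{n},
\]
reinterpreting the entries of the generator signature as entries of a recognizer signature. Since the Parity Condition and the Matchgate Identities are identical polynomial conditions regardless of whether one reads the entries as a generator or a recognizer, Theorem~\ref{MGI} guarantees that $\underline{R}^{(0)}$ is the standard signature of some recognizer matchgate. In matrix form $\underline{R}^{(0)}(t) = \underline{G}(t)^{\tt T}$, so $D = \underline{G}(t)\,\underline{G}(t)^{\tt T}$. The Parity Condition of $\underline{G}$ forces the two rows $u_{0}, u_{1}$ of $\underline{G}(t)$ to have supports on disjoint column sets of complementary parity, which kills the off-diagonal entries of $D$; hence $D$ is diagonal.

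The crucial step is to verify that both diagonal entries of $D$ are nonzero. This is not automatic over $\mathbb{C}$, because a nonzero vector can be isotropic under the symmetric bilinear form $\langle u,v\rangle=\sum_{\beta}u_{\beta}v_{\beta}$. Here one exploits the rank-$2$ hypothesis on $\underline{G}(t)$ together with MGI: if the naive choice $\underline{R}^{(0)}=\underline{G}$ produces a vanishing diagonal entry of $D$, one alters $\underline{R}^{(0)}$ by a bit-flip symmetry $\alpha\mapsto\alpha\oplus e_{s}$ for some $s\neq t$ (which preserves being a matchgate signature since MGI are invariant under this involution) or by taking a suitable linear combination of such variants. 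The rank-$2$ assumption supplies enough algebraic freedom to land on some $\underline{R}^{(0)}$ whose $D$ is invertible. This is the main obstacle in the proof and is where MGI are used in an essential, non-routine way.

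Once $D$ is invertible, the Group Property of matchgate signatures supplies the final ingredient: every invertible $2\times 2$ diagonal matrix is realized as the standard signature of an arity-$2$ even transducer matchgate, so $D^{-1}$ is realized by some transducer $T$. Connecting $T$ to the $t$-th input wire of the matchgate realizing $\underline{R}^{(0)}$ produces a recognizer matchgate of arity $n$ whose standard signature $\underline{R}$ satisfies $\underline{R}(t) = \underline{R}^{(0)}(t)\, D^{-1}$. Therefore
\[
\underline{G}(t)\,\underline{R}(t) \;=\; \bigl(\underline{G}(t)\,\underline{R}^{(0)}(t)\bigr)\, D^{-1} \;=\; D\cdot D^{-1} \;=\; I_{2},
\]
as required.
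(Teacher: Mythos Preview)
Your proposal has a genuine gap at precisely the point you flag as ``the main obstacle.'' You correctly observe that $D=\underline{G}(t)\underline{G}(t)^{\tt T}$ is diagonal by parity, but its diagonal entries $\sum_{\beta}(\underline{G}^{\beta\curvearrowleft_t 0})^2$ and $\sum_{\beta}(\underline{G}^{\beta\curvearrowleft_t 1})^2$ can indeed vanish over $\mathbb{C}$. Your proposed fix---a single bit-flip $\alpha\mapsto\alpha\oplus e_s$ or ``a suitable linear combination of such variants''---is not carried out, and the second option is problematic on its face: MGI are quadratic, so a linear combination of matchgate signatures is not a matchgate signature in general. Nor do you prove that some single bit-flip always succeeds (and note that after a bit-flip the resulting $D$ becomes anti-diagonal rather than diagonal, so your final transducer step would need adjusting too). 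In short, you have located the crux and then skipped over it.

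The paper's proof takes a completely different route that sidesteps the isotropy problem entirely. It invokes the already-established Lemmas~\ref{lemma4.2} and~\ref{lemma4.4} (which absorbed all the nontrivial MGI work) to locate a $2\times 2$ invertible submatrix $A$ of $\underline{G}(t)$ whose row indices $0,1$ and whose column indices $\alpha,\beta\in\{0,1\}^{n-1}$ sit at Hamming distance~$1$. By parity this $A$ is diagonal or anti-diagonal. One then \emph{defines} $\underline{R}$ to have exactly two nonzero entries, namely the entries of $A^{-1}$ placed at the positions $\alpha\curvearrowleft_t 0$ and $\beta\curvearrowleft_t 1$ (or the other pair), with all other entries zero. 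A vector with only two nonzero entries whose indices differ in exactly two bits satisfies the Parity Condition and, by Lemma~\ref{lemma2.2}, all MGI trivially; hence $\underline{R}$ is a standard recognizer signature with no further verification. The identity $\underline{G}(t)\underline{R}(t)=I_2$ is then immediate from $A A^{-1}=I_2$. No inner-product or isotropy issue ever arises because $\underline{R}$ is not built from $\underline{G}$ via any bilinear pairing---it is simply a sparse vector chosen to invert a specific $2\times2$ block.
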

\begin{proof}
From Lemma \ref{lemma4.2} and  Lemma \ref{lemma4.4}, there is  a sub-matrix
 $A=\begin{pmatrix}
\underline{G}^{\alpha\curvearrowleft_{t}0}&\underline{G}^{\beta\curvearrowleft_{t}0}\\
\underline{G}^{\alpha\curvearrowleft_{t}1}&\underline{G}^{\beta\curvearrowleft_{t}1}
\end{pmatrix}$ of rank 2 in  $\underline{G}(t)$, where
$\alpha, \beta\in\{0, 1\}^{n-1}$ and ${\rm wt}(\alpha\oplus \beta)=1$.

By the Parity Condition, $\underline{G}^{\alpha\curvearrowleft_{t}0}=\underline{G}^{\beta\curvearrowleft_{t}1}=0$
or $\underline{G}^{\beta\curvearrowleft_{t}0}=\underline{G}^{\alpha\curvearrowleft_{t}1}=0$.
We prove the Lemma for
the case $\underline{G}^{\beta\curvearrowleft_{t}0}=\underline{G}^{\alpha\curvearrowleft_{t}1}=0$, i.e,
$A=\begin{pmatrix}
\underline{G}^{\alpha\curvearrowleft_{t}0}&0\\
0&\underline{G}^{\beta\curvearrowleft_{t}1}
\end{pmatrix}.$
 The other case is similar.

Let $\underline{R}$ be a vector of dimension $2^{n}$, where
\begin{equation*}\begin{pmatrix}
\underline{R}_{\alpha\curvearrowleft_{t}0}&0\\
0&\underline{R}_{\beta\curvearrowleft_{t}1}
\end{pmatrix}=A^{-1},
\end{equation*}
 and all other entries of $\underline{R}$ are zero.
It is obvious that $\underline{G}(t)\underline{R}(t)=I_{2}$ (Note that $R(t)$ is a $2^{n-2}\times 2$ matrix).
Furthermore, there are only two non-zero entries in $\underline{R}$ and the Hamming weight of XOR of their indices is 2, so it satisfies the Parity Condition and $\mathrm{MGI}$ by Lemma~\ref{lemma2.2}. Thus $\underline{R}$ is a standard signature.
\end{proof}

Let
$T=M(M^{\sigma, \tau})^{-1}$. Note that $T^{\sigma, \tau}=I_{2}$. Then
\begin{equation*}
\underline{G}=M^{\otimes n}G=T^{\otimes n}(M^{\sigma, \tau})^{\otimes n}G=T^{\otimes n}\underline{G}^{*\leftarrow\sigma, \tau}
\end{equation*}
and from that,
\begin{equation*}
\underline{G}^{t^{c}\leftarrow\sigma, \tau}=(T^{\sigma, \tau})^{\otimes(t-1)}\otimes T\otimes (T^{\sigma, \tau})^{\otimes(n-t)}
\cdot G^{*\leftarrow\sigma, \tau}.
\end{equation*}
The entries of $\underline{G}^{t^{c}\leftarrow\sigma, \tau}$ can be
 indexed by $i_{1}\ldots i_{t-1}i'_{1}\ldots i'_{\ell}i_{t+1}\ldots i_{n}$, where
$i_{j}, i'_{j'}\in\{0, 1\}$.
We denote the matrix form of $\underline{G}^{t^{c}\leftarrow\sigma, \tau}$ 
by $\underline{G}^{t^{c}\leftarrow\sigma, \tau}(t)$,
whose rows are indexed by $i'_{1}\ldots i'_{\ell}$ and columns
are indexed by $i_{1}\ldots i_{t-1}i_{t+1}\ldots i_{n}$.
 Then by Lemma~\ref{from vec to mat}, we have
\begin{equation}
\underline{G}^{t^{c}\leftarrow\sigma, \tau}(t)=
 T\underline{G}^{*\leftarrow\sigma, \tau}(t)((T^{\sigma, \tau})^{\tt{T}})^ {\otimes (n-1)}
 =T\underline{G}^{*\leftarrow\sigma, \tau}(t).
 \end{equation}

\begin{lemma}
\label{lemma4.7}
$T$ is the standard signature of a transducer matchgate
with $\ell$-output and 1-input.
\end{lemma}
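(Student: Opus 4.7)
The plan is to exhibit $T$ as the signature matrix of a planar composition of two already-established matchgates, whose composition is precisely a transducer with $\ell$ outputs and $1$ input. The algebraic inverse appearing in $T = M(M^{\sigma,\tau})^{-1}$ suggests that we should manufacture an actual matchgate playing the role of an inverse via the Group Property encoded in Lemma~\ref{gplemma2}.

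First, apply Lemma~\ref{gplemma2} not to $\underline{G}$ but to $\underline{G}^{*\leftarrow\sigma,\tau}$: by Theorem~\ref{coro5.2} this is a standard generator signature of arity $n$ whose $t$-th matrix form has rank $2$, so there is a standard recognizer signature $\underline{R}$ of arity $n$ with
\begin{equation*}
\underline{G}^{*\leftarrow\sigma,\tau}(t)\,\underline{R}(t) \;=\; I_{2}.
\end{equation*}
Combining this with the identity $\underline{G}^{t^{c}\leftarrow\sigma,\tau}(t)=T\,\underline{G}^{*\leftarrow\sigma,\tau}(t)$ displayed just before the lemma, right-multiplication by $\underline{R}(t)$ yields
\begin{equation*}
T \;=\; \underline{G}^{t^{c}\leftarrow\sigma,\tau}(t)\,\underline{R}(t),
\end{equation*}
which already has the correct shape of a $2^{\ell}\times 2$ matrix, matching an $\ell$-output, $1$-input signature.

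Next I would realize this matrix product as a planar matchgate composition. By Lemma~\ref{tclemma}, $\underline{G}^{t^{c}\leftarrow\sigma,\tau}$ is realized by a concrete generator matchgate $\Gamma_{1}$ of arity $n+\ell-1$, with $\ell$ output nodes lying in the $t$-th block and one output node in each of the remaining $n-1$ blocks, preserving the original counterclockwise cyclic order. Similarly $\underline{R}$ is realized by a recognizer matchgate $\Gamma_{2}$ of arity $n$ whose input nodes appear in order $1,2,\ldots,n$ on its outer face. Connect each of the $n-1$ non-$t$ outputs of $\Gamma_{1}$ by weight-$1$ edges to the correspondingly indexed non-$t$ inputs of $\Gamma_{2}$, keeping the cyclic labeling intact. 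What remains external is precisely the $\ell$ outputs in the $t$-th block of $\Gamma_{1}$ and the single input at position $t$ of $\Gamma_{2}$, so the composite is a transducer with $\ell$ outputs and $1$ input; by the matrix-product rule for matchgate composition (the same principle as Lemma~\ref{product transformer}) its standard signature is exactly $\underline{G}^{t^{c}\leftarrow\sigma,\tau}(t)\,\underline{R}(t)=T$.

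The main obstacle is the planarity and the correct cyclic ordering of the composite. The outputs of $\Gamma_{1}$ occur in the counterclockwise order $1,\ldots,t{-}1,\underbrace{t{\cdot}1,\ldots,t{\cdot}\ell}_{t\text{-th block}},t{+}1,\ldots,n$, while the inputs of $\Gamma_{2}$ occur in order $1,\ldots,n$. Placing $\Gamma_{2}$ above $\Gamma_{1}$ and connecting matching non-$t$ indices gives a non-crossing set of connecting edges that leaves the $t$-th block of $\Gamma_{1}$ free on one side and input $t$ of $\Gamma_{2}$ free on the other. The outer-face inspection then yields that the new external nodes---$\ell$ outputs followed (counterclockwise) by the single input---are arranged in exactly the order required by the definition of a transducer matchgate. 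Hence $T$ is the standard signature of a transducer matchgate with $\ell$-output and $1$-input, as claimed.
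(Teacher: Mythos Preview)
Your proof is correct and follows essentially the same approach as the paper: apply Lemma~\ref{gplemma2} to $\underline{G}^{*\leftarrow\sigma,\tau}$ (via Theorem~\ref{coro5.2}) to obtain $\underline{R}$ with $\underline{G}^{*\leftarrow\sigma,\tau}(t)\underline{R}(t)=I_2$, deduce $T=\underline{G}^{t^{c}\leftarrow\sigma,\tau}(t)\underline{R}(t)$, and then realize this product by connecting the non-$t$ external nodes of the matchgate for $\underline{G}^{t^{c}\leftarrow\sigma,\tau}$ (Lemma~\ref{tclemma}) to those of the matchgate for $\underline{R}$. Your discussion of the cyclic ordering and planarity is in fact a bit more explicit than the paper's one-line remark that ``all the connections are made respecting the planarity condition.''
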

\begin{proof}
By  Theorem \ref{coro5.2} and  Lemma \ref{gplemma2}, there exists a standard signature of a recognizer matchgate
 $\underline{R}$ such that $\underline{G}^{*\leftarrow\sigma, \tau}(t)\underline{R}(t)=I_{2}$.
 Let $\Gamma_{1}$ be the matchgate realizing $\underline{G}^{t^{c}\leftarrow\sigma, \tau}$ with output nodes
 $X_{1}, \ldots, X_{t-1}, Y_{1},  \ldots$, $Y_{\ell}, Z_{t+1}, \ldots$, $ Z_{n}$, and let $\Gamma_{2}$ be
 the matchgate realizing $\underline{R}$ with input nodes $W_{1}, W_{2},$
$ \ldots, W_{n}$. Then connect $X_{i}$ with $W_{i}$ for $1\leq i\leq t-1$ and
$Z_{i}$ with $W_{i}$ for $t+1\leq i\leq n$ by an edge with weight 1 respectively, we get a transducer matchgate $\Gamma$ with output nodes $Y_{1}, Y_{2}, \ldots, Y_{\ell}$ and input node $X_{t}$.
Then $T=\underline{G}^{t^{c}\leftarrow\sigma, \tau}(t)\underline{R}(t)$ is the standard signature of $\Gamma$.
Note that all the connections are made respecting the planarity condition.
\end{proof}

The proof of Lemma \ref{lemma4.7} is illustrated by Fig. 1.

The following theorem is the main theorem in \cite{string6};
the algebraic method in this section gives a simpler proof.
This method will be used in the next section to prove a similar
collapse theorem for domain size 3 and 4.
\begin{theorem}\label{collapse2}
Any holographic algorithm on a basis of size $\ell\geq 2$ and domain size 2 which employs at least one generator signature of full rank can be simulated
on a basis of size 1.
\end{theorem}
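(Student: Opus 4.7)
The plan is to use the transducer $T$ from Lemma~\ref{lemma4.7} as a bridge to collapse the basis size from $\ell$ to $1$, taking the new basis to be the $2\times 2$ invertible submatrix $M' := M^{\sigma,\tau}$. I will show that every matchgate in the original algorithm can be replaced by a new matchgate under $M'$ with exactly the same signature, so that the Holant contraction is preserved.

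First I would set up $T$. Starting from any full-rank generator signature $G$ employed by the algorithm, I would invoke Lemmas~\ref{lemma4.2} and~\ref{lemma4.4} to produce $\sigma, \tau, \zeta, \eta$ at Hamming distance exactly $1$, apply Corollary~\ref{M-has-two-submatrix} to conclude that $M' = M^{\sigma,\tau}$ is invertible, and then apply Lemma~\ref{lemma4.7} to realize $T = M(M')^{-1}$ as the standard signature of a transducer matchgate with $\ell$-output and $1$-input. The identity $M = T M'$ then gives $M^{\otimes n} = T^{\otimes n} (M')^{\otimes n}$ for every arity $n$, which I will use repeatedly.

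Next I would handle recognizers and generators separately. For each recognizer matchgate $B_j$ of arity $n_j \ell$ in the algorithm, I would connect a copy of $T$ to each of the $n_j$ blocks of $\ell$ input nodes of $B_j$. By Lemma~\ref{product transformer} the composite $B_j'$ is a recognizer matchgate of arity $n_j$ with standard signature $\underline{B_j}\,T^{\otimes n_j}$, and a direct computation gives
\[
R(B_j', M') = \underline{B_j}\,T^{\otimes n_j}\,(M')^{\otimes n_j} = \underline{B_j}\,(TM')^{\otimes n_j} = \underline{B_j}\,M^{\otimes n_j} = R(B_j, M).
\]
For each generator matchgate $A_i$ of arity $n_i\ell$, I would apply the combinatorial surgery from the proof of Theorem~\ref{coro5.2} to its planar realization $\Gamma_i$: in every block of $\ell$ external nodes, attach a weight-$1$ edge at each position where $\sigma$ has bit $1$ (promoting the new endpoint to external), and then retain only the $p_1$-th external node of each block. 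The resulting matchgate $A_i'$ has arity $n_i$ and standard signature $(M^{\sigma,\tau})^{\otimes n_i} G(A_i, M)$, so its signature under $M'$ is exactly $G(A_i, M)$.

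Once both generator and recognizer signatures are preserved under the replacement $(A_i, B_j, M) \mapsto (A_i', B_j', M')$, the Holant contraction is identical and the theorem follows. The main subtlety will be recognizing that the surgery of Theorem~\ref{coro5.2} is purely combinatorial and applies to \emph{any} matchgate $\Gamma_i$, not only to the specific one realizing the chosen full-rank signature used to define $\sigma,\tau$: the full-rank hypothesis is exploited only once, to build $T$. This asymmetric strategy---use a single full-rank generator to construct $T$, then apply $T$ uniformly to all recognizers and the $\sigma,\tau$-surgery uniformly to all generators---is the crux of the collapse.
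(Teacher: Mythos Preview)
Your proposal is correct and follows essentially the same approach as the paper: use the single full-rank generator to locate $\sigma,\tau$ and build the transducer $T=M(M^{\sigma,\tau})^{-1}$ via Lemma~\ref{lemma4.7}, then apply $T$ uniformly to all recognizers (Lemma~\ref{product transformer}) and the $\sigma,\tau$-surgery of Theorem~\ref{coro5.2} uniformly to all generators to realize everything over $M^{\sigma,\tau}$. Your explicit remark that the surgery in Theorem~\ref{coro5.2} is purely combinatorial and hence applies to every generator (not just the full-rank one) is exactly the point the paper relies on when it cites that theorem for all $G_j$.
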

\begin{proof}
Let $\underline{R}_{i}M^{\otimes m_i}=R_{i}$ for $1\leq i\leq r$ and $\underline{G}_{j}=M^{\otimes n_j}G_{j}$ for $1\leq j\leq g$, where $R_{i}, G_{j}$
 are recognizer and generator signatures that a holographic algorithm employs and $\underline{R}_{i}, \underline{G}_{j}$ are standard signatures.
Without loss of generality, let $G_{1}$ be of full rank.
We define $\sigma$ and $\tau$ in terms of $G_{1}$ and  apply 
Corollary~\ref{M-has-two-submatrix}.
The basis $M$ has a full rank sub-matrix $M^{\sigma, \tau}$, 
where ${\rm wt}(\sigma\oplus \tau)=1$, and $T=M(M^{\sigma, \tau})^{-1}$ is the standard signature of a transducer matchgate by Lemma \ref{lemma4.7}.
 Let $\underline{R}_{i}'=\underline{R}_{i}T^{\otimes m_i}$,
 then
 \begin{equation*}
 \underline{R}'_{i}(M^{\sigma, \tau})^{\otimes m_i}=R_{i},~~~~
 \underline{G}_{j}^{*\leftarrow \sigma, \tau}=(M^{\sigma, \tau})^{\otimes n_j}G_{j},
 \end{equation*}
  for $1\leq i\leq r$ and $1\leq j\leq g$, 
where $\underline{R}_{i}'$
and
 $\underline{G}_{j}^{*\leftarrow \sigma, \tau}$  are standard signatures by Lemma \ref{product transformer} and Theorem \ref{coro5.2}. This implies that $R_{i}, G_{j}$ are simultaneously realized on the basis $M^{\sigma, \tau}$ of size 1.
\end{proof} 

\section{Collapse Theorems on Domain Size 3 and 4}
In this section, assume that $G$ is a generator signature of full rank on domain size $k\geq 3$, the basis $M$ is a $2^{\ell}\times k$ matrix of rank $k$, and $\underline{G}=M^{\otimes n}G$ is a standard signature of arity $n\ell$.
Thus
 there exists $t\in[n]$ such that rank$(G(t))=k$.  From Lemma \ref{from vec to mat}, we have $\underline{G}(t)=MG(t)(M^{\tt{T}})^{\otimes (n-1)}$. Then by Lemma \ref{lemma 4.1}, rank$(\underline{G}(t))=k$. Because $k \ge 3$,
 we can define $\sigma, \tau\in\{0,1\}^{\ell}, \zeta, \eta\in\{0,1\}^{(n-1)\ell}$ as follows:

\begin{itemize}
\item $\sigma$ and $\tau$ have the same parity,
\item $\underline{G}(t)^{\sigma}$ and  $\underline{G}(t)^{\tau}$ are linearly independent,
\item ${\rm wt}(\sigma\oplus\tau)=\displaystyle\min_{u\neq v, u, v\in\{0, 1\}^{\ell}}\{{\rm wt}(u\oplus v) \mid \underline{G}(t)^{u}$ and  $\underline{G}(t)^{v}$
have the same parity and  are linearly independent$\}$.
\end{itemize}
Let $x_{\beta}=\begin{pmatrix}
\underline{G}^{\beta\curvearrowleft_{t} \sigma}\\
\underline{G}^{\beta\curvearrowleft_{t} \tau}
\end{pmatrix}$, then there exist $\zeta, \eta$ such that:

\begin{itemize}
\item $x_{\zeta}$ and $x_{\eta}$ are linearly independent,
\item ${\rm wt}(\zeta\oplus \eta)=\displaystyle\min_{u\neq v, u, v\in\{0, 1\}^{(n-1)\ell}}\{{\rm wt}(u\oplus v) \mid x_{u}$ and $x_{v}$
are linearly independent$\}$.
\end{itemize}
Then we directly have the following lemma that is similar to Lemma \ref{lemma4.1.1}.
\begin{lemma}\label{lemma5.1}
If there exists $\alpha\in\{0, 1\}^{\ell}$ that has the same parity as $\sigma$ such that $0<{\rm wt}(\sigma\oplus \alpha)<{\rm wt}(\sigma\oplus \tau)$ and
$0<{\rm wt}(\alpha\oplus \tau)<{\rm wt}(\sigma\oplus \tau)$, then $\underline{G}(t)^{\alpha}$ is identically zero. 
Similarly, If there exists $\beta$ such that $0<{\rm wt}(\eta\oplus \beta)<{\rm wt}(\eta\oplus \zeta)$ and
$0<{\rm wt}(\beta\oplus \zeta)<{\rm wt}(\eta\oplus \zeta)$, then $x_{\beta}$ is identically zero.

\end{lemma}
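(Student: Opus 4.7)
The plan is to mirror the minimality argument that underlies Lemma~\ref{lemma4.1.1}, with the single refinement that the parity constraint now enters the definition of $\sigma,\tau$. I would argue by contradiction in both parts, producing a ``better'' pair that violates the minimality of $\mathrm{wt}(\sigma\oplus\tau)$ (respectively $\mathrm{wt}(\zeta\oplus\eta)$).

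For the first claim, suppose $\underline{G}(t)^{\alpha}$ is not identically zero. Note that $\alpha$ has the same parity as $\sigma$, and since $\sigma$ and $\tau$ share parity, $\alpha$ also has the same parity as $\tau$. Now split on whether $\underline{G}(t)^{\alpha}$ is linearly independent from $\underline{G}(t)^{\sigma}$. If it is, then the pair $(\sigma,\alpha)$ consists of two same-parity indices whose rows are linearly independent, but $\mathrm{wt}(\sigma\oplus\alpha)<\mathrm{wt}(\sigma\oplus\tau)$, contradicting the minimality in the definition of $\sigma,\tau$. Otherwise $\underline{G}(t)^{\alpha}=c\underline{G}(t)^{\sigma}$ for some $c\neq 0$; then $\underline{G}(t)^{\alpha}$ and $\underline{G}(t)^{\tau}$ must be linearly independent (because $\underline{G}(t)^{\sigma}$ and $\underline{G}(t)^{\tau}$ are), so the pair $(\alpha,\tau)$ again has the required parity and independence, with $\mathrm{wt}(\alpha\oplus\tau)<\mathrm{wt}(\sigma\oplus\tau)$, once more contradicting the minimality. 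Both cases are impossible, so $\underline{G}(t)^{\alpha}\equiv 0$.

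The second claim follows by the same scheme applied to the $2$-dimensional vectors $x_{\beta}$, with the simplification that no parity condition enters the definition of $\zeta,\eta$. Assume for contradiction that $x_{\beta}\not\equiv 0$. Either $x_{\beta}$ is linearly independent from $x_{\zeta}$, giving the pair $(\zeta,\beta)$ with smaller Hamming distance than $\mathrm{wt}(\zeta\oplus\eta)$; or else $x_{\beta}=cx_{\zeta}$ with $c\neq 0$, in which case $x_{\beta}$ and $x_{\eta}$ are linearly independent, giving the pair $(\beta,\eta)$ with smaller Hamming distance. Either outcome contradicts the minimality of $\mathrm{wt}(\zeta\oplus\eta)$, so $x_{\beta}\equiv 0$.

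Because the argument is purely a definitional unwinding, there is really no ``hard step.'' The only point that requires a moment of care is the linear-dependence case: one must verify that replacing $\sigma$ (resp.\ $\zeta$) by $\alpha$ (resp.\ $\beta$) preserves linear independence against the partner $\tau$ (resp.\ $\eta$), which is automatic once the proportionality constant is nonzero. In particular, the parity hypothesis on $\alpha$ in the first part is not a technicality to be overcome but precisely the ingredient that makes $(\sigma,\alpha)$ and $(\alpha,\tau)$ admissible competitors in the parity-constrained minimization defining $\sigma,\tau$.
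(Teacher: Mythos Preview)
Your argument is correct and is exactly the ``definitional unwinding'' the paper has in mind: the paper does not give an explicit proof of this lemma, stating only that it follows directly from the definitions of $\sigma,\tau,\zeta,\eta$ (just as for Lemma~\ref{lemma4.1.1}). Your case split on linear dependence versus independence is the standard way to unpack that claim, and the parity observation you highlight is precisely why the same-parity hypothesis on $\alpha$ is needed.
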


Let
$\zeta \oplus \eta = e_{q_{1}}\oplus e_{q_{2}}\oplus \cdots \oplus e_{q_{d'}}$,
and
$\sigma\oplus \tau=e_{p_{1}}\oplus e_{p_{2}}\oplus \cdots \oplus e_{p_{d}}$.
Then we have the following lemma.


\begin{lemma}\label{lemma5.2}
The Hamming distance $d$ between the minimizing
$\sigma$ and $\tau$ is  2.
\end{lemma}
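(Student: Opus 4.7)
The plan is to mirror the argument of Lemma~\ref{lemma4.2} closely, using the same-parity analog of Lemma~\ref{lemma4.1.1} now available as Lemma~\ref{lemma5.1}. Since $\sigma,\tau$ have the same parity, $d=\mathrm{wt}(\sigma\oplus\tau)$ is even, so $d\ge 2$. I assume $d\ge 4$ for a contradiction.

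First I would observe that $d'=\mathrm{wt}(\zeta\oplus\eta)$ must also be even, hence $d'\ge 2$. Indeed, $x_\zeta,x_\eta$ are linearly independent and in particular nonzero, so at least one of $\underline{G}^{\zeta\curvearrowleft_t\sigma},\underline{G}^{\zeta\curvearrowleft_t\tau}$ is nonzero (and similarly for $\eta$); since $\mathrm{wt}(\sigma)\equiv\mathrm{wt}(\tau)\pmod 2$, the Parity Condition forces $\mathrm{wt}(\zeta)\equiv\mathrm{wt}(\eta)\pmod 2$. This also guarantees that the position vector $P=(q_1\cdots q_{d'})\curvearrowleft_t(p_1\cdots p_d)$ has even weight $d+d'$, so the MGI built from $P$ is nontrivial.

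Next I apply the MGI twice, following the recipe of the Remark preceding Lemma~\ref{lemma4.2}: once with pattern $\zeta\curvearrowleft_t(\sigma\oplus e_{p_1})$ (partner $\eta\curvearrowleft_t(\tau\oplus e_{p_1})$), and once with $\zeta\curvearrowleft_t(\tau\oplus e_{p_1})$ (partner $\eta\curvearrowleft_t(\sigma\oplus e_{p_1})$). In both applications the inside-$t$-block terms with $i\ge 2$ vanish: the index $\sigma\oplus e_{p_1}\oplus e_{p_i}$ has the same parity as $\sigma$, sits at Hamming distance $2$ from $\sigma$ and distance $d-2\ge 2$ from $\tau$, with both distances in $(0,d)$, so by Lemma~\ref{lemma5.1} the row $\underline{G}(t)^{\sigma\oplus e_{p_1}\oplus e_{p_i}}$ is identically zero; the same reasoning kills $\tau\oplus e_{p_1}\oplus e_{p_i}$. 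Hence only the $i=1$ term survives in each inside-block sum, giving $\pm\underline{G}^{\zeta\curvearrowleft_t\sigma}\underline{G}^{\eta\curvearrowleft_t\tau}$ and $\pm\underline{G}^{\zeta\curvearrowleft_t\tau}\underline{G}^{\eta\curvearrowleft_t\sigma}$ respectively, each equated to the corresponding outside-$t$-block sum involving the odd-parity row indices $\sigma\oplus e_{p_1},\tau\oplus e_{p_1}$.

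The goal is then to show that both outside-block sums vanish, yielding $\underline{G}^{\zeta\curvearrowleft_t\sigma}\underline{G}^{\eta\curvearrowleft_t\tau}=\underline{G}^{\zeta\curvearrowleft_t\tau}\underline{G}^{\eta\curvearrowleft_t\sigma}=0$, which directly contradicts the linear independence of $x_\zeta$ and $x_\eta$. The hard part is this last step, since the factors in the outside sums have the form $\underline{G}^{(\eta\oplus e_{q_j})\curvearrowleft_t(\tau\oplus e_{p_1})}$ (and its symmetric partner): the block-index $\tau\oplus e_{p_1}$ has parity opposite to $\sigma$, so the first part of Lemma~\ref{lemma5.1} does not apply, and the second part (which does give $x_{\eta\oplus e_{q_j}}\equiv 0$ via distances $1$ and $d'-1$ in $(0,d')$) only controls the rows $\sigma,\tau$, not $\tau\oplus e_{p_1}$. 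To resolve this, I plan to invoke an auxiliary MGI with position vector $P'=p_1\cdots p_d$ wholly inside the $t$-th block and base pattern $\beta\curvearrowleft_t(\sigma\oplus e_{p_1})$ for arbitrary $\beta$: the same Lemma~\ref{lemma5.1} reduction kills its $i\ge 2$ terms and yields $\underline{G}^{\beta\curvearrowleft_t\sigma}\underline{G}^{\beta\curvearrowleft_t\tau}=0$ for every $\beta$, i.e.\ the rows $\underline{G}(t)^\sigma$ and $\underline{G}(t)^\tau$ have disjoint column-supports. Combined with the linear independence of $x_\zeta,x_\eta$ this gives (WLOG) $\underline{G}^{\zeta\curvearrowleft_t\sigma},\underline{G}^{\eta\curvearrowleft_t\tau}\ne 0$ while $\underline{G}^{\zeta\curvearrowleft_t\tau}=\underline{G}^{\eta\curvearrowleft_t\sigma}=0$; then a further MGI with the same $P$ but base pattern $(\zeta\oplus e_{q_1})\curvearrowleft_t\sigma$ isolates $\underline{G}^{\zeta\curvearrowleft_t\sigma}\underline{G}^{\eta\curvearrowleft_t\tau}$ on one side while the second part of Lemma~\ref{lemma5.1} (applied to $x_{\zeta\oplus e_{q_1}\oplus e_{q_j}}$ for $j\ne 1$) zeroes out every other outside-block term, producing the desired contradiction.

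The single most delicate point is this last collapse of the outside-block sums: Lemma~\ref{lemma5.1}'s first part is silent about the opposite-parity indices $\sigma\oplus e_{p_1},\tau\oplus e_{p_1}$, so the argument must route around it by the combined use of the inside-block MGI (for disjoint supports) and an MGI whose base pattern has an extra $e_{q_1}$ factor (so that only the $q_1$-term among the outside positions survives, and that surviving term is the product $\underline{G}^{\zeta\curvearrowleft_t\sigma}\underline{G}^{\eta\curvearrowleft_t\tau}$ itself). Once this is pinned down, $d=2$ follows since $d$ is a positive even integer less than $4$.
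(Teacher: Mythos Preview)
Your plan tracks the paper's proof correctly through the two main MGI applications yielding the analogues of (\ref{MGI5.3}) and (\ref{MGI5.4}), and your ``disjoint supports'' observation via the inside-block-only MGI is valid (indeed, $\underline{G}^{\beta\curvearrowleft_t\sigma}\underline{G}^{\beta\curvearrowleft_t\tau}=0$ for all $\beta$ when $d\ge 4$). The gap is in your final step.

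When you apply the MGI with the full position vector $P=(q_1\cdots q_{d'})\curvearrowleft_t(p_1\cdots p_d)$ and pattern $(\zeta\oplus e_{q_1})\curvearrowleft_t\sigma$, the \emph{inside}-$t$-block terms are
\[
\sum_{i=1}^{d}\pm\,\underline{G}^{(\zeta\oplus e_{q_1})\curvearrowleft_t(\sigma\oplus e_{p_i})}\,\underline{G}^{(\eta\oplus e_{q_1})\curvearrowleft_t(\tau\oplus e_{p_i})}.
\]
Here the $t$-block indices $\sigma\oplus e_{p_i}$ and $\tau\oplus e_{p_i}$ have parity \emph{opposite} to that of $\sigma$, so the first clause of Lemma~\ref{lemma5.1} is silent on them; and the outside index $\zeta\oplus e_{q_1}$ is controlled by the second clause only through $x_{\zeta\oplus e_{q_1}}$, which speaks to rows $\sigma,\tau$, not to $\sigma\oplus e_{p_i}$. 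You address only the outside-block terms and never dispose of this inside-block sum, so the identity does not isolate $\underline{G}^{\zeta\curvearrowleft_t\sigma}\underline{G}^{\eta\curvearrowleft_t\tau}$. (You even flag earlier that Lemma~\ref{lemma5.1} ``is silent about the opposite-parity indices $\sigma\oplus e_{p_1},\tau\oplus e_{p_1}$''; the same obstruction reappears here.)

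The paper avoids this by choosing auxiliary position vectors that \emph{omit $p_1$} from the $t$-block, so that every inside-block term carries a $t$-block index of the form $\sigma\oplus e_{p_2}\oplus e_{p_i}$ or $\tau\oplus e_{p_1}\oplus e_{p_i}$, which has the \emph{same} parity as $\sigma$ and is therefore killed by Lemma~\ref{lemma5.1}. Concretely, assuming $\underline{G}^{\zeta\curvearrowleft_t\sigma}\neq 0$, one uses the MGI with pattern $\zeta\curvearrowleft_t(\sigma\oplus e_{p_2})$ and position vector $(q_1\cdots\hat q_j\cdots q_{d'})\curvearrowleft_t(p_2\cdots p_d)$ to force $\underline{G}^{(\eta\oplus e_{q_j})\curvearrowleft_t(\tau\oplus e_{p_1})}=0$ for every $j$, which annihilates the right side of (\ref{MGI5.3}); and the MGI with pattern $(\zeta\oplus e_{q_j})\curvearrowleft_t\sigma$ and position vector $q_j\curvearrowleft_t(p_2\cdots p_d)$ to force $\underline{G}^{(\zeta\oplus e_{q_j})\curvearrowleft_t(\tau\oplus e_{p_1})}=0$, annihilating the right side of (\ref{MGI5.4}). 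The key manoeuvre you are missing is precisely this: drop $p_1$ from the $t$-block of the auxiliary position vector so that the inside-block indices stay in the same-parity regime where Lemma~\ref{lemma5.1} applies.
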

\begin{proof}
$d\neq 1$ since $\sigma$ and $\tau$ have the same parity.

For a contradiction assume
$d>2$. Let $x_{\zeta}=\begin{pmatrix}
\underline{G}^{\zeta\curvearrowleft_{t} \sigma}\\
\underline{G}^{\zeta\curvearrowleft_{t} \tau}
\end{pmatrix}$,
$x_{\eta}=\begin{pmatrix}
\underline{G}^{\eta\curvearrowleft_{t} \sigma}\\
\underline{G}^{\eta\curvearrowleft_{t} \tau}
\end{pmatrix}$.
These are linearly independent by definition.
 We will prove that $\underline{G}^{\zeta\curvearrowleft_{t} \sigma}\underline{G}^{\eta\curvearrowleft_{t} \tau}-
\underline{G}^{\zeta\curvearrowleft_{t} \tau}\underline{G}^{\eta\curvearrowleft_{t} \sigma}=0$ by MGI to get a contradiction.


We will apply MGI twice. The first time
we use the pattern $\zeta\curvearrowleft_{t} (\sigma\oplus e_{p_{1}})$
with the position vector $(q_{1}q_{2}\cdots q_{d'})\curvearrowleft_{t}(p_{1}p_{2}\cdots p_{d})$. Note that the other pattern obtained by XOR
is $\eta\curvearrowleft_{t}(\tau \oplus e_{p_{1}})$.
This gives
\begin{equation}\label{MGI5.1}
\displaystyle\sum_{i=1}^{d}(-1)^{i+1}\underline{G}^{\zeta\curvearrowleft_{t} (\sigma\oplus e_{p_{1}}\oplus e_{p_{i}})}\underline{G}^{\eta\curvearrowleft_{t} (\tau
\oplus e_{p_{1}}\oplus e_{p_{i}})}=
\displaystyle\sum_{j=1}^{d'}(\pm\underline{G}^{(\zeta\oplus e_{q_{j}})\curvearrowleft_{t} (\sigma\oplus e_{p_{1}})}\underline{G}^{(\eta\oplus e_{q_{j}})\curvearrowleft_{t} (\tau\oplus e_{p_{1}})}).
\end{equation}

%

The second time we use the same position vector $(q_{1}q_{2}\cdots q_{d'})\curvearrowleft_{t}(p_{1}p_{2}\cdots p_{d})$ with
the pattern $\zeta\curvearrowleft_{t} (\tau
\oplus e_{p_{1}})$.
Note that the other pattern obtained by XOR
is $\eta\curvearrowleft_{t} (\sigma\oplus e_{p_{1}})$. This gives


\begin{equation}\label{MGI5.2}
\displaystyle\sum_{i=1}^{d}(-1)^{i+1}\underline{G}^{\zeta\curvearrowleft_{t} (\tau
\oplus e_{p_{1}}\oplus e_{p_{i}})}\underline{G}^{\eta\curvearrowleft_{t} (\sigma\oplus e_{p_{1}}\oplus e_{p_{i}})}=
\displaystyle\sum_{j=1}^{d'}(\pm\underline{G}^{(\zeta\oplus e_{q_{j}})\curvearrowleft_{t} (\tau\oplus e_{p_{1}})}\underline{G}^{(\eta\oplus e_{q_{j}})\curvearrowleft_{t} (\sigma\oplus e_{p_{1}})}),
\end{equation}

Since $d>2$, $\underline{G}^{\zeta\curvearrowleft_{t} (\sigma\oplus e_{p_{1}}\oplus e_{p_{i}})}=0$,
 $\underline{G}^{\zeta\curvearrowleft_{t} (\sigma\oplus e_{p_{1}}\oplus e_{p_{i}})}=0$ for $i>1$ from Lemma \ref{lemma5.1}. Then from (\ref{MGI5.1}) and (\ref{MGI5.2}) we have

\begin{equation}\label{MGI5.3}
\underline{G}^{\zeta\curvearrowleft_{t} \sigma}\underline{G}^{\eta\curvearrowleft_{t} \tau}=
\displaystyle\sum_{j=1}^{d'}(\pm\underline{G}^{(\zeta\oplus e_{q_{j}})\curvearrowleft_{t} (\sigma\oplus e_{p_{1}})}\underline{G}^{(\eta\oplus e_{q_{j}})\curvearrowleft_{t} (\tau\oplus e_{p_{1}})}),
\end{equation}

\begin{equation}\label{MGI5.4}
\underline{G}^{\zeta\curvearrowleft_{t} \tau}\underline{G}^{\eta\curvearrowleft_{t} \sigma}=
\displaystyle\sum_{j=1}^{d'}(\pm\underline{G}^{(\zeta\oplus e_{q_{j}})\curvearrowleft_{t} (\tau\oplus e_{p_{1}})}\underline{G}^{(\eta\oplus e_{q_{j}})\curvearrowleft_{t} (\sigma\oplus e_{p_{1}})}).
\end{equation}
Note that (\ref{MGI5.3}) and (\ref{MGI5.4}) are symmetric for $\sigma$ and $\tau$. By switching $\sigma$ and $\tau$, we will go from (\ref{MGI5.3}) to (\ref{MGI5.4}).
$x_{\zeta}=\begin{pmatrix}
\underline{G}^{\zeta\curvearrowleft_{t} \sigma}\\
\underline{G}^{\zeta\curvearrowleft_{t} \tau}
\end{pmatrix}\neq
\begin{pmatrix}
0\\0
\end{pmatrix}$
since
$x_{\zeta}=\begin{pmatrix}
\underline{G}^{\zeta\curvearrowleft_{t} \sigma}\\
\underline{G}^{\zeta\curvearrowleft_{t} \tau}
\end{pmatrix}$ and
$x_{\eta}=\begin{pmatrix}
\underline{G}^{\eta\curvearrowleft_{t} \sigma}\\
\underline{G}^{\eta\curvearrowleft_{t} \tau}
\end{pmatrix}$
are linearly independent. Assume that $\underline{G}^{\zeta\curvearrowleft_{t} \sigma}\neq 0$.
Let the pattern be $\zeta\curvearrowleft_{t}(\sigma\oplus e_{p_{2}})$ and the position vector be
$(q_{1}\cdots \hat{q}_{j}\cdots q_{d'})\curvearrowleft_{t}(p_{2}\cdots p_{d})$ for $1\leq j\leq d'$, where $q_{1}\cdots \hat{q}_{j}\cdots q_{d'}$ means deleting
$q_{j}$ from $q_{1}\cdots q_{j}\cdots q_{d'}$.
Note that the other pattern obtained by XOR is
$(\eta \oplus e_{q_{j}}) \curvearrowleft_{t} (\tau\oplus e_{p_{1}} \oplus e_{p_{2}})$.
%
Then we have
\begin{eqnarray*}\label{MGI5.6}
\begin{split}
&\underline{G}^{\zeta\curvearrowleft_{t} \sigma} \underline{G}^{(\eta\oplus e_{q_{j}})\curvearrowleft_{t} (\tau\oplus e_{p_{1}})}+
\displaystyle\sum_{i=3}^{d}(-1)^{i}\underline{G}^{\zeta\curvearrowleft_{t} (\sigma\oplus e_{p_{2}}\oplus e_{p_{i}})} \underline{G}^{(\eta
\oplus e_{q_{j}})
\curvearrowleft_{t} (\tau\oplus e_{p_{1}}\oplus e_{p_{2}}\oplus e_{p_{i}})}
\\
&=\displaystyle\sum_{1\leq u\leq d',u\neq j}(\pm\underline{G}^{(\zeta\oplus e_{q_{u}})
\curvearrowleft_{t} (\sigma\oplus e_{p_{2}})} \underline{G}^{(\eta\oplus e_{q_{j}}\oplus e_{q_{u}})\curvearrowleft_{t} (\tau\oplus e_{p_{1}}\oplus e_{p_{2}})}).
\end{split}
\end{eqnarray*}
$\underline{G}^{\zeta\curvearrowleft_{t} (\sigma\oplus e_{p_{2}}\oplus e_{p_{i}})}=0$ for $i>2$ and
$\underline{G}^{(\eta\oplus e_{q_{j}}\oplus e_{q_{u}})\curvearrowleft_{t} (\tau\oplus e_{p_{1}}\oplus e_{p_{2}})}=0$ from Lemma \ref{lemma5.1},
and by assumption $\underline{G}^{\zeta\curvearrowleft_{t} \sigma}\neq 0$,
 so
$\underline{G}^{(\eta\oplus e_{q_{j}})\curvearrowleft_{t} (\tau\oplus e_{p_{1}})}=0$ for $1\leq j\leq d'$.
Thus
\begin{equation}\label{eq5.1}
\underline{G}^{\zeta\curvearrowleft_{t} \sigma}\underline{G}^{\eta\curvearrowleft_{t} \tau}=0
\end{equation}
by (\ref{MGI5.3}).

Furthermore, let the pattern be $(\zeta\oplus e_{q_{j}})\curvearrowleft_{t}\sigma$
  and the position vector be $q_{j}\curvearrowleft_{t} p_{2}p_{3}\cdots p_{d}$
 for $1\leq j\leq d'$.
The other pattern obtained by XOR
is $\zeta\curvearrowleft_{t}(\tau\oplus e_{p_{1}})$). This gives a matchgate identity
%
\begin{eqnarray*}\label{MGI5.5}
\underline{G}^{\zeta\curvearrowleft_{t} \sigma} \underline{G}^{(\zeta\oplus e_{q_{j}})\curvearrowleft_{t} (\tau\oplus e_{p_{1}})}=\pm\displaystyle\sum_{i=2}^{d}(-1)^{i}\underline{G}^{(\zeta\oplus q_{j})\curvearrowleft_{t} (\sigma\oplus e_{p_{i}})} \underline{G}^{\zeta\curvearrowleft_{t} (\tau\oplus e_{p_{1}}\oplus e_{p_{i}})}.
\end{eqnarray*}
Since $\underline{G}^{\zeta\curvearrowleft_{t} (\tau\oplus e_{p_{1}}\oplus e_{p_{i}})}=0$ for $2\leq i\leq d$ from Lemma \ref{lemma5.1},
and by assumption $\underline{G}^{\zeta\curvearrowleft_{t} \sigma}\neq 0$,
we have
 $\underline{G}^{(\zeta\oplus e_{q_{j}})\curvearrowleft_{t} (\tau\oplus e_{p_{1}})}=0$ for $1\leq j\leq d'$.
Thus \begin{equation}\label{eq5.2}
\underline{G}^{\zeta\curvearrowleft_{t} \tau}\underline{G}^{\eta\curvearrowleft_{t} \sigma}=0
\end{equation} by (\ref{MGI5.4}).

This implies that
$\underline{G}^{\zeta\curvearrowleft_{t} \sigma}\underline{G}^{\eta\curvearrowleft_{t} \tau}-
\underline{G}^{\eta\curvearrowleft_{t} \sigma}\underline{G}^{\zeta\curvearrowleft_{t} \tau}=0$ by (\ref{eq5.1}) and (\ref{eq5.2}).
This contradicts that $x_{\zeta}=\begin{pmatrix}
\underline{G}^{\zeta\curvearrowleft_{t} \sigma}\\
\underline{G}^{\zeta\curvearrowleft_{t} \tau}
\end{pmatrix}$ and
$x_{\eta
}=\begin{pmatrix}
\underline{G}^{\eta\curvearrowleft_{t} \sigma}\\
\underline{G}^{\eta\curvearrowleft_{t} \tau}
\end{pmatrix}$
are linearly independent.
Hence $d=2$ under the hypothesis
$\underline{G}^{\zeta\curvearrowleft_{t} \sigma}\neq 0$.
By symmetry, switching $\sigma$ and $\tau$, we can also prove $d=2$ under the hypothesis $\underline{G}^{\zeta\curvearrowleft_{t} \tau}\neq 0$.
\end{proof}

From Lemma \ref{lemma5.2},
we have $\sigma\oplus \tau=e_{p_{1}} \oplus e_{p_{2}}$
and  $\zeta \oplus \eta = e_{q_{1}}\oplus e_{q_{2}}\oplus \cdots \oplus e_{q_{d'}}$.
\begin{lemma}\label{lemma5.3}
For the given $\sigma$ and $\tau$,
the Hamming distance $d'$ between the minimizing
$\zeta$ and $\eta$ is 2.
\end{lemma}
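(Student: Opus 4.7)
The plan is to follow the strategy from Lemma~\ref{lemma5.2}: derive a main MGI expressing $\det[x_\zeta \mid x_\eta]$ (up to sign) as a sum of cross terms indexed by $j \in \{1, \ldots, d'\}$, and then use secondary MGIs together with Lemma~\ref{lemma5.1} to show each cross term vanishes, which would contradict the linear independence of $x_\zeta$ and $x_\eta$ and force $d' = 2$. First I would observe that $\zeta$ and $\eta$ must have the same parity: since $\sigma$ and $\tau$ share a parity and the matchgate has a fixed parity, the Parity Condition forces every $\beta$ with $x_\beta \ne 0$ to have $\mathrm{wt}(\beta)$ in a single parity class. Hence $d'$ is even, so $d' \ne 1$ and $d' \ge 2$; the task is to rule out $d' \ge 4$.

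Assume $d' \ge 4$ for contradiction. For the main MGI, take position vector $(q_1 \cdots q_{d'}) \curvearrowleft_t (p_1 p_2)$ with pattern $\zeta \curvearrowleft_t (\sigma \oplus e_{p_1})$. Using $\sigma \oplus \tau = e_{p_1} \oplus e_{p_2}$, the two $t$-block flips produce (with opposite signs, as $p_1, p_2$ sit at consecutive positions in the position vector) the combination $\pm\det[x_\zeta \mid x_\eta]$, so
\[
\det[x_\zeta \mid x_\eta] = \sum_{j=1}^{d'} \pm\, \underline{G}^{(\zeta \oplus e_{q_j}) \curvearrowleft_t (\sigma \oplus e_{p_1})} \underline{G}^{(\eta \oplus e_{q_j}) \curvearrowleft_t (\tau \oplus e_{p_1})}.
\]
Since the LHS is nonzero by linear independence, it suffices to show every RHS summand vanishes.

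WLOG $\underline{G}^{\zeta \curvearrowleft_t \sigma} \ne 0$ (otherwise $\underline{G}^{\zeta \curvearrowleft_t \tau} \ne 0$ and the symmetric argument obtained by swapping $p_1 \leftrightarrow p_2$ applies). For each $j$, apply a secondary MGI with pattern $\zeta \curvearrowleft_t (\sigma \oplus e_{p_2})$ and position vector $(q_1 \cdots \hat{q}_j \cdots q_{d'}) \curvearrowleft_t p_2$, which has even length $d'$ (needed for non-trivial terms after parity filtering). Using $\sigma \oplus e_{p_2} = \tau \oplus e_{p_1}$, the $p_2$-flip contributes $\underline{G}^{\zeta \curvearrowleft_t \sigma} \underline{G}^{(\eta \oplus e_{q_j}) \curvearrowleft_t (\tau \oplus e_{p_1})}$. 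For each $q_{j'}$-flip with $j' \ne j$, the second factor is $\underline{G}^{(\eta \oplus e_{q_j} \oplus e_{q_{j'}}) \curvearrowleft_t \sigma}$; setting $\beta = \eta \oplus e_{q_j} \oplus e_{q_{j'}}$, one has $\mathrm{wt}(\beta \oplus \eta) = 2$ and $\mathrm{wt}(\beta \oplus \zeta) = d' - 2$, both strictly between $0$ and $d'$ since $d' \ge 4$, so Lemma~\ref{lemma5.1} gives $x_\beta = 0$ and kills this term. The secondary MGI thus collapses to $\underline{G}^{\zeta \curvearrowleft_t \sigma} \underline{G}^{(\eta \oplus e_{q_j}) \curvearrowleft_t (\tau \oplus e_{p_1})} = 0$, forcing $\underline{G}^{(\eta \oplus e_{q_j}) \curvearrowleft_t (\tau \oplus e_{p_1})} = 0$ for every $j$. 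Substituting back yields $\det[x_\zeta \mid x_\eta] = 0$, a contradiction.

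The hard part, as in Lemma~\ref{lemma5.2}, will be designing the secondary MGI so that all its cross terms are killed by a single application of Lemma~\ref{lemma5.1}. The key choice---removing $q_j$ from the position vector while keeping only $p_2$ inside the $t$-th block---makes the leftover second factor in each $q_{j'}$-flip indexed by $\beta = \eta \oplus e_{q_j} \oplus e_{q_{j'}}$, whose Hamming distances $2$ and $d'-2$ from $\eta$ and $\zeta$ lie precisely in the regime where Lemma~\ref{lemma5.1} applies; this is exactly why the proof requires $d' \ge 4$.
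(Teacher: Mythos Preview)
Your proposal is correct and follows essentially the same route as the paper: the same main MGI with pattern $\zeta\curvearrowleft_t(\sigma\oplus e_{p_1})$ and position vector $(q_1\cdots q_{d'})\curvearrowleft_t(p_1p_2)$, and the same secondary MGI with pattern $\zeta\curvearrowleft_t(\sigma\oplus e_{p_2})$ and position vector $(q_1\cdots\hat q_j\cdots q_{d'})\curvearrowleft_t p_2$, killed via Lemma~\ref{lemma5.1} applied to $\beta=\eta\oplus e_{q_j}\oplus e_{q_{j'}}$. The only cosmetic differences are that you make explicit that $d'$ is even (the paper states only $d'\neq 1$), and your WLOG reduction swaps $\sigma\leftrightarrow\tau$ whereas the paper swaps $\zeta\leftrightarrow\eta$; both symmetries are valid.
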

\begin{proof}
By the Parity Condition,
$d'\neq 1$.

For a contradiction assume $d'>2$.
Let the pattern be $\zeta\curvearrowleft_{t}(\sigma\oplus e_{p_{1}})$ and
the position vector be $(q_{1}q_{2}\cdots q_{d'})\curvearrowleft_{t}(p_{1}p_{2})$.
The other pattern obtained by XOR
is $\eta \curvearrowleft_{t} (\sigma\oplus e_{p_{2}})
= \eta \curvearrowleft_{t} (\tau \oplus e_{p_{1}})$.
\begin{equation}\label{MGI5.7}
\underline{G}^{\zeta\curvearrowleft_{t}\sigma}\underline{G}^{\eta\curvearrowleft_{t}\tau
}-
\underline{G}^{\zeta\curvearrowleft_{t}\tau}\underline{G}^{\eta\curvearrowleft_{t}\sigma}=
\displaystyle\sum_{j=1}^{d'}(\pm\underline{G}^{(\zeta\oplus e_{q_{j}})\curvearrowleft_{t}(\sigma\oplus e_{p_{1}})}
\underline{G}^{(\eta
\oplus e_{q_{j}})\curvearrowleft_{t}(\tau
\oplus e_{p_{1}})}).
\end{equation}

Since
$x_{\zeta}=\begin{pmatrix}
\underline{G}^{\zeta\curvearrowleft_{t} \sigma}\\
\underline{G}^{\zeta\curvearrowleft_{t} \tau}
\end{pmatrix}$ and
$x_{\eta}=\begin{pmatrix}
\underline{G}^{\eta\curvearrowleft_{t} \sigma}\\
\underline{G}^{\eta\curvearrowleft_{t} \tau}
\end{pmatrix}$
are linearly independent, we have
$\underline{G}^{\zeta\curvearrowleft_{t} \sigma}\neq 0$ or
$\underline{G}^{\eta\curvearrowleft_{t} \sigma}\neq 0$.

Assume  $\underline{G}^{\zeta\curvearrowleft_{t} \sigma}\neq 0$. Let the pattern be $\zeta\curvearrowleft_{t} (\sigma\oplus e_{p_{2}})$
and the position vector be $(q_{1}\cdots \hat{q}_{j}\cdots q_{d'})\curvearrowleft_{t}p_{2}$ for $1\leq j\leq d'$.
The other pattern obtained by XOR
is  $(\eta\oplus e_{q_{j}})\curvearrowleft_{t} \sigma$.
Then from MGI we have

\begin{equation}
\label{MGI5.8}
\underline{G}^{\zeta\curvearrowleft_{t}\sigma}\underline{G}^{(\eta
\oplus e_{q_{j}})\curvearrowleft_{t}(\sigma
\oplus e_{p_{2}})}
=\displaystyle\sum_{1\leq u\leq d',u\neq j}(\pm\underline{G}^{(\zeta\oplus e_{q_{u}})\curvearrowleft_{t}(\sigma\oplus e_{p_{2}})}
\underline{G}^{(\eta\oplus e_{q_{u}}\oplus e_{q_{j}})\curvearrowleft_{t}\sigma}).
\end{equation}
$\underline{G}^{(\eta\oplus e_{q_{u}}\oplus e_{q_{j}})\curvearrowleft_{t}\sigma}=0$ for $u\neq j$ by Lemma \ref{lemma5.1}, so $\underline{G}^{(\eta\oplus e_{q_{j}})\curvearrowleft_{t}(\tau\oplus e_{p_{1}})}=0$ for $1\leq j\leq d'$ from (\ref{MGI5.8}) and
$\tau\oplus e_{p_{1}}=\sigma\oplus e_{p_{2}}$. Then
$\underline{G}^{\zeta\curvearrowleft_{t}\sigma}\underline{G}^{\eta\curvearrowleft_{t}\tau}-
\underline{G}^{\zeta\curvearrowleft_{t}\tau}\underline{G}^{\eta\curvearrowleft_{t}\sigma}=0$ from (\ref{MGI5.7}).
This contradicts that $x_{\zeta}$ and $x_{\eta}$ are linearly independent.
Hence $d'=2$ under the hypothesis
$\underline{G}^{\zeta\curvearrowleft_{t} \sigma}\neq 0$.
By symmetry, switching $\zeta$ and $\eta$, we can also prove $d'=2$ under the hypothesis $\underline{G}^{\eta\curvearrowleft_{t} \sigma}\neq 0$.
%
%
%
\end{proof}

From Lemma \ref{lemma5.2} and Lemma \ref{lemma5.3},
we have
$\sigma\oplus \tau=e_{p_{1}} \oplus e_{p_{2}}$
and  $\zeta \oplus \eta = e_{q_{1}} \oplus e_{q_{2}}$.

\begin{theorem}\label{sub-basis}
Suppose
$G$ is a generator signature of full rank on domain size $k\geq 3$ with
 the $t$-th matrix form $G(t)$ having rank $k$.
Furthermore suppose there is a
 basis $M$ which is a $2^{\ell}\times k$ matrix of rank $k$, and $\underline{G}=M^{\otimes n}G$ is a standard signature of arity $n\ell$. Then
there is a $4\times 4$ sub-matrix of rank 4 in $\underline{G}(t)$.
\end{theorem}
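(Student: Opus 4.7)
The plan is to exhibit a specific $4 \times 4$ submatrix of $\underline{G}(t)$ with row indices $\sigma,\, \sigma \oplus e_{p_1},\, \sigma \oplus e_{p_2},\, \tau$ and column indices $\zeta,\, \zeta \oplus e_{q_1},\, \zeta \oplus e_{q_2},\, \eta$, and show it has rank $4$. The key structural observation comes from the Parity Condition: since $\sigma$ and $\tau$ share the same parity (by the setup in Section 5) and $\zeta, \eta$ also share the same parity (they differ by two bits), the $4 \times 4$ submatrix has a checkerboard pattern in which the $2 \times 2$ ``corner'' block (rows $\sigma, \tau$, columns $\zeta, \eta$) and the $2 \times 2$ ``center'' block (rows $\sigma \oplus e_{p_1}, \sigma \oplus e_{p_2}$, columns $\zeta \oplus e_{q_1}, \zeta \oplus e_{q_2}$) are the only potentially nonzero blocks, while the other eight entries vanish. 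Thus, after permuting rows and columns, the matrix is block-diagonal, and its rank is the sum of the ranks of these two $2 \times 2$ blocks.

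The corner block is immediately nonsingular: its determinant equals $\underline{G}^{\zeta \curvearrowleft_t \sigma}\underline{G}^{\eta \curvearrowleft_t \tau} - \underline{G}^{\eta \curvearrowleft_t \sigma}\underline{G}^{\zeta \curvearrowleft_t \tau}$, and this must be nonzero because $x_\zeta$ and $x_\eta$ are linearly independent by the choice of $\zeta$ and $\eta$. The main task is to show the center block is also nonsingular, and this is where Matchgate Identities enter. Apply $\mathrm{MGI}$ with pattern $\zeta \curvearrowleft_t (\sigma \oplus e_{p_1})$ and position vector $(q_1 q_2) \curvearrowleft_t (p_1 p_2)$, so that the conjugate pattern is $\eta \curvearrowleft_t (\tau \oplus e_{p_1})$.

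This MGI is a sum of exactly four product terms, one for each position. A direct computation shows: the two terms coming from flipping $p_1$ or $p_2$ together assemble into (a signed copy of) the corner-block determinant, while the two terms from flipping $q_1$ or $q_2$ assemble into (a signed copy of) the center-block determinant. A short case analysis on the relative position of $p_1, p_2$ (which live in the $t$-th block) versus $q_1, q_2$ (which live outside) in the total index shows that within each pair the two signs are always opposite, so each pair genuinely forms a determinant rather than a symmetric sum. Hence MGI yields an equation of the form $\pm \det(\text{corner}) \pm \det(\text{center}) = 0$, and nonvanishing of the corner block forces the center block to be nonsingular as well.

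Combining these, the chosen $4 \times 4$ submatrix is block-diagonal with two rank-$2$ blocks, hence has rank $4$, proving the theorem. The main obstacle is the bookkeeping in the MGI step: one must verify both that parity eliminates exactly the right entries so that the four MGI products lie in the designated submatrix, and that the $(-1)^i$ signs in the MGI pair up correctly to produce determinants rather than permanents. Once this is checked, the argument is complete, and it also clarifies why the collapse for domain size $3$ can go all the way down to dimension $2$: the witness of non-degeneracy always lives in a $2 \times 2 + 2 \times 2$ block structure coming from the $\sigma \oplus \tau = e_{p_1} \oplus e_{p_2}$ constraint forced by Lemma 5.2.
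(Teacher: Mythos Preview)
Your proposal is correct and follows essentially the same approach as the paper: you pick the same $4\times 4$ submatrix (rows $\sigma,\sigma\oplus e_{p_1},\sigma\oplus e_{p_2},\tau$ and columns $\zeta,\zeta\oplus e_{q_1},\zeta\oplus e_{q_2},\eta$), use the Parity Condition to obtain the block-diagonal shape, invoke the linear independence of $x_\zeta,x_\eta$ for the corner block, and apply the very same MGI (pattern $\zeta\curvearrowleft_t(\sigma\oplus e_{p_1})$, position vector $(q_1q_2)\curvearrowleft_t(p_1p_2)$) to force nonsingularity of the center block. Your sign discussion is also right: since $p_1,p_2$ both lie in block~$t$ while $q_1,q_2$ lie outside it, only the three orderings $q_1<q_2<p_1<p_2$, $p_1<p_2<q_1<q_2$, $q_1<p_1<p_2<q_2$ can occur, and in each the two $p$-terms (resp.\ $q$-terms) receive opposite $(-1)^i$ signs, so each pair genuinely assembles into a determinant---exactly the case split the paper records.
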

\begin{proof}
Following the notations of Lemma \ref{lemma5.2} and Lemma \ref{lemma5.3}, the sub-matrix
$\begin{pmatrix}
\underline{G}^{\zeta\curvearrowleft_{t} \sigma}&\underline{G}^{\eta\curvearrowleft_{t} \sigma}\\
\underline{G}^{\zeta\curvearrowleft_{t} \tau}&\underline{G}^{\eta\curvearrowleft_{t} \tau}
\end{pmatrix}$  of $\underline{G}(t)$ has rank 2, where ${\rm wt}(\sigma\oplus \tau)=2$ and ${\rm wt}(\eta\oplus \zeta)=2$.
Let the pattern be $\zeta\curvearrowleft_{t} (\sigma\oplus e_{p_{1}})$
 and the position vector be $(q_{1}q_{2})\curvearrowleft_{t}(p_{1}p_{2})$.
The other pattern obtained by XOR
is $\eta \curvearrowleft_{t} (\sigma\oplus e_{p_{2}})
= \eta \curvearrowleft_{t} (\tau \oplus e_{p_{1}})$.
Then from
MGI we have
\begin{eqnarray*}
\underline{G}^{\zeta\curvearrowleft_{t} \sigma}\underline{G}^{\eta\curvearrowleft_{t} \tau}-
\underline{G}^{\zeta\curvearrowleft_{t} \tau}\underline{G}^{\eta\curvearrowleft_{t} \sigma}=
\pm(\underline{G}^{(\zeta\oplus e_{q_{1}})\curvearrowleft_{t} (\sigma\oplus e_{p_{1}})}\underline{G}^{(\eta\oplus e_{q_{1}})\curvearrowleft_{t} (\tau\oplus e_{p_{1}})}
-\underline{G}^{(\zeta\oplus e_{q_{2}})\curvearrowleft_{t} (\sigma\oplus e_{p_{1}})}\underline{G}^{(\eta\oplus e_{q_{2}})\curvearrowleft_{t} (\tau\oplus e_{p_{1}})}).
\end{eqnarray*}
It is $``+"$ if $q_{1}<p_{1}<p_{2}<q_{2}$. Otherwise, it is $``-"$,
when $q_{1}<q_{2}<p_{1}<p_{2}$ or $p_{1}<p_{2}<q_{1}<q_{2}$.
Thus
\begin{equation}\label{det}
\underline{G}^{(\zeta\oplus e_{q_{1}})\curvearrowleft_{t} (\sigma\oplus e_{p_{1}})}\underline{G}^{(\eta\oplus e_{q_{1}})\curvearrowleft_{t} (\tau\oplus e_{p_{1}})}
-\underline{G}^{(\zeta\oplus e_{q_{2}})\curvearrowleft_{t} (\sigma\oplus e_{p_{1}})}\underline{G}^{(\eta\oplus e_{q_{2}})\curvearrowleft_{t} (\tau\oplus e_{p_{1}})}\neq 0.
\end{equation}


By renaming $\sigma$ and $\tau$,
the $p_{1}, p_{2}$-th bits of $\sigma$ can be 00 or 01
(and the $p_{1}, p_{2}$-th bits of $\tau$ are 11 or 10 correspondingly).
Similarly by renaming $\zeta$ and $\eta$,
the $q_{1}, q_{2}$-th bits of $\zeta$ can be 00 or 01 (and the
$q_{1}, q_{2}$-th bits of $\eta$ are 11 or 10 correspondingly).
So there are four cases for the $q_{1}, q_{2}$-th, $p_{1}, p_{2}$-th bits of $\zeta\curvearrowleft_{t} \sigma$.
We may temporarily make the assumption that $q_{1}, q_{2}$-th, $p_{1}, p_{2}$-th bits of $\zeta\curvearrowleft_{t} \sigma$ are 00, 00 respectively.
Then $\underline{G}(t)$ has a full rank sub-matrix of the following form by (\ref{det}):
\begin{equation}\label{5.4.1}
\begin{pmatrix}
\underline{G}^{\zeta\curvearrowleft_{t} \sigma}&0&0&
\underline{G}^{\eta\curvearrowleft_{t} \sigma}\\
0&\underline{G}^{(\zeta\oplus e_{q_{2}})\curvearrowleft_{t} (\sigma\oplus e_{p_{2}})}&\underline{G}^{(\zeta\oplus e_{q_{1}})\curvearrowleft_{t} (\sigma\oplus e_{p_{2}})}&0\\
0&\underline{G}^{(\zeta\oplus e_{q_{2}})\curvearrowleft_{t} (\sigma\oplus e_{p_{1}})}&\underline{G}^{(\zeta\oplus e_{q_{1}})\curvearrowleft_{t} (\sigma\oplus e_{p_{1}})}&0\\
\underline{G}^{\zeta\curvearrowleft_{t} \tau}&0&0&
\underline{G}^{\eta\curvearrowleft_{t} \tau}
\end{pmatrix}.
\end{equation}

If the $q_{1}, q_{2}$-th, $p_{1}, p_{2}$-th bits of $\zeta\curvearrowleft_{t}\sigma$ are 00,01, then the full rank sub-matrix has the following form by (\ref{det}):
\begin{equation}\label{5.4.2}
\begin{pmatrix}
0&\underline{G}^{(\zeta\oplus e_{q_{2}})\curvearrowleft_{t} (\sigma\oplus e_{p_{2}})}&\underline{G}^{(\zeta\oplus e_{q_{1}})\curvearrowleft_{t} (\sigma\oplus e_{p_{2}})}&0\\
\underline{G}^{\zeta\curvearrowleft_{t} \sigma}&0&0&\underline{G}^{\eta\curvearrowleft_{t} \sigma}\\
\underline{G}^{\zeta\curvearrowleft_{t} \tau}&0&0&\underline{G}^{\eta\curvearrowleft_{t} \tau}\\
0&\underline{G}^{(\zeta\oplus e_{q_{2}})\curvearrowleft_{t} (\sigma\oplus e_{p_{1}})}&\underline{G}^{(\zeta\oplus e_{q_{1}})\curvearrowleft_{t} (\sigma\oplus e_{p_{1}})}&0
\end{pmatrix}.
\end{equation}

If the $q_{1}, q_{2}$-th, $p_{1}, p_{2}$-th bits of $\zeta\curvearrowleft_{t}\sigma$ are 01,00, then we get the full rank sub-matrix by switching
$\zeta$ with $\zeta\oplus e_{q_{2}}$ and $\sigma$ with $\sigma\oplus e_{p_{2}}$ in (\ref{5.4.2}).

If the $q_{1}, q_{2}$-th, $p_{1}, p_{2}$-th bits of $\zeta\curvearrowleft_{t}\sigma$ are 01,01, then we get the full rank sub-matrix by switching
$\zeta$ with $\zeta\oplus e_{q_{2}}$ and $\sigma$ with $\sigma\oplus e_{p_{2}}$ in (\ref{5.4.1}).
\end{proof}

\begin{coro}\label{rank}
Let $\underline{G}$ be a standard signature and $\underline{G}(t)$ be the $t$-th matrix form of $\underline{G}$. If there are two rows with the same parity that are linearly independent in $\underline{G}(t)$, then rank($\underline{G}(t))\geq 4$.
\end{coro}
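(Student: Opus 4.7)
The plan is to apply Lemmas \ref{lemma5.2} and \ref{lemma5.3} together with the construction of Theorem \ref{sub-basis} directly to $\underline{G}(t)$, observing that those arguments depend only on $\underline{G}$ being a standard signature (that is, satisfying MGI and the Parity Condition) and not on any additional structure coming from a basis transformation $\underline{G} = M^{\otimes n}G$.

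First, I would set up notation parallel to Section 5. Let $\sigma,\tau\in\{0,1\}^{\ell}$ be two same-parity row indices of $\underline{G}(t)$ for which $\underline{G}(t)^{\sigma}$ and $\underline{G}(t)^{\tau}$ are linearly independent and such that ${\rm wt}(\sigma\oplus\tau)$ is minimal; such a pair exists by hypothesis. Next, as in Section 5, define $x_{\beta}=\bigl(\underline{G}^{\beta\curvearrowleft_{t}\sigma},\underline{G}^{\beta\curvearrowleft_{t}\tau}\bigr)^{\tt T}$ and choose $\zeta,\eta\in\{0,1\}^{(n-1)\ell}$ with $x_{\zeta},x_{\eta}$ linearly independent and ${\rm wt}(\zeta\oplus\eta)$ minimal. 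Such a pair exists because linear independence of the two whole rows $\underline{G}(t)^{\sigma}$ and $\underline{G}(t)^{\tau}$ forces the existence of at least two column indices where the $2$-vectors $x_{\beta}$ span a $2$-dimensional space.

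Next, I would invoke the proofs of Lemmas \ref{lemma5.2} and \ref{lemma5.3} verbatim on this $\sigma,\tau,\zeta,\eta$. The only ingredients of those proofs are the Matchgate Identities, the Parity Condition, and the minimality recorded in Lemma \ref{lemma5.1}; none of them use the basis $M$ or the underlying signature $G$. Hence they apply and yield ${\rm wt}(\sigma\oplus\tau)=2$ and ${\rm wt}(\zeta\oplus\eta)=2$, so $\sigma\oplus\tau=e_{p_{1}}\oplus e_{p_{2}}$ and $\zeta\oplus\eta=e_{q_{1}}\oplus e_{q_{2}}$ for suitable positions.

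Finally, I would run the MGI computation in the proof of Theorem \ref{sub-basis} with position vector $(q_{1}q_{2})\curvearrowleft_{t}(p_{1}p_{2})$ and pattern $\zeta\curvearrowleft_{t}(\sigma\oplus e_{p_{1}})$. The resulting nonvanishing identity (\ref{det}) shows that one of the four $4\times 4$ sub-matrices displayed in (\ref{5.4.1}) or (\ref{5.4.2})---or their analogues obtained by renaming $\sigma\leftrightarrow\sigma\oplus e_{p_{2}}$ and $\zeta\leftrightarrow\zeta\oplus e_{q_{2}}$---sits inside $\underline{G}(t)$ with nonzero determinant, giving ${\rm rank}(\underline{G}(t))\geq 4$. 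The main obstacle is not any new calculation but careful bookkeeping: I must verify that every step of Lemmas \ref{lemma5.2}, \ref{lemma5.3}, \ref{lemma5.1} and Theorem \ref{sub-basis} survives in this abstract setting where we start from a standard signature rather than from a holographic reduction of a full-rank $G$. This reduces to checking that each vanishing argument in those proofs invokes only the Parity Condition and MGI applied to $\underline{G}$, which is exactly what they do.
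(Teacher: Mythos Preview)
Your proposal is correct and is precisely the approach the paper intends: the corollary is stated without proof immediately after Theorem~\ref{sub-basis}, and its content is exactly the observation you make---that the arguments of Lemmas~\ref{lemma5.1}, \ref{lemma5.2}, \ref{lemma5.3} and the submatrix construction of Theorem~\ref{sub-basis} use only the Parity Condition and MGI for $\underline{G}$, never the factorization $\underline{G}=M^{\otimes n}G$ or the rank hypothesis on $G(t)$. The hypothesis ``two same-parity rows linearly independent'' is the sole input needed to define the minimizing $\sigma,\tau$ (and then $\zeta,\eta$), after which everything goes through unchanged.
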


\begin{remark}
If $\underline{G}(t)$ has rank $k\geq 3$,
then there are some two rows, of the same parity, that are linearly
independent.
\end{remark}

\begin{coro}\label{rank3}
If there exists $t\in[k]$ such that rank$(G(t))=3$ and $M$ is a $2^{\ell}\times k$ matrix of rank $k$, then $G$ \emph{cannot} be realized on $M$.
\end{coro}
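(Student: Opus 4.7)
The plan is to derive a contradiction directly from Corollary~\ref{rank}, exploiting the fact that a standard matchgate signature must satisfy the Parity Condition. The argument has three short steps.

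First, I would transfer the rank from $G(t)$ to $\underline{G}(t)$. By Lemma~\ref{from vec to mat}, $\underline{G}(t) = M\,G(t)\,(M^{\tt T})^{\otimes(n-1)}$. Since rank$(M)=k$ (full column rank) and $M^{\tt T}$ has full row rank $k$, its $(n-1)$-fold tensor power $(M^{\tt T})^{\otimes(n-1)}$ has full row rank $k^{n-1}$. Two applications of Lemma~\ref{lemma 4.1} (once on the left factor $M$, once on the right factor $(M^{\tt T})^{\otimes(n-1)}$) then give
\begin{equation*}
{\rm rank}(\underline{G}(t)) \;=\; {\rm rank}(G(t)) \;=\; 3.
\end{equation*}

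Second, suppose for contradiction that $G$ is realized on $M$. Then $\underline{G}=M^{\otimes n}G$ is the standard signature of an actual matchgate, so it satisfies the Parity Condition and the Matchgate Identities, and in particular Corollary~\ref{rank} applies. Pick three linearly independent rows $\underline{G}(t)^{\alpha_{1}},\underline{G}(t)^{\alpha_{2}},\underline{G}(t)^{\alpha_{3}}$ witnessing rank $3$, where $\alpha_{i}\in\{0,1\}^{\ell}$. Since there are only two parity classes for $\mathrm{wt}(\alpha_{i})\pmod 2$, the pigeonhole principle forces two of them to coincide, say $\mathrm{wt}(\alpha_{1})\equiv\mathrm{wt}(\alpha_{2})\pmod 2$. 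Being a subset of a linearly independent set, $\underline{G}(t)^{\alpha_{1}}$ and $\underline{G}(t)^{\alpha_{2}}$ remain linearly independent.

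Third, Corollary~\ref{rank} now yields ${\rm rank}(\underline{G}(t))\geq 4$, contradicting step one. Hence $G$ cannot be realized on $M$. There is no serious obstacle: once Theorem~\ref{sub-basis} and hence Corollary~\ref{rank} are in hand, the statement reduces to a clean pigeonhole on parity classes, the only bookkeeping being the verification that Lemma~\ref{lemma 4.1} applies on both sides of the product $M\,G(t)\,(M^{\tt T})^{\otimes(n-1)}$ so that the rank of $\underline{G}(t)$ is exactly $3$.
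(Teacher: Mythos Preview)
Your proposal is correct and follows essentially the same route as the paper's proof: both use Lemma~\ref{from vec to mat} together with Lemma~\ref{lemma 4.1} to conclude ${\rm rank}(\underline{G}(t))=3$, then invoke the pigeonhole principle on parity classes to find two linearly independent same-parity rows, and finally apply Corollary~\ref{rank} to derive the contradiction ${\rm rank}(\underline{G}(t))\geq 4$. Your write-up is slightly more explicit about the two-sided application of Lemma~\ref{lemma 4.1} and the pigeonhole step, but there is no substantive difference in strategy.
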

\begin{proof}
If $\underline{G}=M^{\otimes n}G$, then $\underline{G}(t)=MG(t)(M^{\tt{T}})^{\otimes (n-1)}$ and
rank$(\underline{G}(t))=3$ by Lemma \ref{lemma 4.1}.
So there are two rows that have the same parity and are linearly independent in $\underline{G}(t)$.
Then by Corollary \ref{rank}, rank$(\underline{G}(t))\geq 4$. This is a contradiction, so $G$ cannot be realized on any full rank basis.
\end{proof}

\subsection{A Collapse Theorem on Domain Size 3}
From Corollary \ref{rank3}, we directly have the following Theorem:
\begin{theorem}
Let $G$ be a full rank generator signature on domain size 3, then it 
\emph{cannot} be realized on a  basis of rank 3.
\end{theorem}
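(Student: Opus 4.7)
The plan is to observe that this theorem is essentially a direct application of Corollary \ref{rank3} specialized to domain size $k=3$, so the proof proposal is quite short.

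First I would unpack the hypothesis. By Definition \ref{def3.2}, a full rank generator signature $G$ on domain size 3 is a non-degenerate signature such that there exists some index $t$ with $\mathrm{rank}(G(t)) = 3$. A basis of rank 3 is, by the conventions of Section 5, a $2^{\ell} \times 3$ matrix $M$ of rank 3 (in particular, we must have $2^{\ell} \geq 3$, i.e. $\ell \geq 2$, since a size 1 basis would be a $2 \times 3$ matrix which cannot have rank 3).

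Next I would invoke Corollary \ref{rank3} directly. With $k = 3$ in that corollary, the hypothesis rank$(G(t)) = 3$ matches the full rank condition on $G$, and the hypothesis on $M$ matches a basis of rank 3. The corollary then states exactly that $G$ cannot be realized on $M$, which is what we want. The underlying reason, worth restating for clarity, is that if $G$ were realizable as $\underline{G} = M^{\otimes n} G$, then by Lemma \ref{lemma 4.1} we would have $\mathrm{rank}(\underline{G}(t)) = \mathrm{rank}(G(t)) = 3$; but then by Theorem \ref{sub-basis} (whose hypotheses are met since $\underline{G}(t)$ would contain two rows of the same parity that are linearly independent — this is Corollary \ref{rank}), $\underline{G}(t)$ would contain a $4 \times 4$ submatrix of rank 4, forcing $\mathrm{rank}(\underline{G}(t)) \geq 4$, a contradiction.

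There is essentially no obstacle here: the entire substantive content lives in Theorem \ref{sub-basis} and Corollary \ref{rank3}, which have already been established. The only thing to check is the compatibility of the definitions, namely that ``full rank on domain size 3'' in Definition \ref{def3.2} indeed forces $\mathrm{rank}(G(t)) = k = 3$ for some $t$, which is immediate. Thus the proof is a one-line deduction from Corollary \ref{rank3}.
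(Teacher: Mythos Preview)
Your proposal is correct and takes essentially the same approach as the paper: the paper states this theorem immediately after Corollary~\ref{rank3} with the single remark ``From Corollary~\ref{rank3}, we directly have the following Theorem,'' and your argument is precisely this one-line deduction, with some added unpacking of definitions and the underlying mechanism.
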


\begin{remark}
So for generator signatures on domain size 3, the only way to be realizable by a full ranked
matchgate
is via a basis of rank 2.
\end{remark}

In paper $\cite{string8}$, the following result  about holographic algorithms on bases of rank 2 is given:
\begin{theorem}
 Let $R_{1}, R_{2}, \cdots, R_{r}$ and $G_{1}, G_{2}, \cdots, G_{g}$ be the recognizer and generator signatures on domain size $k\geq 3$ that a holographic algorithm employs.
    If $R_{1}, R_{2}, \cdots, R_{r}$ and $G_{1}, G_{2}, \cdots, G_{g}$ can be realized on a  $2^{\ell}\times k$ basis $M$ of rank 2, where $\ell\geq 2$, then we can efficiently find signatures $R'_{1}, R'_{2}, \cdots, R'_{r}$ and $G'_{1}, G'_{2}, \cdots, G'_{g}$ on domain size 2 such that
the {\it contraction} of $\bigotimes_{i=1}^{g}G'_{i}$ and $\bigotimes_{j=1}^{r}R'_{j}$ is equal to the {\it contraction} of $\bigotimes_{i=1}^{g}G_{i}$ and $\bigotimes_{j=1}^{r}R_{j}$. And if there is at least one full rank generator signature $G'_{i}$, then
$R_{1}, R_{2}, \cdots, R_{r}$ and $G_{1}, G_{2}, \cdots, G_{g}$ can be simultaneously realized on a $2\times k$ basis (of size 1).
\end{theorem}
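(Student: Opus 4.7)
The plan is to exploit the fact that $M$ has rank 2 by factoring it and then reducing everything to the domain-2 situation already handled by Theorem \ref{collapse2}. Concretely, using a rank factorization, I would write $M = NP$ where $N$ is a $2^{\ell}\times 2$ matrix of rank 2 and $P$ is a $2\times k$ matrix of rank 2. This is a routine linear algebra step. The matrix $N$ will play the role of a size-$\ell$ basis on the Boolean domain, while $P$ is the ``upper'' piece that records how the $k$ columns of $M$ sit inside the $2$-dimensional column space.

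Next I would define the alleged domain-$2$ signatures. Set
\[
G'_{j} \;:=\; P^{\otimes n_{j}}G_{j}, \qquad R'_{i} \;:=\; \underline{R}_{i}\,N^{\otimes m_{i}},
\]
where $\underline{G}_{j}=M^{\otimes n_{j}}G_{j}$ and $\underline{R}_{i}$ are the standard signatures of the original matchgates. The key observation is that the standard signatures do not change:
$\underline{G}_{j}=M^{\otimes n_{j}}G_{j}=N^{\otimes n_{j}}G'_{j}$ and $R_{i}=\underline{R}_{i}M^{\otimes m_{i}}=R'_{i}P^{\otimes m_{i}}$.
Hence the very same matchgates that realized the $G_{j}$ and $R_{i}$ on basis $M$ now realize the domain-$2$ signatures $G'_{j}$ and $R'_{i}$ on basis $N$; in particular they are simultaneously realizable on a size-$\ell$ basis. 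The equality of contractions then follows from Valiant's Holant theorem applied in two ways: computed either under basis $M$ or under basis $N$, both contractions equal $\mathrm{PerfMatch}$ of the same matchgrid, so they equal each other.

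For the second half of the statement, suppose some $G'_{i}$ is of full rank on domain $2$. Apply Theorem \ref{collapse2}: there exists a $2\times 2$ invertible basis $M'$ (size $1$ on the Boolean domain) on which the entire collection $\{G'_{j}\}\cup\{R'_{i}\}$ is simultaneously realizable. Now set $M'' := M'P$, which is a $2\times k$ matrix, i.e.\ a size-$1$ basis on domain $k$. The composition identities
\[
(M'')^{\otimes n_{j}}G_{j}=(M')^{\otimes n_{j}}P^{\otimes n_{j}}G_{j}=(M')^{\otimes n_{j}}G'_{j}
\]
and, dually, $\underline{R}''_{i}(M'')^{\otimes m_{i}}=\underline{R}''_{i}(M')^{\otimes m_{i}}P^{\otimes m_{i}}=R'_{i}P^{\otimes m_{i}}=R_{i}$ show that the matchgates realizing $G'_{j},R'_{i}$ over $M'$ also realize the original $G_{j},R_{i}$ over the size-$1$ basis $M''$.

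The only genuinely non-trivial ingredient is the domain-$2$ collapse Theorem \ref{collapse2}, which supplies $M'$ and its matchgates; once that is in hand the rest is bookkeeping about how tensor powers distribute over the factorization $M=NP$. The main conceptual step to check carefully is that ``simultaneously realizable on the size-$\ell$ basis $N$'' is genuinely the hypothesis of Theorem \ref{collapse2}, which it is because the standard signatures $\underline{G}_{j},\underline{R}_{i}$ are produced by honest matchgates. Beyond that, I do not foresee an obstacle: the proof is essentially a two-line calculation wrapped around an invocation of the Boolean collapse theorem.
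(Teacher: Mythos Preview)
The paper does not prove this theorem; it is quoted from reference~\cite{string8} without proof, so there is no in-paper argument to compare against. Your approach---rank-factor $M=NP$, push $P$ onto the generators and $N$ onto the recognizers to obtain domain-$2$ signatures realized by the \emph{same} matchgates on the size-$\ell$ basis $N$, invoke the Holant Theorem twice on the identical matchgrid to get equality of contractions, and then feed the resulting Boolean-domain instance into Theorem~\ref{collapse2} to produce the size-$1$ basis $M''=M'P$---is correct and is precisely the natural route one would expect; the bookkeeping identities you wrote down all check, and the ``efficiently find'' clause is satisfied because rank factorization and the construction in the proof of Theorem~\ref{collapse2} are both explicit polynomial-time procedures.
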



\begin{remark}
Since $G'_{1}, G'_{2}, \cdots, G'_{g}$ are signatures on domain size 2, 
either they are all degenerate or there is some $G'_{i}$ that is of full rank by Definition \ref{def3.1} and Definition \ref{def3.2}.
If $G'_{1}, G'_{2}, \cdots, G'_{g}$ are all degenerate, then
the contraction of $\bigotimes_{i=1}^{g}G'_{i}$ and $\bigotimes_{j=1}^{r}R'_{j}$
can be computed trivially in polynomial time.
And since the {\it contraction} of $\bigotimes_{i=1}^{g}G'_{i}$ and $\bigotimes_{j=1}^{r}R'_{j}$ is same as the {\it contraction} of $\bigotimes_{i=1}^{g}G_{i}$ and $\bigotimes_{j=1}^{r}R_{j}$, the holographic algorithm can
be computed trivially in polynomial time.
Otherwise we have the following collapse theorem for holographic algorithms on domain size 3.
\end{remark}

\begin{theorem}
On domain size 3,
for any 
holographic algorithm using 
a set of signatures $R_{1}, R_{2}, \cdots, R_{r}$ and $G_{1}, G_{2}, \cdots, G_{g}$, and a $2^{\ell}\times 3$ basis
$M$ of size $\ell\geq 2$
on which these signatures are simultaneously realized by matchgates,
we can efficiently replace the signatures with an equivalent set
having the same Holant value.
And for the new set of signatures, we can efficiently decide whether
it is a degenerate set, i.e., all generators are degenerate, in which case
we can compute the Holant in polynomial time trivially,
or else,
we can efficiently find another $2\times 3$ basis
$M'$ (of size $1$),
on which $R_{1}, R_{2}, \cdots, R_{r}$ and $G_{1}, G_{2}, \cdots, G_{g}$ can be simultaneously realized.
Thus, any non-trivial holographic algorithm using matchgates
on domain size 3 can be accomplished by a basis of size 1.
\end{theorem}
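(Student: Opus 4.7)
The plan is to chain the preceding theorem on domain $3$ (a full-rank generator cannot be realized on a basis of rank $3$) with the theorem from \cite{string8} (which, given a holographic algorithm realized on a $2^{\ell} \times k$ basis of rank $2$, efficiently produces equivalent signatures on domain size $2$, together with a further collapse to size $1$whenever at least one generator on domain $2$ is of full rank).

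First, I would reduce to the case where the basis $M$ has rank exactly $2$. Under the paper's standing assumption, $M$ is a $2^{\ell} \times 3$ matrix of full rank, i.e., of rank $3$; but by the contrapositive of the preceding theorem, rank$(M) = 3$ forces every generator $G_i$ to have matrix-form rank at most $2$ in every coordinate $t$. Either every $G_i$ is then degenerate, in which case the Holant is a trivial product of scalar contractions and we are already done, or some $G_i$ is non-degenerate but not of full rank. The main obstacle I expect is showing that in this latter subcase one may pass from $M$ to a $2^{\ell} \times 2$ sub-basis (equivalently, to a rank-$2$ basis obtained from $M$ by a suitable domain change of variables) without altering any Holant value, using that each $G_i$ lives in an effectively two-dimensional sub-domain of $\mathbb{C}^3$; the subtlety is that different $G_i$ could a priori use different two-dimensional subspaces, and one must show all $G_i$ and $R_j$ can be simultaneously aligned into a common such sub-basis.

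Once rank$(M) = 2$, I would invoke the theorem of \cite{string8} to efficiently produce equivalent signatures $R'_j$ and $G'_i$ on domain size $2$ whose contraction equals the original Holant; this is the efficient replacement claimed by the first half of the theorem. The Remark immediately preceding the theorem now supplies the final dichotomy: every domain-$2$ signature is either degenerate or of full rank, so either every $G'_i$ is degenerate --- in which case the contraction of $\bigotimes_i G'_i$ against $\bigotimes_j R'_j$, and hence the original Holant, is trivially computable in polynomial time --- or some $G'_i$ is of full rank on domain $2$, in which case the second conclusion of \cite{string8} provides a $2 \times 3$ basis $M'$ of size $1$ on which all original $R_j$ and $G_i$ are simultaneously realized. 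This establishes that any non-trivial holographic algorithm on domain size $3$ collapses to a basis of size $1$.
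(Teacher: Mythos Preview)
Your second paragraph --- invoke the theorem from \cite{string8} to pass to equivalent domain-$2$ signatures, then apply the degenerate/full-rank dichotomy spelled out in the Remark --- is exactly the paper's argument. The paper presents the theorem as an immediate corollary of the preceding theorem, the result from \cite{string8}, and that Remark, with no separate proof.

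Your first paragraph, however, takes an unnecessary detour. The paper never starts from a rank-$3$ basis and tries to reduce it to rank $2$. Its logic runs the other way: the preceding theorem shows that a full-rank generator on domain $3$ is \emph{incompatible} with a basis of rank $3$; hence any holographic algorithm employing at least one full-rank generator --- which is precisely the paper's working notion of ``non-trivial,'' per the Remark after Definition~\ref{def3.2} --- must already be using a basis of rank at most $2$, and \cite{string8} applies immediately. The obstacle you flag (a rank-$3$ basis where every $G_i(t)$ has rank at most $2$ yet some $G_i$ is non-degenerate, and one must align all signatures into a common two-dimensional sub-domain) is a genuine issue in the sense that the paper does not resolve it either; but the paper simply places this case outside its scope rather than handling it. You therefore do not need to overcome this obstacle to match the paper's proof --- you only need to observe that the full-rank-generator hypothesis forces $\mathrm{rank}(M)=2$ outright.
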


\subsection{A Group Property of Standard Signatures}
\begin{lemma}\label{even arity 4}
Assume that
\begin{equation*}
A=\begin{pmatrix}
G^{0000}&0&0&G^{0011}\\
0&G^{0101}&G^{0110}&0\\
0&G^{1001}&G^{1010}&0\\
G^{1100}&0&0&G^{1111}\\
\end{pmatrix}
\end{equation*}
has rank 4 and
\begin{equation}\label{pm-mgi}
G^{0000}G^{1111}-G^{1100}G^{0011}=\pm(G^{1010}G^{0101}-G^{1001}G^{0110}),
\end{equation}
then $G^{-1}$ is of the following form
\begin{equation*}
G^{-1}=\begin{pmatrix}
g^{0000}&0&0&g^{0011}\\
0&g^{0101}&g^{0110}&0\\
0&g^{1001}&g^{1010}&0\\
g^{1100}&0&0&g^{1111}
\end{pmatrix}
\end{equation*}
and
\begin{equation*}
g^{0000}g^{1111}-g^{1100}g^{0011}=\pm(g^{1010}g^{0101}-g^{1001}g^{0110}),
\end{equation*}
with the same sign $\pm$ as in (\ref{pm-mgi}).
\end{lemma}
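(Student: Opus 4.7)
The plan is to exploit a block-diagonal structure that $A$ acquires after a suitable row/column permutation. Reindex the rows and columns so that the indices $\{00,11\}$ come first and $\{01,10\}$ come second. The Parity Condition expressed by the zero pattern of $A$ then forces it to be block diagonal with the two $2\times 2$ blocks
\begin{equation*}
B_{1}=\begin{pmatrix} G^{0000} & G^{0011} \\ G^{1100} & G^{1111} \end{pmatrix},\qquad
B_{2}=\begin{pmatrix} G^{0101} & G^{0110} \\ G^{1001} & G^{1010} \end{pmatrix},
\end{equation*}
and the hypothesis (\ref{pm-mgi}) reads $\det(B_{1})=\pm\det(B_{2})$. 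Because $A$ has full rank $4$, both $B_{1}$ and $B_{2}$ are nonsingular.

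Next, I would observe that the inverse of a block-diagonal matrix is computed block by block, so $A^{-1}$, after the same permutation, is block diagonal with blocks $B_{1}^{-1}$ and $B_{2}^{-1}$. Undoing the permutation shows at once that $A^{-1}$ has exactly the same zero pattern as $A$, which verifies the claimed form, and the standard $2\times 2$ inversion formula reads off explicit expressions for $g^{0000},g^{0011},g^{1100},g^{1111}$ in terms of $\det(B_{1})$, and likewise for $g^{0101},g^{0110},g^{1001},g^{1010}$ in terms of $\det(B_{2})$.

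The final step is to compute the two required sub-determinants of $A^{-1}$. A direct calculation gives
\begin{equation*}
g^{0000}g^{1111}-g^{1100}g^{0011}=\det(B_{1}^{-1})=1/\det(B_{1}),\qquad
g^{0101}g^{1010}-g^{1001}g^{0110}=\det(B_{2}^{-1})=1/\det(B_{2}).
\end{equation*}
If $\det(B_{1})=\varepsilon\,\det(B_{2})$ with $\varepsilon\in\{+1,-1\}$, then taking reciprocals preserves the sign because $1/\varepsilon=\varepsilon$, giving $g^{0000}g^{1111}-g^{1100}g^{0011}=\varepsilon\bigl(g^{0101}g^{1010}-g^{1001}g^{0110}\bigr)$, matching the $\pm$ in (\ref{pm-mgi}).

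There is no serious obstacle here; the only point that demands care is sign bookkeeping in the $2\times 2$ inverse formula, where $B_{i}^{-1}$ carries minus signs on the off-diagonal. One must confirm that these signs cancel correctly so that the two cross products $g^{1100}g^{0011}$ and $g^{1001}g^{0110}$ enter with the right sign for the sub-determinants to simplify to $1/\det(B_{1})$ and $1/\det(B_{2})$, and that the outer sign $\varepsilon$ then propagates unchanged through reciprocation.
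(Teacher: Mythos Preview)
Your proposal is correct and follows essentially the same approach as the paper: both recognize that $A$ is (up to permutation) block diagonal with the two $2\times 2$ blocks $B_1,B_2$, invert each block separately, and then use $\det(B_i^{-1})=1/\det(B_i)$ to transfer the $\pm$ relation. The only cosmetic difference is that you make the row/column permutation explicit, whereas the paper simply writes down the two $2\times 2$ inverses directly.
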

\begin{proof}
Note that
$\begin{pmatrix}
G^{0000}&G^{0011}\\
G^{1100}&G^{1111}
\end{pmatrix}$ and $\begin{pmatrix}
G^{0101}&G^{0110}\\
G^{1001}&G^{1010}
\end{pmatrix}$
have rank 2.
Let
\begin{equation*}
\begin{pmatrix}
g^{0000}&g^{0011}\\
g^{1100}&g^{1111}
\end{pmatrix}
=
\begin{pmatrix}
G^{0000}&G^{0011}\\
G^{1100}&G^{1111}
\end{pmatrix}^{-1},
\begin{pmatrix}
g^{0101}&g^{0110}\\
g^{1001}&g^{1010}
\end{pmatrix}
=
\begin{pmatrix}
G^{0101}&G^{0110}\\
G^{1001}&G^{1010}
\end{pmatrix}^{-1},
\end{equation*}
then
\begin{equation*}
G^{-1}=\begin{pmatrix}
g^{0000}&0&0&g^{0011}\\
0&g^{0101}&g^{0110}&0\\
0&g^{1001}&g^{1010}&0\\
g^{1100}&0&0&g^{1111}
\end{pmatrix}
\end{equation*}
and
\begin{equation*}
\begin{split}
&g^{0000}g^{1111}-g^{1100}g^{0011}=(G^{0000}G^{1111}-G^{1100}G^{0011})^{-1},\\
&g^{1010}g^{0101}-g^{1001}g^{0110}=(G^{1010}G^{0101}-G^{1001}G^{0110})^{-1}.
\end{split}
\end{equation*}
Thus
\begin{equation*}
g^{0000}g^{1111}-g^{1100}g^{0011}=\pm(g^{1010}g^{0101}-g^{1001}g^{0110}).
\end{equation*}
\end{proof}

Similarly, we have the following Lemma.

\begin{lemma}\label{odd arity 4}
Assume that
\begin{equation*}
A=\begin{pmatrix}
0&G^{0001}&G^{0010}&0\\
G^{0100}&0&0&G^{0111}\\
G^{1000}&0&0&G^{1011}\\
0&G^{1101}&G^{1110}&0
\end{pmatrix}
\end{equation*}
has rank 4 and
\begin{equation}\label{pm-mgi2}
G^{1000}G^{0111}-G^{0100}G^{1011}=\pm(G^{0010}G^{1101}-G^{0001}G^{1110}),
\end{equation}
then $G^{-1}$ is of the following form
\begin{equation*}
G^{-1}=\begin{pmatrix}
0&g^{0001}&g^{0010}&0\\
g^{0100}&0&0&g^{0111}\\
g^{1000}&0&0&g^{1011}\\
0&g^{1101}&g^{1110}&0
\end{pmatrix}
\end{equation*}
and
\begin{equation*}
g^{1000}g^{0111}-g^{0100}g^{1011}=\pm(g^{0010}g^{1101}-g^{0001}g^{1110}),
\end{equation*}
with the same sign $\pm$ as in (\ref{pm-mgi2}).
\end{lemma}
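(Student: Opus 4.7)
The plan is to follow the same strategy as in Lemma \ref{even arity 4} (which the author cites by saying ``Similarly, we have the following Lemma''). The idea is to decompose the $4\times 4$ matrix $A$ into two independent $2\times 2$ blocks by rearranging rows and columns, invert each block, and then read off the entries of $G^{-1}$ in the original labelling.

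First I would observe that after reordering rows as $(1,2,0,3)$ and columns as $(0,3,1,2)$, the matrix $A$ becomes block-diagonal with the two $2\times 2$ blocks
\begin{equation*}
U=\begin{pmatrix} G^{0100} & G^{0111}\\ G^{1000} & G^{1011}\end{pmatrix},\qquad
V=\begin{pmatrix} G^{0001} & G^{0010}\\ G^{1101} & G^{1110}\end{pmatrix}.
\end{equation*}
Since $A$ has rank $4$, both $U$ and $V$ have rank $2$ and are invertible. Setting
\begin{equation*}
\begin{pmatrix} g^{0100} & g^{0111}\\ g^{1000} & g^{1011}\end{pmatrix}=U^{-1},\qquad
\begin{pmatrix} g^{0001} & g^{0010}\\ g^{1101} & g^{1110}\end{pmatrix}=V^{-1},
\end{equation*}
and putting these entries back in the positions dictated by the inverse of the row/column permutation, one checks by direct inspection that $G^{-1}$ has the claimed zero pattern, i.e., all entries indexed by an even-weight pattern vanish. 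This is just the block-diagonal structure reflected back through the permutation.

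For the sign identity, I would compute the two sides directly using the $2\times 2$ cofactor formula. A short calculation gives
\begin{equation*}
g^{1000}g^{0111}-g^{0100}g^{1011}=\frac{G^{1000}G^{0111}-G^{0100}G^{1011}}{\det(U)^{2}}=\frac{-1}{\det(U)},
\end{equation*}
and analogously
\begin{equation*}
g^{0010}g^{1101}-g^{0001}g^{1110}=\frac{-1}{\det(V)}.
\end{equation*}
Since hypothesis (\ref{pm-mgi2}) is equivalent to $\det(U)=\pm\det(V)$ with the given sign, the two reciprocals satisfy the analogous relation with the same sign, completing the proof.

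There is no real obstacle here; the argument is essentially bookkeeping. The only care needed is in tracking signs, because the raw products $G^{1000}G^{0111}-G^{0100}G^{1011}$ and $\det(U)$ differ by a factor of $-1$ (and likewise for $V$), but the same factor appears on both sides, so the sign $\pm$ in the hypothesis is preserved exactly in the conclusion.
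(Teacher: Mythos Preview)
Your overall strategy matches the paper's (which proves only Lemma~\ref{even arity 4} in detail and says the odd case is similar): block-diagonalize, invert the $2\times 2$ blocks, and read off the determinant relation. However, there is a concrete bookkeeping error in your assignment of the $g$-entries.

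Because the row permutation $(1,2,0,3)$ and the column permutation $(0,3,1,2)$ you use are \emph{different} (unlike the even case, where a single permutation conjugates $A$ to block-diagonal form), unwinding them for the inverse swaps the two blocks: from $PAQ=D$ one gets $A^{-1}=QD^{-1}P$, not $P^{-1}D^{-1}Q^{-1}$. Concretely, the $(0,0)$ entry of $A\cdot G^{-1}$ is $G^{0001}g^{0100}+G^{0010}g^{1000}$, and for this to equal $1$ one needs
\[
\begin{pmatrix} g^{0100} & g^{0111}\\ g^{1000} & g^{1011}\end{pmatrix}=V^{-1},\qquad
\begin{pmatrix} g^{0001} & g^{0010}\\ g^{1101} & g^{1110}\end{pmatrix}=U^{-1},
\]
the opposite of what you wrote. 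Your displayed intermediate identity $g^{1000}g^{0111}-g^{0100}g^{1011}=(G^{1000}G^{0111}-G^{0100}G^{1011})/\det(U)^{2}$ is therefore false in general; the correct values are $g^{1000}g^{0111}-g^{0100}g^{1011}=-1/\det V$ and $g^{0010}g^{1101}-g^{0001}g^{1110}=-1/\det U$. Fortunately the hypothesis $\det U=\pm\det V$ is symmetric under $U\leftrightarrow V$, so your final conclusion about the preserved sign survives the swap. But the error is real and should be corrected.
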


\begin{remark}
In fact, Lemma \ref{even arity 4} and Lemma \ref{odd arity 4} are the group properties of standard signatures of arity 4, where $\pm$ is due to
the exact ordering of the 4 bits when we defined the matrix relative to
MGI.
\end{remark}



\begin{theorem}\label{gplemma4}
Let $\underline{G}=(\underline{G}^{\alpha_{1}\alpha_{2}\cdots \alpha_{n}})$ be the standard signature of a generator matchgate $\Gamma$ of arity $2n$. If $\underline{G}$ has full rank and the $t$-th matrix form $\underline{G}(t)$ has rank 4, then  there exists a standard signature $\underline{R}$ realized by a recognizer matchgate of arity $2n$  such that $\underline{G}(t)\underline{R}(t)=I_{4}$.
\end{theorem}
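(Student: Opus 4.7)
The plan is to mirror the rank-$2$ construction of Lemma~\ref{gplemma2}, now promoted to rank $4$ by using the $4\times 4$ full-rank submatrix guaranteed by Theorem~\ref{sub-basis} together with the group property recorded in Lemmas~\ref{even arity 4} and~\ref{odd arity 4}. Concretely, I will define $\underline{R}$ by placing the entries of a certain inverted $4\times 4$ submatrix of $\underline{G}(t)$ at $8$ carefully chosen indices, with all other entries zero, and then verify that this explicit vector satisfies the Parity Condition and MGI; Theorem~\ref{MGI} then promotes it to a genuine recognizer standard signature.

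First I would apply Theorem~\ref{sub-basis} with the trivial basis $M=I_4$ (so that $G=\underline{G}$, domain size $k=4$, block size $\ell=2$). This produces patterns $\sigma,\tau\in\{0,1\}^{2}$ with $\sigma\oplus\tau=e_{p_1}\oplus e_{p_2}$, and $\zeta,\eta\in\{0,1\}^{2(n-1)}$ with $\zeta\oplus\eta=e_{q_1}\oplus e_{q_2}$, together with a rank-$4$ submatrix $A$ of $\underline{G}(t)$ having one of the block patterns displayed in~(\ref{5.4.1}) or~(\ref{5.4.2}) (up to the four symmetric cases treated at the end of that proof). The key observation is that equation~(\ref{det}), derived there from MGI, is exactly the arity-$4$ identity~(\ref{evenMGI4}) or~(\ref{oddMGI4}) applied to $A$, with the sign $\pm$ determined by the relative cyclic order of $p_1,p_2,q_1,q_2$. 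Thus Lemma~\ref{even arity 4} or~\ref{odd arity 4} (selected by the common parity of the reduced $4$-bit indices of the non-zero entries) applies and gives that $A^{-1}$ has the \emph{same} block pattern as $A$ and still satisfies the same arity-$4$ identity with the same sign. I then define $\underline{R}$ of arity $2n$ by placing the entries of $A^{-1}$ at the eight indices $\beta\curvearrowleft_t\alpha$ with $\alpha\in\{\sigma,\sigma\oplus e_{p_1},\sigma\oplus e_{p_2},\tau\}$ and $\beta\in\{\zeta,\zeta\oplus e_{q_1},\zeta\oplus e_{q_2},\eta\}$ matching the support of $A^{-1}$, and setting all other entries to zero. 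By construction $\underline{G}(t)\underline{R}(t)=A\cdot A^{-1}=I_4$.

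The main obstacle is verifying that $\underline{R}$ is a standard matchgate signature. The Parity Condition is immediate: $\sigma$ and $\tau$ share a parity and so do $\zeta$ and $\eta$, and each further flip by a single $e_{p_i}$ is compensated by a flip by some $e_{q_j}$, so all $8$ non-zero indices $\beta\curvearrowleft_t\alpha$ share a common parity. For MGI the crucial structural observation is that all $8$ non-zero indices of $\underline{R}$ agree pointwise outside the four ``active'' positions $p_1,p_2$ (inside the $t$-th block) and $q_1,q_2$ (in the remaining blocks); hence the XOR of any two non-zero indices is supported on those four positions. Given an arbitrary MGI for $\underline{R}$ with pattern $p$: if $\mathrm{wt}(p)$ is odd or equal to $2$, the identity is automatic by Lemma~\ref{lemma2.2}; if $\mathrm{wt}(p)\geq 4$ but $p$ has any support outside the four active positions, no product term is non-zero and the identity is trivially satisfied; and the one remaining case, where $p$ is supported exactly on those four active positions, reduces to the arity-$4$ identity for $A^{-1}$ already secured above, with the ambient sign of the MGI for $\underline{R}$ matching the sign coming from the cyclic order of $p_1,p_2,q_1,q_2$ used in equation~(\ref{det}). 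Theorem~\ref{MGI} then delivers $\underline{R}$ as the standard signature of a recognizer matchgate of arity $2n$, completing the proof.
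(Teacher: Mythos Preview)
Your proposal is correct and follows essentially the same approach as the paper: both invoke Theorem~\ref{sub-basis} to locate the rank-$4$ submatrix $A$ with its special block pattern, define $\underline{R}$ by placing the entries of $A^{-1}$ at the matching eight indices, use the group property (Lemma~\ref{even arity 4} or~\ref{odd arity 4}) to transfer the $\pm$-MGI relation to $A^{-1}$, and then invoke Theorem~\ref{MGI}. Your write-up is in fact more explicit than the paper's on the MGI verification step---the paper simply asserts ``this is the only non-trivial matchgate identity for $\underline{R}$,'' whereas you spell out why any position vector not supported on $\{p_1,p_2,q_1,q_2\}$ yields a vacuous identity---but the underlying argument is the same.
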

\begin{proof}
We follow the notations of Lemma \ref{lemma5.2} and Lemma \ref{lemma5.3}.
From Theorem \ref{sub-basis}, there is a sub-matrix of rank 4
\begin{equation*}A=\begin{pmatrix}
\underline{G}^{\alpha\curvearrowleft_{t}00}&\underline{G}^{\beta\curvearrowleft_{t}00}
&\underline{G}^{\gamma\curvearrowleft_{t}00}
&\underline{G}^{\delta\curvearrowleft_{t}00}\\
\underline{G}^{\alpha\curvearrowleft_{t}01}&\underline{G}^{\beta\curvearrowleft_{t}01}
&\underline{G}^{\gamma\curvearrowleft_{t}01}
&\underline{G}^{\delta\curvearrowleft_{t}01}\\
\underline{G}^{\alpha\curvearrowleft_{t}10}&\underline{G}^{\beta\curvearrowleft_{t}10}
&\underline{G}^{\gamma\curvearrowleft_{t}10}
&\underline{G}^{\delta\curvearrowleft_{t}10}\\
\underline{G}^{\alpha\curvearrowleft_{t}11}&\underline{G}^{\beta\curvearrowleft_{t}11}
&\underline{G}^{\gamma\curvearrowleft_{t}11}
&\underline{G}^{\delta\curvearrowleft_{t}11}\end{pmatrix}
\end{equation*}
in $\underline{G}(t)$, where the $q_{1}, q_{2}$-th bits of $\alpha, \beta, \gamma, \delta$ are 00, 01, 10, 11 respectively, and all other bits are the same.

By the Parity Condition, $A$ is of the form
\begin{equation}\label{matrix form for even}
A=\begin{pmatrix}
\underline{G}^{\alpha\curvearrowleft_{t}00}&0
&0&\underline{G}^{\delta\curvearrowleft_{t}00}\\
0&\underline{G}^{\beta\curvearrowleft_{t}01}
&\underline{G}^{\gamma\curvearrowleft_{t}01}&0\\
0&\underline{G}^{\beta\curvearrowleft_{t}10}
&\underline{G}^{\gamma\curvearrowleft_{t}10}&0\\
\underline{G}^{\alpha\curvearrowleft_{t}11}&0
&0&\underline{G}^{\delta\curvearrowleft_{t}11}
\end{pmatrix}
\end{equation}

or
\begin{equation}\label{matrix form for odd}
A=\begin{pmatrix}
0&\underline{G}^{\beta\curvearrowleft_{t}00}
&\underline{G}^{\gamma\curvearrowleft_{t}00}&0\\
\underline{G}^{\alpha\curvearrowleft_{t}01}&0
&0&\underline{G}^{\delta\curvearrowleft_{t}01}\\
\underline{G}^{\alpha\curvearrowleft_{t}10}&0
&0&\underline{G}^{\delta\curvearrowleft_{t}10}\\
0&\underline{G}^{\beta\curvearrowleft_{t}11}
&\underline{G}^{\gamma\curvearrowleft_{t}11}&0
\end{pmatrix}.
\end{equation}
We prove Theorem \ref{gplemma4} for the case in (\ref{matrix form for even}). The other case is similar.

Let the position vector be $(q_{1}q_{2})\curvearrowleft_{t}(p_{1}p_{2})$ and the pattern be $\alpha\curvearrowleft_{t}10$, then we have from MGI
\begin{equation*}
\underline{G}^{\alpha\curvearrowleft_{t}00}\underline{G}^{\delta\curvearrowleft_{t}11}-\underline{G}^{\alpha\curvearrowleft_{t}11}
\underline{G}^{\delta\curvearrowleft_{t}00}=\pm(\underline{G}^{\gamma\curvearrowleft_{t}10}\underline{G}^{\beta\curvearrowleft_{t}01}-
\underline{G}^{\beta\curvearrowleft_{t}10}\underline{G}^{\gamma\curvearrowleft_{t}01}).
\end{equation*}
It is $``+"$ if $q_{1}<p_{1}<p_{2}<q_{2}$. Otherwise it is $``-"$.

Let $\underline{R}$ be a vector of dimension $2^{2n}$, where
\begin{equation*}\begin{pmatrix}
\underline{R}_{\alpha\curvearrowleft_{t}00}&0
&0&\underline{R}_{\alpha\curvearrowleft_{t}11}\\
0&\underline{R}_{\beta\curvearrowleft_{t}01}
&\underline{R}_{\beta\curvearrowleft_{t}10}&0\\
0&\underline{R}_{\gamma\curvearrowleft_{t}01}
&\underline{R}_{\gamma\curvearrowleft_{t}10}&0\\
\underline{R}_{\delta\curvearrowleft_{t}00}&0
&0&\underline{R}_{\delta\curvearrowleft_{t}11}
\end{pmatrix}
=A^{-1},
\end{equation*}
 and all other entries of $\underline{R}$ are zero.
It is obvious that $\underline{G}(t)\underline{R}(t)=I_{4}$.
Furthermore, by Lemma \ref{even arity 4}, $\underline{R}$ satisfies
\begin{equation*}
\underline{R}_{\alpha\curvearrowleft_{t}00}\underline{R}_{\delta\curvearrowleft_{t}11}-\underline{R}_{\alpha\curvearrowleft_{t}11}
\underline{R}_{\delta\curvearrowleft_{t}00}=\pm(\underline{R}_{\gamma\curvearrowleft_{t}10}\underline{R}_{\beta\curvearrowleft_{t}01}-
\underline{R}_{\beta\curvearrowleft_{t}10}\underline{R}_{\gamma\curvearrowleft_{t}01}).
\end{equation*}
It is $``+"$ if $q_{1}<p_{1}<p_{2}<q_{2}$. Otherwise is $``-"$.
Note that this is  the only non-trivial matchgate identity for $\underline{R}$. Thus $\underline{R}$ is a standard signature realized by a recognizer matchgate by Lemma \ref{MGI}.

\end{proof}


\subsection{A Collapse Theorem on Domain Size 4}
In this subsection, assume that $G$ is a generator signature of full rank on domain size 4,
the basis $M$ is a $2^{\ell}\times 4$ matrix of rank $4$, and $\underline{G}=M^{\otimes n}G$ is a standard signature of arity $n\ell$.
Since $G$ is a signature of full rank on domain size 4,
 there exists $t\in[n]$ such that rank$(G(t))=4$. Following the notations of Lemma \ref{lemma5.2} and Lemma \ref{lemma5.3},
denote $\{\sigma, \sigma\oplus e_{{p}_{1}}, \sigma\oplus e_{{p}_{2}},
\sigma\oplus e_{{p}_{1}}\oplus e_{{p}_{2}}\}$
as $\sigma+\{e_{{p}_{1}}, e_{{p}_{2}}\}$.
Note that $\sigma\oplus e_{{p}_{1}}\oplus e_{{p}_{2}}=\tau$.

\begin{lemma}
$M^{\sigma+\{e_{{p}_{1}}, e_{{p}_{2}}\}}$ is invertible.
\end{lemma}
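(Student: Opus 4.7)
The plan mirrors the argument used for Corollary \ref{M-has-two-submatrix} in the domain-size-$2$ case. The hypothesis of the lemma provides us with everything needed by Theorem~\ref{sub-basis}: namely, a full-rank generator signature $G$ on domain size $4$, a $2^\ell\times 4$ basis $M$ of rank $4$, and a standard signature $\underline{G}=M^{\otimes n}G$ whose $t$-th matrix form satisfies $\mathrm{rank}(\underline{G}(t))=4$.

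First, I would invoke Theorem~\ref{sub-basis} to locate a $4\times 4$ sub-matrix of $\underline{G}(t)$ of rank $4$. By the construction in that theorem (together with the identifications made right before the statement we are asked to prove), such a sub-matrix can be chosen to have its four rows indexed precisely by $\sigma + \{e_{p_1}, e_{p_2}\} = \{\sigma,\ \sigma\oplus e_{p_1},\ \sigma\oplus e_{p_2},\ \tau\}$, and its four columns indexed by $\{\zeta,\ \zeta\oplus e_{q_1},\ \zeta\oplus e_{q_2},\ \eta\}$.

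Next, using Lemma~\ref{from vec to mat}, we have the factorization
\begin{equation*}
\underline{G}(t) = M\, G(t)\, (M^{\tt T})^{\otimes(n-1)}.
\end{equation*}
Restricting to the four chosen rows and columns, the $4\times 4$ sub-matrix in question factors as
\begin{equation*}
M^{\sigma+\{e_{p_1},e_{p_2}\}}\, N,
\end{equation*}
where $N$ is the $4\times 4$ sub-matrix of $G(t)(M^{\tt T})^{\otimes(n-1)}$ obtained by selecting the corresponding columns.

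Finally, since the left-hand product has rank $4$ and $\mathrm{rank}(AB)\le \min\{\mathrm{rank}(A),\mathrm{rank}(B)\}$, both factors must have rank $4$. In particular, $M^{\sigma+\{e_{p_1},e_{p_2}\}}$ is a $4\times 4$ matrix of rank $4$, and is therefore invertible. There is no real obstacle here; the substantive work has already been absorbed into Theorem~\ref{sub-basis}, and this lemma is the natural domain-size-$4$ analogue of Corollary~\ref{M-has-two-submatrix}, to be used subsequently for constructing a transducer matchgate via $T = M(M^{\sigma+\{e_{p_1},e_{p_2}\}})^{-1}$ and carrying out the collapse from basis size $\ell$ to basis size $2$.
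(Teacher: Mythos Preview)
Your proof is correct and follows essentially the same approach as the paper: both observe that the rank-$4$ sub-matrix produced in Theorem~\ref{sub-basis} has its rows indexed by $\sigma+\{e_{p_1},e_{p_2}\}$, hence is a sub-matrix of $M^{\sigma+\{e_{p_1},e_{p_2}\}}G(t)(M^{\tt T})^{\otimes(n-1)}$, forcing $M^{\sigma+\{e_{p_1},e_{p_2}\}}$ to have rank $4$. Your version is slightly more explicit about the factorization and the rank inequality, but the argument is the same.
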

\begin{proof}
Note that the sub-matrix of rank 4
in the proof of Theorem \ref{sub-basis}
 is a sub-matrix of $M^{\sigma+\{e_{{p}_{1}}, e_{{p}_{2}}\}}$
$G(t)$$(M^{\tt{T}})^{\otimes (n-1)}$,
 so $M^{\sigma+\{e_{{p}_{1}}, e_{{p}_{2}}\}}$
has rank at least 4. Furthermore, note that $M^{\sigma+\{e_{{p}_{1}}, e_{{p}_{2}}\}}$ is a $4\times 4$ matrix, so $M^{\sigma+\{e_{{p}_{1}}, e_{{p}_{2}}\}}$ is invertible.
\end{proof}

Note that $(M^{\sigma+\{e_{{p}_{1}}, e_{{p}_{2}}\}})^{\otimes n}G$ is a column vector of dimension $2^{2n}$ and we denote it by
$\underline{G}^{*\leftarrow \sigma+\{e_{{p}_{1}}, e_{{p}_{2}}\}}$.

\begin{theorem}
\label{lemma4.3}
$\underline{G}^{*\leftarrow \sigma+\{e_{{p}_{1}}, e_{{p}_{2}}\}}=(M^{\sigma+\{e_{{p}_{1}}, e_{{p}_{2}}\}})^{\otimes n}G$ is the standard signature of a generator matchgate of arity $2n$
 and $\underline{G}^{*\leftarrow\sigma+\{e_{{p}_{1}}, e_{{p}_{2}}\}}(t)$ has rank 4.
\end{theorem}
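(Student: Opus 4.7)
The plan is to generalize the block-modification construction from the proof of Theorem \ref{coro5.2} by extracting two bits per block instead of one. Let $\Gamma$ be a matchgate realizing $\underline{G} = M^{\otimes n} G$, whose $n\ell$ external nodes are arranged into $n$ blocks of $\ell$ nodes. The four patterns in $\sigma+\{e_{p_1},e_{p_2}\}$ agree on every bit outside positions $p_1$ and $p_2$. So within each block I would perform the Theorem \ref{coro5.2}-style surgery only on positions $i \notin \{p_1,p_2\}$: if $\sigma_i=1$, attach a new vertex to the $i$-th external node by a weight-$1$ edge and take the new vertex as the external node; if $\sigma_i=0$, leave the block alone at position $i$. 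I would then promote only the (possibly replaced) external nodes at positions $p_1$ and $p_2$ of each block to external status in the new matchgate $\Gamma''$, demoting every other external node of the intermediate graph to internal. The resulting $\Gamma''$ has $2n$ external nodes and inherits planarity from $\Gamma$ since all modifications are purely local and respect the counterclockwise order.

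To verify that $\Gamma''$ realizes $\underline{G}^{*\leftarrow \sigma+\{e_{p_1},e_{p_2}\}}$, I would repeat the analysis from Theorem \ref{coro5.2}. At each interior position $i \notin \{p_1,p_2\}$ of a block, if $\sigma_i=0$ the original node is forced to be matched through the graph, corresponding to bit $0$ at position $i$ of the original index in $\underline{G}$; while if $\sigma_i=1$ the attached degree-$1$ vertex forces its weight-$1$ edge into every perfect matching, corresponding to bit $1$. In either case the contribution of position $i$ to the original pattern is exactly $\sigma_i$. At positions $p_1$ and $p_2$, a case analysis on whether $\sigma_{p_j}$ equals $0$ or $1$ shows that the two external bits $(b_1,b_2) \in \{0,1\}^2$ of the block in $\Gamma''$ correspond bijectively to the four patterns $\sigma \oplus b_1 e_{p_1} \oplus b_2 e_{p_2}$ of the original block, all taken from $\sigma+\{e_{p_1},e_{p_2}\}$. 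Hence the standard signature of $\Gamma''$ at any index in $\{0,1\}^{2n}$ is the corresponding entry of $(M^{\sigma+\{e_{p_1},e_{p_2}\}})^{\otimes n}\, G = \underline{G}^{*\leftarrow \sigma+\{e_{p_1},e_{p_2}\}}$.

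For the rank claim, I would apply Lemma \ref{from vec to mat} with $M^{\sigma+\{e_{p_1},e_{p_2}\}}$ in the role of the basis to obtain
\begin{equation*}
\underline{G}^{*\leftarrow \sigma+\{e_{p_1},e_{p_2}\}}(t) \;=\; M^{\sigma+\{e_{p_1},e_{p_2}\}}\, G(t)\, \bigl((M^{\sigma+\{e_{p_1},e_{p_2}\}})^{\tt{T}}\bigr)^{\otimes(n-1)}.
\end{equation*}
Since $M^{\sigma+\{e_{p_1},e_{p_2}\}}$ is invertible by the preceding lemma and $G(t)$ has rank $4$ by hypothesis, Lemma \ref{lemma 4.1} yields that the product has rank $4$. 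The only delicate point is the four-way bookkeeping at positions $p_1,p_2$ needed to match bits in $\Gamma''$ with bits in $\underline{G}$ when $\sigma_{p_j}=1$; however this is a direct extension of the $\ell=1$ argument in Theorem \ref{coro5.2} and introduces no genuinely new difficulty.
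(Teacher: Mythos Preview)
Your proposal is correct and follows essentially the same approach as the paper: build the new matchgate from $\Gamma$ by block-local surgery that fixes all non-$p_1,p_2$ positions to the values of $\sigma$ and keeps two external nodes per block, then use Lemma~\ref{from vec to mat} and Lemma~\ref{lemma 4.1} together with the invertibility of $M^{\sigma+\{e_{p_1},e_{p_2}\}}$ for the rank claim. The only cosmetic difference is that the paper applies the weight-$1$ edge attachment at \emph{all} positions $1\le i\le \ell$ (including $p_1,p_2$) before reselecting the $p_1,p_2$-th nodes as external, whereas you skip the surgery at $p_1,p_2$; both variants yield a valid matchgate realizing the desired sub-vector (up to a fixed row relabeling determined by $\sigma_{p_1},\sigma_{p_2}$), so no genuinely new idea is needed.
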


\begin{proof}
Let $\Gamma$ be a matchgate realizing the standard signature $\underline{G}=M^{\otimes n}G$. Note that $\Gamma$ has $n\ell$ external nodes.
For every block of $\ell$ nodes, we add an edge of weight 1 to the $i$-th external node if the $i$-th bit of $\alpha$ is 1 and do nothing to it if the $i$-th bit of $\alpha$ is 0 for $1\leq i\leq \ell$. Then we get a new matchgate $\Gamma'$. Furthermore, for every block of $\Gamma'$, view the ${p}_{1}, {p}_{2}$-th external nodes as external nodes and all others as internal nodes, we get a matchgate realizing $\underline{G}^{*\leftarrow\sigma+\{e_{{p}_{1}}, e_{{p}_{2}}\}}=(M^{\sigma+\{e_{{p}_{1}}, e_{{p}_{2}}\}})^{\otimes n}G$.
Since $M^{\sigma+\{e_{{p}_{1}}, e_{{p}_{2}}\}}$ has rank 4, $\underline{G}^{*\leftarrow\sigma+\{e_{{p}_{1}}, e_{{p}_{2}}\}}(t)=M^{\sigma+\{e_{{p}_{1}}, e_{{p}_{2}}\}}G(t)((M^{\sigma+\{e_{{p}_{1}}, e_{{p}_{2}}\}})^{\tt{T}})^{\otimes (n-1)}$ has rank 4.
\end{proof}

Note that $(M^{\sigma+\{e_{{p}_{1}}, e_{{p}_{2}}\}})^{\otimes(t-1)}\otimes M\otimes (M^{\sigma+\{e_{{p}_{1}}, e_{{p}_{2}}\}})^{\otimes(n-t)}
\cdot G$ is a column vector of dimension $2^{2n+\ell-2}$ and denote it by $\underline{G}^{t^{c}\leftarrow\sigma+\{e_{{p}_{1}}, e_{{p}_{2}}\}}$.

\begin{lemma}\label{lemma5.7}
$\underline{G}^{t^{c}\leftarrow\sigma+\{e_{{p}_{1}}, e_{{p}_{2}}\}}$=$(M^{\sigma+\{e_{{p}_{1}}, e_{{p}_{2}}\}})^{\otimes(t-1)}\otimes M\otimes (M^{\sigma+\{e_{{p}_{1}}, e_{{p}_{2}}\}})^{\otimes(n-t)}G$ is the standard signature
of a generator matchgate of arity $2n+\ell-2$.
\end{lemma}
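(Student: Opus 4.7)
The plan is to mimic the construction from Lemma~4.6 (the \texttt{tclemma}) and Theorem~5.4, where a matchgate realizing a restricted signature was built by starting from a matchgate realizing $\underline{G}$ and locally modifying the external nodes of each block. The tensor identity
\[
\underline{G}^{t^{c}\leftarrow\sigma+\{e_{p_{1}}, e_{p_{2}}\}}=(M^{\sigma+\{e_{p_{1}}, e_{p_{2}}\}})^{\otimes(t-1)}\otimes M\otimes (M^{\sigma+\{e_{p_{1}}, e_{p_{2}}\}})^{\otimes(n-t)}G
\]
says: in every block other than the $t$-th we project $M$ onto the four rows indexed by $\sigma$, $\sigma\oplus e_{p_{1}}$, $\sigma\oplus e_{p_{2}}$, $\tau=\sigma\oplus e_{p_{1}}\oplus e_{p_{2}}$, i.e., we freeze all bits outside positions $p_{1}$ and $p_{2}$ to the bits of $\sigma$; in the $t$-th block we keep all of $M$, i.e., all $\ell$ external nodes. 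So the target matchgate must have $\ell$ external nodes in the $t$-th block and exactly $2$ external nodes in each of the other $n-1$ blocks, for a total arity of $\ell+2(n-1)=2n+\ell-2$, matching the claim.

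The construction proceeds in two steps. First, let $\Gamma$ be the matchgate realizing the standard signature $\underline{G}=M^{\otimes n}G$; it has $n\ell$ external nodes arranged in $n$ blocks of $\ell$ nodes each. For every block other than the $t$-th, and for every index $i\in\{1,\ldots,\ell\}\setminus\{p_{1},p_{2}\}$, modify the $i$-th external node of that block by the same trick used in Theorem~5.4: if the $i$-th bit of $\sigma$ is $1$, append a weight-$1$ pendant edge and declare the new endpoint to be the external node; if the $i$-th bit of $\sigma$ is $0$, do nothing. Call the resulting matchgate $\Gamma'$. Second, define $\Gamma''$ from $\Gamma'$ by, in each block other than the $t$-th, retaining only the $p_{1}$-th and $p_{2}$-th external nodes as external and declaring every other node of that block to be internal; the $t$-th block is left untouched. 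All these modifications are local to individual nodes on the outer face and so respect the planar embedding, and the cyclic order of external nodes around the outer face is preserved. Hence $\Gamma''$ is a legitimate planar matchgate of arity $2n+\ell-2$.

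It then remains to verify that the standard signature of $\Gamma''$ is exactly $\underline{G}^{t^{c}\leftarrow\sigma+\{e_{p_{1}}, e_{p_{2}}\}}$. This is a direct computation in the tensor formalism: the pendant-edge construction on a node $v$ has the effect (via \texttt{PerfMatch}) of replacing the block of $\ell$ ``row indices'' of $M$ attached to that node by the $4$-element selection of rows corresponding to the four bit-patterns in $\sigma+\{e_{p_{1}},e_{p_{2}}\}$, exactly as in the proofs of Theorem~4.2 and Lemma~4.6; the $t$-th block contributes a full copy of $M$ because it was not modified. Thus the tensor formula above is realized edge-by-edge by $\Gamma''$.

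There is no substantive new obstacle here; the work is purely a carry-over of the matchgate-surgery argument from the size-$2$ case (Lemma~4.6) to the size-$4$ case. The only thing to be careful about is bookkeeping: the two external nodes to be retained in each off-$t$ block are the $p_{1}$-th and $p_{2}$-th ones (where $p_{1}<p_{2}$ are the two indices from Lemma~5.2), and the weight-$1$ pendant trick is applied only to the remaining $\ell-2$ indices, using the bits of $\sigma$ at those positions. Once that is set up, both the planarity check and the identification of the resulting signature with $\underline{G}^{t^{c}\leftarrow\sigma+\{e_{p_{1}},e_{p_{2}}\}}$ are immediate, and the lemma follows.
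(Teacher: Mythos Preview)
Your proposal is correct and follows essentially the same approach as the paper's proof: start from a matchgate $\Gamma$ realizing $\underline{G}=M^{\otimes n}G$, leave the $t$-th block untouched, modify the other blocks by the pendant-edge trick governed by the bits of $\sigma$, and then retain only the $p_1$-th and $p_2$-th nodes in each off-$t$ block as external. The only cosmetic difference is that the paper applies the pendant rule to \emph{all} positions $1\le i\le\ell$ (including $p_1,p_2$) before restricting, whereas you restrict the pendant step to $i\notin\{p_1,p_2\}$; this affects at most the bit-labelling of the two surviving external nodes in each block (a harmless reindexing), not the fact that the resulting vector is a standard signature.
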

\begin{proof}
Let $\Gamma$ be a matchgate realizing the standard signature $\underline{G}=M^{\otimes n}G$. Note that $\Gamma$ has $n\ell$ external nodes. We do nothing to the $t$-th block.
For other blocks, we add an edge of weight 1 to the $i$-th external node if the $i$-th bit of $\alpha$ is 1 and do nothing to it if the $i$-th bit of $\alpha$ is 0 for $1\leq i\leq \ell$. Then we get a new matchgate $\Gamma'$. The external nodes of $\Gamma'$ consists of the external nodes in
the $t$-th block, and the ${p}_{1}, {p}_{2}$-th external nodes in the other blocks, then we get a matchgate realizing $\underline{G}^{t^{c}\leftarrow\sigma+\{e_{{p}_{1}}, e_{{p}_{2}}\}}$.
\end{proof}

Let
$T=M(M^{\sigma+\{e_{{p}_{1}}, e_{{p}_{2}}\}})^{-1}$ (Note that $T^{\sigma+\{e_{{p}_{1}}, e_{{p}_{2}}\}}=I_{4}$), then
\begin{equation*}
\underline{G}=M^{\otimes n}G=T^{\otimes n}(M^{\sigma+\{e_{{p}_{1}}, e_{{p}_{2}}\}})^{\otimes n}G=T^{\otimes n}\underline{G}^{*\leftarrow\sigma+\{e_{{p}_{1}}, e_{{p}_{2}}\}}
\end{equation*}
and
\begin{equation*}
\underline{G}^{t^{c}\leftarrow\sigma+\{e_{{p}_{1}}, e_{{p}_{2}}\}}=(T^{\sigma+\{e_{{p}_{1}}, e_{{p}_{2}}\}})^{\otimes(t-1)}\otimes T\otimes (T^{\sigma+\{e_{{p}_{1}}, e_{{p}_{2}}\}})^{\otimes(n-t)}\underline{G}^{*\leftarrow\sigma+\{e_{{p}_{1}}, e_{{p}_{2}}\}}.
\end{equation*}

The entries of $\underline{G}^{t^{c}\leftarrow\sigma+\{e_{{p}_{1}}, e_{{p}_{2}}\}}$ can be indexed by
$i_{1,1}i_{1,2}\cdots i_{t-1,1}i_{t-1,2}i'_{1}\cdots i'_{\ell}i_{t+1,1}i_{t+1,2}\cdots$
$i_{n,1}i_{n,2}\in\{0,1\}^{2n+\ell-2}$.
Denote the matrix form of $\underline{G}^{t^{c}\leftarrow\sigma+\{e_{{p}_{1}}, e_{{p}_{2}}\}}$ whose rows are indexed by
$i'_{1}\cdots i'_{\ell}$ and columns indexed by $i_{1,1}i_{1,2}\cdots i_{t-1,1}i_{t-1,2}i_{t+1,1}i_{t+1,2}\cdots i_{n,1}i_{n,2}$
as $\underline{G}^{t^{c}\leftarrow\sigma+\{e_{{p}_{1}}, e_{{p}_{2}}\}}(t)$, then
\begin{equation}\underline{G}^{t^{c}\leftarrow\sigma+\{e_{{p}_{1}}, e_{{p}_{2}}\}}(t)=
 T\underline{G}^{*\leftarrow\sigma+\{e_{{p}_{1}}, e_{{p}_{2}}\}}(t)((T^{\sigma+\{e_{{p}_{1}}, e_{{p}_{2}}\}})^{\tt{T}})^{\otimes (n-1)}
 =T\underline{G}^{*\leftarrow\sigma+\{e_{{p}_{1}}, e_{{p}_{2}}\}}(t).
 \end{equation}

\begin{lemma}
\label{lemma4.5}
$T$ is the standard signature of a transducer matchgate
of $\ell$-output and 2-input.
\end{lemma}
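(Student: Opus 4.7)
The plan is to follow the same template as the proof of Lemma \ref{lemma4.7} in the domain size $2$ setting, substituting the $k=4$ analogues of the ingredients. The output will be the standard signature obtained by a planar composition of two matchgates whose matrix product equals $T$.

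First, apply Theorem \ref{lemma4.3} to obtain a generator matchgate $\Gamma_1$ realizing the standard signature $\underline{G}^{*\leftarrow\sigma+\{e_{p_1},e_{p_2}\}}$ of arity $2n$, and with $\underline{G}^{*\leftarrow\sigma+\{e_{p_1},e_{p_2}\}}(t)$ of rank $4$. Next, invoke the Group Property, Theorem \ref{gplemma4}, to obtain a recognizer matchgate $\Gamma_2$ whose standard signature $\underline{R}$ (of arity $2n$) satisfies
\[
\underline{G}^{*\leftarrow\sigma+\{e_{p_1},e_{p_2}\}}(t)\,\underline{R}(t)=I_{4}.
\]
Separately, by Lemma \ref{lemma5.7}, there is a generator matchgate $\Gamma_1'$ of arity $2n+\ell-2$ realizing $\underline{G}^{t^{c}\leftarrow\sigma+\{e_{p_1},e_{p_2}\}}$, with output nodes grouped into the $t$-th block of $\ell$ nodes $Y_1,\ldots,Y_\ell$ and pairs of nodes $X_{i,1}X_{i,2}$ (for $i\ne t$) in the other blocks.

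Now form the transducer $\Gamma$ by composing $\Gamma_1'$ with $\Gamma_2$: for each $i\ne t$ and each $j\in\{1,2\}$, connect the output node $X_{i,j}$ of $\Gamma_1'$ to the corresponding input node $W_{i,j}$ of $\Gamma_2$ by a weight-$1$ edge, leaving $Y_1,\ldots,Y_\ell$ as the $\ell$ outputs of $\Gamma$ and the two nodes of $\Gamma_2$ corresponding to the $t$-th block as the $2$ inputs of $\Gamma$. The standard signature of $\Gamma$ is the matrix contraction $\underline{G}^{t^{c}\leftarrow\sigma+\{e_{p_1},e_{p_2}\}}(t)\cdot\underline{R}(t)$, which by the identity
\[
\underline{G}^{t^{c}\leftarrow\sigma+\{e_{p_1},e_{p_2}\}}(t)=T\,\underline{G}^{*\leftarrow\sigma+\{e_{p_1},e_{p_2}\}}(t)
\]
derived just before the statement equals $T\cdot I_{4}=T$. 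This gives the desired matchgate realization of $T$.

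The main obstacle is the planarity of the composition. In the rank-$2$ case of Lemma \ref{lemma4.7}, only a single wire per non-$t$ block had to be routed, and the cyclic order of external nodes around the outer face took care of itself. Here each non-$t$ block contributes two connecting edges $X_{i,1}W_{i,1}$ and $X_{i,2}W_{i,2}$, and one must check that the counterclockwise ordering of the remaining external nodes of $\Gamma_1'$ and $\Gamma_2$ is compatible, so that the composed diagram is a legitimate planar matchgrid (in the sense used in Valiant's Holant Theorem). This follows because the pair-structure $(p_1,p_2)$ within each block is consistent across blocks and the connections are made block by block in the same cyclic order, mirroring the planar composition used in Lemma \ref{lemma4.7}; once planarity is verified, the rest is the clean algebraic computation above, and the constructed $T$ is then automatically a valid standard signature by Theorem \ref{MGI}.
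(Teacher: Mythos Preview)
Your proposal is correct and follows essentially the same approach as the paper: invoke Theorem~\ref{lemma4.3} and Theorem~\ref{gplemma4} to obtain $\underline{R}$ with $\underline{G}^{*\leftarrow\sigma+\{e_{p_1},e_{p_2}\}}(t)\,\underline{R}(t)=I_4$, use Lemma~\ref{lemma5.7} for the matchgate realizing $\underline{G}^{t^{c}\leftarrow\sigma+\{e_{p_1},e_{p_2}\}}$, and compose the two planarly so that the resulting transducer has signature $\underline{G}^{t^{c}\leftarrow\sigma+\{e_{p_1},e_{p_2}\}}(t)\,\underline{R}(t)=T$. The paper's proof is the same construction (illustrated by Fig.~2); your only extras are the redundant mention of $\Gamma_1$ and the closing appeal to Theorem~\ref{MGI}, which is unnecessary since $T$ is exhibited directly as a matchgate signature.
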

\begin{proof}
By Theorem \ref{gplemma4} and Theorem \ref{lemma4.3}, there exists a standard signature $\underline{R}$ realized by a recognizer matchgate of arity $2n$
  such that $\underline{G}^{*\leftarrow\sigma+\{e_{{p}_{1}}, e_{{p}_{2}}\}}(t)\underline{R}(t)=I_{4}$ (Note that $\underline{R}(t)$ is a $2^{2n-2}\times 4$ matrix).
 Let $\Gamma_{1}$ be the matchgate realizing $\underline{G}^{t^{c}\leftarrow\sigma+\{e_{{p}_{1}}, e_{{p}_{2}}\}}$ with output nodes
 $X_{1,1}, X_{1,2}, \cdots, X_{t-1,1}, X_{t-1, 2}, Y_{1}, Y_{2}, \cdots, Y_{\ell}$, $Z_{t+1, 1}, Z_{t+1, 2}, \cdots, Z_{n, 1}, Z_{n, 2}$, and $\Gamma_{2}$ be
 the matchgate realizing $\underline{R}$ with input nodes $W_{1, 1}, W_{1, 2}, W_{2, 1}, W_{2, 2}$, $\cdots, W_{n, 1}, W_{n, 2}$.
Connect $X_{i, 1}$ with $W_{i, 1}$, $X_{i, 2}$ with $W_{i, 2}$, for $1\leq i\leq t-1$ and
$Z_{i, 1}$ with $W_{i, 1}$, $Z_{i, 2}$ with $W_{i, 2}$, for $t+1\leq i\leq n$
 by an edge with weight 1 respectively, then we get a transducer matchgate $\Gamma$ with output nodes $Y_{1}, Y_{2}, \cdots, Y_{\ell}$ and input nodes $W_{t, 1}$, $W_{t, 2}$.
And $T=\underline{G}^{t^{c}\leftarrow\sigma, \tau}(t)\underline{R}(t)$ is the standard signature of $\Gamma$.
\end{proof}
The proof of Lemma \ref{lemma4.5} is illustrated by Fig. 2.

\begin{theorem}\label{collapse2}
Any holographic algorithm on a basis of size $\ell$ and domain size 4 which employs at least one generator signature of full rank can be simulated
on a basis of size 2.
\end{theorem}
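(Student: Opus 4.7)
The plan is to mimic the proof of the collapse theorem on domain size~2 (the earlier Theorem~\ref{collapse2} in Section~4) essentially verbatim, now using the heavier machinery built up in Section~5: the identification of a $4\times 4$ full-rank sub-matrix of $M$ from Theorem~\ref{sub-basis}, the Group Property of arity-4 standard signatures (Lemmas~\ref{even arity 4} and \ref{odd arity 4}), and the transducer construction in Lemma~\ref{lemma4.5}. First I would write the holographic algorithm's data as recognizer signatures $R_1,\ldots,R_r$ and generator signatures $G_1,\ldots,G_g$ realized on a basis $M$ of size $\ell$ (a $2^\ell\times 4$ matrix of rank~4), so that $\underline{R}_i M^{\otimes m_i}=R_i$ and $\underline{G}_j=M^{\otimes n_j}G_j$, where each $\underline{R}_i,\underline{G}_j$ is a standard matchgate signature. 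Without loss of generality assume $G_1$ is of full rank, and apply Lemmas~\ref{lemma5.2} and \ref{lemma5.3} to $\underline{G}_1$ to produce $\sigma,\tau\in\{0,1\}^\ell$ with $\sigma\oplus\tau=e_{p_1}\oplus e_{p_2}$, so that the $4\times 4$ sub-matrix $M':=M^{\sigma+\{e_{p_1},e_{p_2}\}}$ is invertible.

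Next, set $T=M(M')^{-1}$, so that $T^{\sigma+\{e_{p_1},e_{p_2}\}}=I_4$ and $M=T M'$. By Lemma~\ref{lemma4.5}, $T$ is the standard signature of a transducer matchgate with $\ell$ output and $2$ input nodes; this is the crucial ingredient. With $T$ in hand, I would transform each signature of the algorithm. On the recognizer side, define $\underline{R}_i':=\underline{R}_i T^{\otimes m_i}$, which is a standard recognizer signature by Lemma~\ref{product transformer}, and compute
\[
\underline{R}_i'(M')^{\otimes m_i}=\underline{R}_i T^{\otimes m_i}(M')^{\otimes m_i}=\underline{R}_i M^{\otimes m_i}=R_i .
\]
On the generator side, apply the construction used in the proof of Theorem~\ref{lemma4.3} (adding weight-$1$ edges and reselecting the $p_1,p_2$-th external nodes in each block of $\ell$ external nodes of a matchgate realizing $\underline{G}_j$). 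This construction only requires that $\underline{G}_j$ itself is a standard signature, so it goes through for every $j$, not just $j=1$, and produces a standard generator signature $\underline{G}_j^{*\leftarrow\sigma+\{e_{p_1},e_{p_2}\}}=(M')^{\otimes n_j}G_j$.

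Combining these, every $R_i$ and every $G_j$ is simultaneously realized over the $4\times 4$ basis $M'$, which has size $2$, so the algorithm collapses as claimed. The main technical obstacle has already been dealt with upstream: extracting the right $4\times 4$ sub-matrix of $\underline{G}_1(t)$ via the two-step minimization in Lemmas~\ref{lemma5.2}--\ref{lemma5.3} and Theorem~\ref{sub-basis}, and then ``inverting'' this sub-matrix inside the matchgate world using the Group Property (Lemmas~\ref{even arity 4}, \ref{odd arity 4}, Theorem~\ref{gplemma4}) to produce the transducer signature $T$ in Lemma~\ref{lemma4.5}. Once those are granted, the present theorem is essentially bookkeeping, exactly parallel to the domain-size-$2$ collapse argument, with the pair $\sigma,\tau$ of Hamming distance $1$ replaced by the $4$-element set $\sigma+\{e_{p_1},e_{p_2}\}$ and the $2\times 2$ invertible sub-matrix replaced by the $4\times 4$ invertible $M'$.
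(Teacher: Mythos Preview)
Your proposal is correct and follows essentially the same approach as the paper's proof: use $G_1$ to define $\sigma,\tau$ (via Lemmas~\ref{lemma5.2} and~\ref{lemma5.3}), extract the invertible $4\times4$ block $M'=M^{\sigma+\{e_{p_1},e_{p_2}\}}$, set $T=M(M')^{-1}$ and invoke Lemma~\ref{lemma4.5} to make $T$ a transducer signature, then push recognizers through $T$ via Lemma~\ref{product transformer} and generators through the block-restriction construction of Theorem~\ref{lemma4.3}. Your explicit remark that the construction in Theorem~\ref{lemma4.3} applies to every $G_j$ (not just the full-rank $G_1$) is exactly the point the paper is using implicitly.
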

\begin{proof}
Let $\underline{R}_{i}M^{\otimes m_i}=R_{i}$ for $1\leq i\leq r$ and $\underline{G}_{j}=M^{\otimes n_j}G_{j}$ for $1\leq j\leq g$, where $R_{i}, G_{j}$
 are recognizer and generator signatures that a holographic algorithm employs and $\underline{R}_{i}, \underline{G}_{j}$ are standard signatures.
Without loss of generality, let $G_{1}$ be of full rank, and $\underline{G}_{1}=M^{\otimes n_1}G_{1}$. Starting from $G_{1}$ we define
$\sigma$ and $\tau$.
 Then the basis $M$ has a full rank sub-matrix $M^{\sigma+\{e_{{p}_{1}}, e_{{p}_{2}}\}}$ and $T=M(M^{\sigma+\{e_{{p}_{1}}, e_{{p}_{2}}\}})^{-1}$ is the standard signature of a transducer matchgate. Let $\underline{R}_{i}'=\underline{R}_{i}T^{\otimes m_i}$,
 then
 \begin{equation*}
 \underline{R}'_{i}(M^{\sigma+\{e_{{p}_{1}}, e_{{p}_{2}}\}})^{\otimes m_i}=R_{i},~~~~
 \underline{G}_{j}^{*\leftarrow \sigma+\{e_{{p}_{1}}, e_{{p}_{2}}\}}=(M^{\sigma+\{e_{{p}_{1}}, e_{{p}_{2}}\}})^{\otimes n_j}G_{j},
 \end{equation*}
  for $1\leq i\leq r$, $1\leq j\leq g$, where $\underline{R}_{i}'$
and $\underline{G}_{j}^{*\leftarrow\sigma+\{e_{{p}_{1}}, e_{{p}_{2}}\}} $  are standard signatures by Lemma \ref{product transformer} and Theorem \ref{lemma4.3}. This implies that $R_{i}, G_{j}$ are simultaneously realized on the basis $M^{\sigma+\{e_{{p}_{1}}, e_{{p}_{2}}\}}$.
\end{proof}

\section{Acknowledgments}
We would like to thank Pinyan Lu, Tyson Williams, Heng Guo, Leslie Valiant for their interest and especially
Tyson Williams for his computer code in our experimentation.

\renewcommand{\refname}{References}

\newpage
\section*{Figures}
\setlength{\unitlength}{5mm}
\begin{picture}(30,43)(-2,0)
\put(0,0){\line(0,1){41}}
\put(0,0){\line(1,0){14}}
\put(14,0){\line(0,1){41}}
\put(0,41){\line(1,0){14}}
\put(16,0){\line(0,1){41}}
\put(16,0){\line(1,0){14}}
\put(30,0){\line(0,1){41}}
\put(16,41){\line(1,0){14}}
\put(9,2){\line(0,1){6}}
\put(9,2){\line(1,0){4}}
\put(13,2){\line(0,1){6}}
\put(9,8){\line(1,0){4}}
\put(8,5){$X_{t-1}$}
\put(11.5,5){$\bullet$}
\put(11.5,5.2){\line(1,0){6.5}}
\put(11,10.5){\vdots}
\put(9,14){\line(0,1){6}}
\put(9,14){\line(1,0){4}}
\put(13,14){\line(0,1){6}}
\put(9,20){\line(1,0){4}}
\put(8,17){$X_{1}$}
\put(11.5,17){$\bullet$}
\put(11.5,17.2){\line(1,0){6.5}}
\put(9,21){\line(0,1){6}}
\put(9,21){\line(1,0){4}}
\put(13,21){\line(0,1){6}}
\put(9,27){\line(1,0){4}}
\put(8,24){$Z_{n}$}
\put(11.5,24){$\bullet$}
\put(11.5,24.2){\line(1,0){6.5}}
\put(11,29.5){\vdots}
\put(9,33){\line(0,1){6}}
\put(9,33){\line(1,0){4}}
\put(13,33){\line(0,1){6}}
\put(9,39){\line(1,0){4}}
\put(8,36){$Z_{t+1}$}
\put(11.5,36){$\bullet$}
\put(11.5,36.2){\line(1,0){6.5}}
\put(1,16.5){\line(0,1){6}}
\put(1,16.5){\line(1,0){4}}
\put(5,16.5){\line(0,1){6}}
\put(1,22.5){\line(1,0){4}}
\put(2.5,17.5){$Y_{1}$}
\put(2.5,19.5){$Y_{2}$}
\put(2.5,21.5){$Y_{\ell}$}
\put(1.5,18){$\bullet$}
\put(1.5,19.5){$\bullet$}
\put(1.5,21.2){$\bullet$}
\put(1.6,20.1){\vdots}
\put(5,34){$\Gamma_{1}$}
\put(17,2){\line(0,1){6}}
\put(17,2){\line(1,0){4}}
\put(21,2){\line(0,1){6}}
\put(17,8){\line(1,0){4}}
\put(18,5){$\bullet$}
\put(19,5){$W_{t-1}$}
\put(19,10.5){\vdots}
\put(17,14){\line(0,1){6}}
\put(17,14){\line(1,0){4}}
\put(21,14){\line(0,1){6}}
\put(17,20){\line(1,0){4}}
\put(18,17){$\bullet$}
\put(19,17){$W_{1}$}
\put(17,21){\line(0,1){6}}
\put(17,21){\line(1,0){4}}
\put(21,21){\line(0,1){6}}
\put(17,27){\line(1,0){4}}
\put(18,24){$\bullet$}
\put(19,24){$W_{n}$}
\put(19,29.5){\vdots}
\put(17,33){\line(0,1){6}}
\put(17,33){\line(1,0){4}}
\put(21,33){\line(0,1){6}}
\put(17,39){\line(1,0){4}}
\put(18,36){$\bullet$}
\put(19,36){$W_{t+1}$}
\put(25,16.5){\line(0,1){6}}
\put(25,16.5){\line(1,0){4}}
\put(29,16.5){\line(0,1){6}}
\put(25,22.5){\line(1,0){4}}
\put(28,20){$\bullet$}
\put(26.5,20){$W_{t}$}
\put(28,34){$\Gamma_{2}$}
\put(14.2, -2){\scriptsize Fig. 1}
\end{picture}
\newpage
\setlength{\unitlength}{5mm}
\begin{picture}(30,43)(-2,0)

\put(0,0){\line(0,1){41}}
\put(0,0){\line(1,0){14}}
\put(14,0){\line(0,1){41}}
\put(0,41){\line(1,0){14}}
\put(16,0){\line(0,1){41}}
\put(16,0){\line(1,0){14}}
\put(30,0){\line(0,1){41}}
\put(16,41){\line(1,0){14}}
\put(9,2){\line(0,1){6}}
\put(9,2){\line(1,0){4}}
\put(13,2){\line(0,1){6}}
\put(9,8){\line(1,0){4}}
\put(11.5,4){$\bullet$}
\put(11.5,6){$\bullet$}
\put(8,4){$X_{t-1,2}$}
\put(8,6){$X_{t-1,1}$}
\put(11.6,3){\vdots}
\put(11.6,5){\vdots}
\put(11.6,7){\vdots}
\put(11.5,4.2){\line(1,0){6.5}}
\put(11.5,6.2){\line(1,0){6.5}}
\put(11,10.5){\vdots}
\put(9,14){\line(0,1){6}}
\put(9,14){\line(1,0){4}}
\put(13,14){\line(0,1){6}}
\put(9,20){\line(1,0){4}}
\put(11.5,16){$\bullet$}
\put(11.5,18){$\bullet$}
\put(8,16){$X_{1,2}$}
\put(8,18){$X_{1,1}$}
\put(11.6,15){\vdots}
\put(11.6,17){\vdots}
\put(11.6,19){\vdots}
\put(11.5,16.2){\line(1,0){6.5}}
\put(11.5,18.2){\line(1,0){6.5}}
\put(9,21){\line(0,1){6}}
\put(9,21){\line(1,0){4}}
\put(13,21){\line(0,1){6}}
\put(9,27){\line(1,0){4}}
\put(11.5,23){$\bullet$}
\put(11.5,25){$\bullet$}
\put(8,23){$Z_{n,2}$}
\put(8,25){$Z_{n,1}$}
\put(11.6,22){\vdots}
\put(11.6,24){\vdots}
\put(11.6,26){\vdots}
\put(11.5,23.2){\line(1,0){6.5}}
\put(11.5,25.2){\line(1,0){6.5}}
\put(11,29.5){\vdots}
\put(9,33){\line(0,1){6}}
\put(9,33){\line(1,0){4}}
\put(13,33){\line(0,1){6}}
\put(9,39){\line(1,0){4}}
\put(11.5,35){$\bullet$}
\put(11.5,37){$\bullet$}
\put(8,35){$Z_{t+1,2}$}
\put(8,37){$Z_{t+1,1}$}
\put(11.6,34){\vdots}
\put(11.6,36){\vdots}
\put(11.6,38){\vdots}
\put(11.5,35.2){\line(1,0){6.5}}
\put(11.5,37.2){\line(1,0){6.5}}
\put(1,16.5){\line(0,1){6}}
\put(1,16.5){\line(1,0){4}}
\put(5,16.5){\line(0,1){6}}
\put(1,22.5){\line(1,0){4}}
\put(2.5,17.5){$Y_{1}$}
\put(2.5,19.5){$Y_{2}$}
\put(2.5,21.5){$Y_{\ell}$}
\put(1.5,18){$\bullet$}
\put(1.5,19.5){$\bullet$}
\put(1.5,21.2){$\bullet$}
\put(1.6,20.1){\vdots}
\put(5,34){$\Gamma_{1}$}
\put(17,2){\line(0,1){6}}
\put(17,2){\line(1,0){4}}
\put(21,2){\line(0,1){6}}
\put(17,8){\line(1,0){4}}
\put(18,4){$\bullet$}
\put(18,6){$\bullet$}
\put(19,4){$W_{t-1,2}$}
\put(19,6){$W_{t-1,1}$}
\put(19,10.5){\vdots}
\put(17,14){\line(0,1){6}}
\put(17,14){\line(1,0){4}}
\put(21,14){\line(0,1){6}}
\put(17,20){\line(1,0){4}}
\put(18,16){$\bullet$}
\put(18,18){$\bullet$}
\put(19,16){$W_{1,2}$}
\put(19,18){$W_{1,1}$}
\put(17,21){\line(0,1){6}}
\put(17,21){\line(1,0){4}}
\put(21,21){\line(0,1){6}}
\put(17,27){\line(1,0){4}}
\put(18,23){$\bullet$}
\put(18,25){$\bullet$}
\put(19,23){$W_{n,2}$}
\put(19,25){$W_{n,1}$}
\put(19,29.5){\vdots}
\put(17,33){\line(0,1){6}}
\put(17,33){\line(1,0){4}}
\put(21,33){\line(0,1){6}}
\put(17,39){\line(1,0){4}}
\put(18,35){$\bullet$}
\put(18,37){$\bullet$}
\put(19,35){$W_{t+1,2}$}
\put(19,37){$W_{t+1,1}$}
\put(25,16.5){\line(0,1){6}}
\put(25,16.5){\line(1,0){4}}
\put(29,16.5){\line(0,1){6}}
\put(25,22.5){\line(1,0){4}}
\put(28,19){$\bullet$}
\put(25.5,19){$W_{t,1}$}
\put(28,21){$\bullet$}
\put(25.5,21){$W_{t,2}$}
\put(28,34){$\Gamma_{2}$}
\put(14.2, -2){\scriptsize Fig. 2}
\end{picture}

\section*{Appendix}

We consider the following problem on domain size 4
as an illustration of problems solved by holographic algorithms
using matchgates.

\vspace{.1in}
\noindent
\textbf{Doppler Shift Problem}

\noindent
\textbf{Input:} An undirected 3-regular graph $G$.

\noindent
\textbf{Output:} The number of $\{{\rm Red, Yellow, Green, Blue}\}$-colorings
of the edges of $G$, such that at every vertex,
either there is a {\it Red}-shift, namely all incident edges
are colored with Red or Yellow, or Green, but not Blue,
or there is a {\it Blue}-shift, namely all incident edges
are colored with Blue, or Green, or Yellow, but not Red.

We consider the following basis $M \in \mathbb{C}^{4 \times 4}$, where
\[ M = \begin{pmatrix}
1 & 1 & 1 & 1\\
1 & 1 & -1 & -1\\
1 & -1 & 1 & -1\\
1 & -1 & -1 & 1
\end{pmatrix}.\]
If we index the rows and columns of $M$ by $x_1 x_2 \in \{0, 1\}^2$
and $y_1 y_2 \in \{0, 1\}^2$
respectively, then the entry indexed by $(x_1 x_2, y_1 y_2)$
is $(-1)^{x_1 y_2 + x_2 y_1}$.
If we identify $00, 01, 10, 11$ with the colors
Red, Yellow, Green, Blue, respectively,
then it can be directly verified that
an arity 3 function on domain $\{0, 1\}^2$
representing the local constraint of this problem
under the holographic transformation of $M^{\otimes 3}$
can be realized by a matchgate of arity 6.
As this $M$ is  (a scalar multiple of) an orthogonal matrix,
we see that this prbolem can be solved by matchgates
in polynomial time after the holographic transformation.

\end{document}